\documentclass[submission,copyright,creativecommons]{eptcs}
\providecommand{\event}{AiML 2026} 

\usepackage{aiml26}

\usepackage{iftex}

\usepackage{amssymb, stmaryrd, MnSymbol}
\usepackage{enumerate}
\newtheorem{convention}[theorem]{Convention}

\ifpdf
  \usepackage{underscore}         % Only needed if you use pdflatex.
  \usepackage[T1]{fontenc}        % Recommended with pdflatex
\else
  \usepackage{breakurl}           % Not needed if you use pdflatex only.
\fi

\title{Goldblatt-Thomason Theorem for Probability Logic}
\author{Somayeh Chopoghloo$^{1}$ \qquad Massoud Pourmahdian$^{1,2}$ \qquad Reihane Zoghifard$^{1}$
	\institute{$^{1}$School of Mathematics, Institute for Research in Fundamental Sciences (IPM), Tehran, Iran}\thanks{The first author gratefully acknowledges the support of the Iran National Science Foundation (INSF) under Grant No. 4013209 for supporting this project.}\\
	\institute{$^{2}$Department of Mathematics and Computer Science, \\Amirkabir University of Technology (Tehran Polytechnic), Tehran, Iran}
	\email{s.chopoghloo@ipm.ir \quad\quad\quad\qquad pourmahd@ipm.ir \quad\quad\qquad\quad r.zoghi@gmail.com}
}

\newcommand{\titlerunning}{Goldblatt-Thomason Theorem for Probability Logic}
\newcommand{\authorrunning}{S. Chopoghloo, M. Pourmahdian \& R. Zoghifard}

\hypersetup{
  bookmarksnumbered,
  pdftitle    = {\titlerunning},
  pdfauthor   = {\authorrunning},
  pdfsubject  = {EPTCS},               % Consider adding a more appropriate subject or description
  pdfkeywords = {Goldblatt-Thomason Theorem, Markov processes, (modal) probability logic, model theory} % Uncomment and enter keywords specific to your paper
}

\usepackage{xcolor,color}

\begin{document}
\maketitle

\begin{abstract}
Probability logic ($\mathsf{PL}$) extends propositional logic with countably many probability operators, one for each rational number between 0 and 1. The formulas of this logic are interpreted over the class of Markov processes, i.e., structures of the form $\langle \Omega, \Sigma, T \rangle$, where $\langle \Omega, \Sigma \rangle$ is a measurable space and $T$ is a Markov kernel.  

The main contribution of this paper is the establishment of the Goldblatt-Thomason theorem for probability logic. As an application, we show that the class of Harsanyi type spaces is definable in $\mathsf{PL}$. Moreover, we obtain some variants of the Goldblatt-Thomason theorem for specific subclasses of Markov processes.
\end{abstract}
%%%%%%%%%%%%%%%%%%%%%%%%%%%%%%%%%%%%%%%%
%%%%%%%%%%%%%%%%%%%%%%%%%%%%%%%%%%%%%%%%
\section{Introduction}

Probability logic aims to provide a formal framework for studying probabilistic systems, such as Markov processes. It supports reasoning about quantitative statements of the form `an event occurs with probability at least $r$' and has been widely applied in diverse areas, including theoretical computer science %it is used in the study of stochastic and labelled Markov processes 
\cite{deshar:bisim02,panan:lmp09}, artificial intelligence %it underlies logical approaches to reasoning about knowledge and uncertainty 
\cite{fagin:logic90}, and economics %it appears in the analysis of type spaces in game theory 
\cite{Aum1999,heifmon:prob01}.

There are different syntactic frameworks for studying probabilistic systems, such as first-order syntax (see, e.g., \cite{keisler}) which may provide a more expressive language, or modal syntax (see, e.g., \cite{fagin:logic90,heifmon:prob01}) which is less expressive but decidable. Throughout this paper, we investigate the modal framework. Probability logic is a multimodal extension of propositional logic obtained by adding a family of modal operators $\{L_r \mid r \in \mathbb{Q} \cap [0,1]\}$. The intended interpretation of a formula $L_r \varphi$ is that `the probability of $\varphi$ is at least $r$'.

Markov processes provide a mathematically rigorous model for probabilistic state-based systems, particularly in theoretical computer science. A (labeled) Markov process consists of a measurable state space equipped with transition probability kernels (indexed by actions), thereby modeling systems that combine probabilistic evolution with observable interaction. Such structures serve, for example, as denotational models for probabilistic programming languages and as semantic foundations for verification, equivalence, abstraction, and compositional reasoning in stochastic systems. Formally, a Markov process is a structure $\langle\Omega,\Sigma,T\rangle$, where $\langle\Omega,\Sigma\rangle$ is a measurable space and $T: \Omega \times \Sigma \to [0,1]$ is a Markov kernel.

In this paper, we investigate definability of classes of Markov processes by means of theories in probability logic. Our main result establishes a version of the Goldblatt-Thomason theorem for this setting. The classical theorem of Goldblatt and Thomason characterizes the modally definable elementary classes of Kripke frames in terms of certain frame constructions. It states that an elementary (first-order definable) class of Kripke frames is modally definable if and only if it is closed under disjoint unions, generated subframes, bounded morphic images, and reflects ultrafilter extensions. The original proof is algebraic in nature, relying on representation theory for modal algebras. It is further shown that a class of general frames-- i.e., Kripke frames equipped with a Boolean algebra of admissible sets--is modally definable precisely when it is closed under disjoint unions, generated subframes, bounded morphic images, and both closed under and reflective of ultrafilter extensions. Subsequently, van Benthem provided a model-theoretic proof of the original Goldblatt-Thomason theorem \cite{benthem:revisited93}.

Goldblatt-Thomason style characterizations have since been established for numerous extensions of modal and intuitionistic logics, including graded modal logic \cite{sano:gtgraded10}, intuitionistic logic \cite{rodenburg2016intuitionistic,sano2020goldblatt,de2020goldblatt,ma2025goldblatt}, and coalgebraic modal logic \cite{kurz:golcoal07}. In particular, van Benthem studied definable classes of finite (transitive) frames \cite{van1988notes}. He showed that for every finite transitive point-generated frame $\mathfrak{F}$ there exists a formula $\varphi_{\mathfrak{F}}$, known as a Jankov-Fine formula, such that whenever $\varphi_{\mathfrak{F}}$ is satisfiable on a frame $\mathfrak{G}$, there exists a surjective bounded morphism from a generated subframe of $\mathfrak{G}$ onto $\mathfrak{F}$. Variants of such Jankov-Fine constructions have been employed to characterize definable classes of finite structures in other logical settings, for example, in conditional logic over finite posets \cite{fornasiere2024frame}.

The study of definability in probability logic presents substantial additional difficulties. A central obstacle is the failure of compactness. The classical Goldblatt-Thomason theorem is formulated for elementary classes and can be weakened in terms of closure under ultrapowers. In contrast, satisfaction of probability formulas is not preserved under ultraproducts, and compactness fails accordingly. Our approach therefore follows the version for general frames and replaces closure under ultrapowers with ultrafilter extensions. However, defining suitable ultrafilter extensions for Markov processes is technically non-trivial. To address this issue, we adapt ideas from \cite{kozen2013stone}, working with Markov processes whose $\sigma$-algebras admit countable generating sets.
%Indeed, in \cite{kozen2013stone}, the authors studied Stone duality for Markov processes. To investigate the algebraic counterpart of the Markov processes, which is called Aumann algebras, they introduced an ultrafilter construction over a specific kind of Markov processes, named Stone–Markov process.
Indeed, Kozen and coauthors \cite{kozen2013stone} studied Stone duality for Markov processes. In order to analyze the corresponding algebraic structures, known as Aumann algebras, they introduced an ultrafilter construction on a particular class of Markov processes, termed Stone-Markov processes. 
They proved the duality between countable Aumann algebras and Stone-Markov Markov processes.
Accordingly, we also consider the class of Markov processes with countably generated $\sigma$-algebras, and employ methods similar to those in \cite{kozen2013stone} to define ultrafilter extensions of Markov processes.

In addition to the general case, we investigate definability over the class of finite Markov processes, obtaining a corresponding characterization for this important special case.

The outline of this article is as follows. In Section~\ref{sec per}, a brief review of the basic notions and facts needed for other parts is given.
In Section~\ref{sec model}, model-theoretic constructions used in the Goldblatt-Thomason theorem are introduced. 
In Section~\ref{sec gt}, a version of the Goldblatt-Thomason theorem is established.
Finally, in Section~\ref{sec variants}, a version of the Goldblatt-Thomason theorem for the class of finite Markov processes is given.
%Furthermore, some proofs are given in the appendix.
%%%%%%%%%%%%%%%%%%%%%%%%%%%%%%%%%%%%%%%%
%%%%%%%%%%%%%%%%%%%%%%%%%%%%%%%%%%%%%%%%
\section{Preliminaries}\label{sec per}

In this section, we review the basic materials needed for the main results of this paper.

Let $\mathbb{P}$ be a countable set of propositional variables. The formulas of probability logic ($\mathsf{PL}$) are defined recursively by the following grammar:
\begin{align*}
	\varphi ::&=\ \ p\mid \neg\varphi \mid \varphi \land \varphi \mid L_r \varphi
\end{align*}
where $p \in \mathbb{P}$ and $r\in \mathbb{Q}_0\ := [0,1] \cap \mathbb{Q}$. The Boolean connectives $\top$, $\perp$, $\to$, $\leftrightarrow$, and $\vee$ are defined in the usual way. For a formula $\varphi$, the expression $L_r \varphi$ is interpreted as `the probability of $\varphi$ is at least $r$'. Likewise, $M_r \varphi$ is defined as an abbreviation for $L_{1-r}\neg \varphi$ and is interpreted as `the probability of $\varphi$ is at most $r$'.   Moreover, for $r_1,\dots,r_k \in \mathbb{Q}_0$, we use the formula $L_{r_1\dots r_k} \varphi$ as an abbreviation for $L_{r_1} \dots L_{r_k} \varphi$.
Additionally, $L_r^0 \varphi := \varphi$ and $L_r^n \varphi:= \underbrace{ L_r \; \dots\; L_r}_\text{$n$-times} \varphi$ for each $n>0$ and $r\in \mathbb{Q}_0$.

\begin{definition} \label{Markov}
	Let $\langle{\Omega, \Sigma}\rangle$ be a measurable space. A function $T: \Omega\times \Sigma\to [0, 1]$ is called a {\em Markov kernel} (or {\em transition probability}) if the following conditions hold:
	\begin{itemize}
		\item for each $w \in \Omega$, the mapping $T(w, .): N \mapsto T(w, N)$ is a probability measure on $\Sigma$;
		\item for each $N \in \Sigma$, the mapping $T(., N): w \mapsto T(w, N)$ is a measurable function on $\Omega$.
	\end{itemize}
	The triple $\langle{\Omega, \Sigma, T}\rangle$ is then called a {\em Markov process} on the state space $\langle{\Omega, \Sigma}\rangle$. For convenience, we may write $T(w)(N)$ instead of $T(w, N)$.
\end{definition}

\begin{definition}\label{coun gen def}
	A Markov process $\mathfrak{P}= \langle{\Omega, \Sigma, T}\rangle$ is said {\em (countably) generated} by a (countable) subalgebra $\mathcal{B}$ of $\Sigma$ if  $\mathcal{B}$ is a (countable) Boolean algebra and $\Sigma:= \sigma(\mathcal{B})$,  the smallest $\sigma$-algebra containing $\mathcal{B}$. We always assume that $F_r(N) := \{w\in\Omega\;|\; T(w)(N)\geq r\}\in \mathcal{B}$, for each $N \in \mathcal{B}$ and $r\in \mathbb{Q}_0$. We also define $F_{r_1\dots r_n}(N)$ as $F_{r_1}\dots F_{r_n}(N)$,  for $r_1, \dots, r_n\in \mathbb{Q}_0$. 
\end{definition}

\begin{convention} \label{con}
	If we represent the Markov process $\mathfrak{P}$ as $ \langle{\Omega, \mathcal{B}, T}\rangle$, then we mean that $\Sigma= \sigma(\mathcal{B})$.
\end{convention}

\begin{definition}
	A {\em Markov model}, or simply a {\em model}, over (or based on) a Markov process $\mathfrak{P}= \langle{\Omega, \mathcal{B}, T}\rangle$ is a tuple $ \mathfrak{M} = \langle{\Omega, \mathcal{B}, T, v}\rangle$ where $v: \mathbb{P} \to \mathcal{B}$ is a valuation function which assigns to every propositional variable $p \in \mathbb{P}$, a set $v(p)\in\mathcal{B}$.
\end{definition}

\begin{remark}
Note that restricting the valuation to the generating subalgebra $\mathcal{B}$ is analogous to defining valuations on general Kripke frames, in which each propositional variable $p$ is assigned a value in the augmented algebra associated with the frame. This restriction is necessary in order to adapt the method used in proving the Goldblatt--Thomason theorem for general frames to the setting of Markov processes.
\end{remark}

\begin{definition}
	The {\em satisfaction relation} for arbitrary formulas  in a given model $ \mathfrak{M} = \langle{\Omega, \mathcal{B}, T, v}\rangle$, is
	defined inductively as follows:
	\begin{itemize}
		\item[] $\mathfrak{M}, w \vDash p \;$ iff $\; w \in v(p)$,
		\item[] $\mathfrak{M}, w \vDash \neg\varphi \;$ iff $\; \mathfrak{M}, w \nvDash \varphi$,
		\item[] $\mathfrak{M}, w \vDash \varphi\land \psi\;$ iff $\; \mathfrak{M}, w \vDash \varphi$ and $\mathfrak{M}, w \vDash \psi$,
		\item[] $\mathfrak{M}, w \vDash L_r \varphi \;$ iff $\; T(w, [\![\varphi]\!]_{\mathfrak{M}}) \geq r$, where
		$[\![\varphi]\!]_{\mathfrak{M}} = \{ w \in \Omega\;| \; \mathfrak{M}, w \vDash \varphi \}$.
	\end{itemize}
	A set $\Gamma$ of formulas is satisfiable in the world $w$ if $\mathfrak{M}, w \vDash \varphi$ for each $\varphi \in \Gamma$.
\end{definition}

Note that by Convention \ref{con}, we have $[\![\varphi]\!]_{\mathfrak{M}}\in\mathcal{B}$, for each formula $\varphi$.

\begin{definition}
	A formula $\varphi$ is {\em valid in a model} $\mathfrak{M}= \langle{\Omega, \mathcal{B}, T, v}\rangle$, denoted by $\mathfrak{M}\vDash \varphi$, if $\mathfrak{M}, w \vDash \varphi$ for all $w \in \Omega$. Similarly, $\varphi$ is \textit{valid in a Markov process} $\mathfrak{P} = \langle{\Omega, \mathcal{B}, T}\rangle$, denoted by $\mathfrak{P}\vDash \varphi$, if it is valid in every model based on $\mathfrak{P}$. We say that $\varphi$ is \textit{valid in a class $\mathcal{C}$} of Markov processes, denoted by ${\vDash}_{\mathcal{C}}\; \varphi$, if it is valid in every element $\mathfrak{P}=\langle{\Omega, \Sigma, T}\rangle$ of $\mathcal{C}$. %Further, $\varphi$ is a {\em semantic consequence} of a set $\Gamma$ of formulas over a class $\mathcal{C}$ if for every model $\mathfrak{M}$ based on an element in $\mathcal{C}$, if $\Gamma$ holds in a world $w$ in $\mathfrak{M}$, then so is $\varphi$. In this case, we write $\Gamma \;{\vDash}_{\mathcal{C}} \;\varphi$.
	For abbreviation, we may omit the subscript $\mathcal{C}$ and instead write ${\vDash}\; \varphi$ when it is the class of all Markov processes.
	A set of formulas $\Gamma$ is satisfiable in the class $\mathcal{C}$ if there is a Markov process $\mathfrak{P}\in\mathcal{C}$ and a model $\mathfrak{M}$ based on $\mathfrak{P}$ such that $\mathfrak{M},w\models\Gamma$ for some $w\in\Omega$.
	Likewise, $\Gamma$ is finitely satisfiable in $\mathcal{C}$ if each finite subset $\Gamma'\subseteq\Gamma$ is satisfiable in $\mathcal{C}$.
\end{definition}

\begin{proposition} \label{validity}
	Let  $\varphi$ be a formula and $\mathfrak{P} = \langle{\Omega, \Sigma, T}\rangle$  a Markov process. Then the following conditions are equivalent:
	\begin{enumerate}
		\item $\varphi$ is valid in $\mathfrak{P} = \langle{\Omega, \Sigma, T}\rangle$.
		\item $\varphi$  is valid in $\mathfrak{P} = \langle{\Omega, \mathcal{B}, T}\rangle$, for every subalgebra $\mathcal{B}$ of $\Sigma$ with $\Sigma= \sigma(\mathcal{B})$.
		\item $\varphi$ is valid in every $\mathfrak{P}' = \langle{\Omega, \Sigma', T'}\rangle$, where $\Sigma'$ is a countably generated $\sigma$-subalgebra of $\Sigma$ and $T' = T \upharpoonright\Sigma' $.
		\item $\varphi$ is valid in every  $\mathfrak{P}' = \langle{\Omega, \mathcal{B}', T'}\rangle$, where $\mathcal{B}'$ is a countable subalgebra of $\Sigma$ and $T'= T\upharpoonright \sigma(\mathcal{B}')$.
	\end{enumerate}
\end{proposition}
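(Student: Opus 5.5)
The plan is to prove the cycle of implications $1 \Rightarrow 2 \Rightarrow 4 \Rightarrow 3 \Rightarrow 1$, or a close variant. The whole argument rests on how valuations are restricted. For $\mathfrak{P}=\langle\Omega,\Sigma,T\rangle$, read as $\langle\Omega,\Sigma,T\rangle$ with $\mathcal{B}=\Sigma$, valuations take values in $\Sigma$. For $\langle\Omega,\mathcal{B},T\rangle$ they take values in $\mathcal{B}$. The interpretation of $L_r$ only uses $T(w,\cdot)$ on sets of the form $[\![\varphi]\!]$.

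The first observation, proved by a routine induction on formulas, is this: if $v$ takes values in some $\sigma$-algebra $\Sigma'\subseteq\Sigma$ on which $T$ is measurable in the first argument, then every $[\![\varphi]\!]$ lies in $\Sigma'$. Moreover, truth in $\langle\Omega,\Sigma',T\upharpoonright\Sigma',v\rangle$ coincides with truth in $\langle\Omega,\Sigma,T,v\rangle$, because the two kernels agree on $\Sigma'$. One point needs care: $T\upharpoonright\Sigma'$ must itself be a Markov kernel, i.e. $w\mapsto T(w,N)$ must be $\Sigma'$-measurable. I will read clauses 3 and 4 as ranging over sub-$\sigma$-algebras that yield Markov processes, in line with Definition \ref{coun gen def}, which requires $F_r(N)\in\mathcal{B}$. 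Under that reading, a generating algebra closed under the $F_r$ makes $T(\cdot,N)$ measurable for $N\in\mathcal{B}$, and a monotone class argument extends this to all of $\sigma(\mathcal{B})$.

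With this in hand, the downward implications are immediate. For $1\Rightarrow 2$: any valuation into $\mathcal{B}\subseteq\Sigma$ is in particular a valuation into $\Sigma$, and the semantics is unchanged. For $1\Rightarrow 3$ and $1\Rightarrow 4$: a valuation into $\Sigma'$ (respectively $\mathcal{B}'$) is a valuation into $\Sigma$, and by the observation above, truth with respect to $T'$ agrees with truth with respect to $T$. Also $2\Rightarrow 1$ holds trivially by taking $\mathcal{B}=\Sigma$, and $3\Rightarrow 4$ holds because the model in 4 is $\langle\Omega,\sigma(\mathcal{B}'),T'\rangle$ with $\sigma(\mathcal{B}')$ countably generated, restricted to valuations into $\mathcal{B}'$. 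So it suffices to prove $4\Rightarrow 1$.

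For $4\Rightarrow 1$, I argue by contraposition. Suppose $\mathfrak{M}=\langle\Omega,\Sigma,T,v\rangle$ has $\mathfrak{M},w\nvDash\varphi$. Only the finitely many variables $p_1,\dots,p_n$ occurring in $\varphi$ matter, so I may reset $v(q)=\emptyset$ for all other variables. Let $\mathcal{B}_0$ be the Boolean algebra generated by $v(p_1),\dots,v(p_n)$. Define $\mathcal{B}_{k+1}$ as the Boolean algebra generated by $\mathcal{B}_k\cup\{F_r(N)\mid N\in\mathcal{B}_k,\ r\in\mathbb{Q}_0\}$, and set $\mathcal{B}'=\bigcup_k\mathcal{B}_k$. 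Each $F_r(N)$ lies in $\Sigma$ by measurability of $T(\cdot,N)$, so each $\mathcal{B}_k$ is countable and contained in $\Sigma$. Hence $\mathcal{B}'$ is a countable subalgebra of $\Sigma$ closed under all $F_r$, exactly as Definition \ref{coun gen def} requires.

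By induction on subformulas of $\varphi$, each $[\![\psi]\!]_{\mathfrak{M}}$ lies in $\mathcal{B}'$: the Boolean cases are clear, and $[\![L_r\psi]\!]=F_r([\![\psi]\!])$. Hence $v$ is a legitimate valuation on $\langle\Omega,\mathcal{B}',T'\rangle$, and by the first observation this model still refutes $\varphi$ at $w$. This contradicts 4.

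The main obstacle is the bookkeeping in the first observation. I must check that $T\upharpoonright\sigma(\mathcal{B}')$ is a genuine Markov kernel, via a Dynkin $\pi$–$\lambda$ argument: the family of $N\in\sigma(\mathcal{B}')$ with $T(\cdot,N)$ being $\sigma(\mathcal{B}')$-measurable contains the $\pi$-system $\mathcal{B}'$ and is closed under complements and countable disjoint unions. I must also state precisely the reading of clause 3, namely that only sub-$\sigma$-algebras making $T'$ a kernel are meant.
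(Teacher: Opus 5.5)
Your proposal is correct and takes essentially the paper's approach. The decisive step $4\Rightarrow 1$ in the paper uses $\mathcal{B}'=\{[\![\theta]\!]_{\mathfrak{M}}\mid \theta \text{ a formula}\}$, which is essentially the same countable, $F_r$-closed algebra of truth sets that you build by iteration, and the other implications are the same routine restrictions of valuations. Your Dynkin-argument check that $T\upharpoonright\sigma(\mathcal{B}')$ is a genuine Markov kernel supplies a detail the paper leaves implicit.
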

\begin{proof}%[Proof of Proposition \ref{validity}]
	The equivalence of Parts 1 and 2, and respectively, Parts 3 and 4 are obvious. To see that Part 4 implies Part 1, notice that if $\varphi$  is not valid in $\mathfrak{P} = \langle{\Omega, \Sigma, T}\rangle$, then there exists a valuation $v:\mathbb{P}\to \Sigma$ such that in $\mathfrak{M} = \langle{\mathfrak{P}, v}\rangle$, we have that  $\mathfrak{M}, w \nvDash \varphi$, for some  $w\in \Omega$. Now if we take $\mathcal{B}' = \{[\![\theta]\!]_{\mathfrak{M}}\;|\; \theta\; \text{is a formula}\}$, then $\varphi$ is not valid in $\mathfrak{P}'= \langle{\Omega, \mathcal{B}', T\upharpoonright \sigma(\mathcal{B}')}\rangle$.
	Finally, It is clear that Part 2 implies Part 4.	
\end{proof}

\begin{definition}
	A class of Markov processes $\mathcal{C}$ is definable in probability logic or \emph{P-definable}, if there is a set of formulas $\Gamma$ such that for any Markov process $\mathfrak{P}$ we have $\mathfrak{P}\in\mathcal{C}$ if and only if $\mathfrak{P}\models\Gamma$.
\end{definition}

\begin{definition} \cite[Definition~10]{kozen2013stone}
	Let $ \langle{\Omega, \mathcal{B}, T}\rangle$ be a countably generated Markov process. We say that this process is a {\em Stone-Markov process} if $\mathcal{B}$ is a countable base of clopen sets for the topology generated by $\mathcal{B}$, named $\tau_{\mathcal{B}}$, that satisfies the Baire property (also called \em topologically complete).
\end{definition}

Note that if  $\langle\Omega, \tau_\mathcal{B}\rangle$ is a Polish space or a compact separable space, then $ \langle{\Omega, \mathcal{B}, T}\rangle$ is a Stone-Markov process. This property is referred to as {\em saturation} in \cite{kozen2013stone}.
%%
%%
%%%%%%%%%%%%%%%%%%%%%%%%%%%%%%%%%%%%%%%%
%%%%%%%%%%%%%%%%%%%%%%%%%%%%%%%%%%%%%%%%
\section{Some Model-Theoretic Concepts}\label{sec model}
In this section, we introduce several model-theoretic constructions that are needed for the proof of the Goldblatt–Thomason Theorem for probability logic.

We begin by defining the notion of a Markov sub-process.

\begin{definition}[Markov sub-process]\label{subpro}
	Let $\mathfrak{P} = \langle \Omega, \Sigma, T\rangle$ be a Markov process, and let $\Omega' \in \Sigma$ be a measurable subset such that
	\[
	T(w,\Omega') = 1 \quad \text{for all } w \in \Omega'.
	\]
	Let $\Sigma' = \{ N \cap \Omega' \mid N \in \Sigma \}$ be the induced $\sigma$-algebra on
	$\Omega'$. Define the transition function $T' \colon \Omega' \times \Sigma' \to [0,1]$ by
	\[
	T'(w, N \cap \Omega') = T(w,N),
	\]
	for all $w \in \Omega'$ and $N \in \Sigma$.
\end{definition}

\begin{lemma}\label{lem sub}
	Under the assumptions of Definition~\ref{subpro}:
	\begin{enumerate}
		\item the triple $\mathfrak{P}' = \langle \Omega', \Sigma', T' \rangle$ is a Markov process;
		\item if $\mathfrak{P}$ is countably generated, then so is $\mathfrak{P}'$.
	\end{enumerate}
\end{lemma}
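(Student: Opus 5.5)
For Part 1, I first check that $\Sigma'$ is a $\sigma$-algebra on $\Omega'$ and that $T'$ is well defined. Since $\Omega'\in\Sigma$, the trace $\Sigma'$ is simply $\{N\in\Sigma \mid N\subseteq\Omega'\}$, and it is closed under complements relative to $\Omega'$ and under countable unions.

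For well-definedness, suppose $N\cap\Omega'=M\cap\Omega'$ with $N,M\in\Sigma$. Then $N\triangle M\subseteq \Omega\setminus\Omega'$. For $w\in\Omega'$ we have $T(w,\Omega\setminus\Omega')=0$, so $T(w,N)=T(w,N\cap\Omega')=T(w,M\cap\Omega')=T(w,M)$. Equivalently, $T'(w,A)=T(w,A)$ for every $A\in\Sigma'$, viewed as a subset of $\Omega$.

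Next, $T'(w,\cdot)$ is a probability measure on $\Sigma'$. It is the restriction of the measure $T(w,\cdot)$ to the sets in $\Sigma'$, so it is countably additive, and $T'(w,\Omega')=T(w,\Omega')=1$.

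For measurability of $w\mapsto T'(w,A)$ on $\Omega'$, take $A\in\Sigma'$ and $r\in[0,1]$. The preimage is
$$\{w\in\Omega' \mid T(w,A)\ge r\}=\{w\in\Omega \mid T(w,A)\ge r\}\cap\Omega'.$$
This set lies in $\Sigma'$ because $T(\cdot,A)$ is $\Sigma$-measurable.

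For Part 2, let $\mathfrak{P}$ be generated by a countable Boolean algebra $\mathcal{B}$. I take $\mathcal{B}'=\{N\cap\Omega' \mid N\in\mathcal{B}\}$. This is a countable Boolean algebra of subsets of $\Omega'$, being the image of a Boolean homomorphism. To show $\sigma(\mathcal{B}')=\Sigma'$, I use the standard trace argument. The family
$$\{N\in\Sigma \mid N\cap\Omega'\in\sigma(\mathcal{B}')\}$$
contains $\mathcal{B}$ and is a $\sigma$-algebra, so it contains $\Sigma$, which gives $\Sigma'\subseteq\sigma(\mathcal{B}')$. The reverse inclusion is immediate.

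The main obstacle is the standing requirement in Definition~\ref{coun gen def} that $F'_r(N')=\{w\in\Omega' \mid T'(w,N')\ge r\}\in\mathcal{B}'$ for all $N'\in\mathcal{B}'$. Write $N'=N\cap\Omega'$ with $N\in\mathcal{B}$; then
$$F'_r(N')=F_r(N\cap\Omega')\cap\Omega'=F_r(N)\cap\Omega'.$$
This lies in $\mathcal{B}'$ because $F_r(N)\in\mathcal{B}$.

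The second equality uses that $T(w,N\cap\Omega')=T(w,N)$ for $w\in\Omega'$. The first equality needs $N\cap\Omega'\in\mathcal{B}$ for $F_r(N\cap\Omega')$ to be meaningful, and we do not know that $\Omega'\in\mathcal{B}$. So I will work directly with $F_r(N)$ as in the last expression, which avoids needing $\Omega'\in\mathcal{B}$.

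If a stricter reading forces it, I would instead enlarge $\mathcal{B}$ to the countable Boolean algebra generated by $\mathcal{B}\cup\{\Omega'\}$ and closed under all $F_r$. Closing a countable set under finitely many Boolean operations and countably many operators $F_r$ stays countable, and it stays inside $\Sigma$ because each $T(\cdot,N)$ is measurable.
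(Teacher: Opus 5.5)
Your proposal is correct and follows essentially the same route as the paper: $T'$ is well defined because $T(w,\Omega\setminus\Omega')=0$ for $w\in\Omega'$, and Part 2 uses the trace algebra $\mathcal{B}'=\{B\cap\Omega'\mid B\in\mathcal{B}\}$ together with the identity $F'_r(B\cap\Omega')=F_r(B)\cap\Omega'$, while you also write out the routine details (that $\sigma(\mathcal{B}')=\Sigma'$, and the measurability of $T'(\cdot,A)$) which the paper leaves as standard. Your worry about $F_r(N\cap\Omega')$ is unnecessary: $F_r$ is defined for every set in $\Sigma$, and $N\cap\Omega'\in\Sigma$ because $\Omega'\in\Sigma$, so you can drop the fallback enlargement of $\mathcal{B}$.
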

\begin{proof}%[Proof of Lemma~\ref{lem sub}]
	\begin{enumerate}
		\item The main point is to show that $T'$ is well defined. Let $N_1, N_2\in\Sigma$ and $N_1=N_2$. Since $T(w,\Omega') = 1$ for all $w \in \Omega'$, we have
		\[
		T'(w,N_1 \cap \Omega') = T(w,N_1) = T(w,N_2) = T'(w,N_2 \cap \Omega'),
		\]
		so $T'$ is well defined. Standard arguments show that $T'$ is a Markov kernel.
		
		\item Suppose that $\mathcal{B}$ is a countable Boolean algebra generating $\Sigma$.
		Then
		\[
		\mathcal{B}' = \{ B \cap \Omega' \mid B \in \mathcal{B} \}
		\]
		generates $\Sigma'$. Moreover, for each $r \in [0,1]$ and $B \in \mathcal{B}$,
		\[
		F_r(B \cap \Omega') = \{ w \in \Omega' \mid T'(w,B \cap \Omega') \ge r \}
		= F_r(B) \cap \Omega'.
		\]
		Thus $\mathfrak{P}'$ is countably generated.
	\end{enumerate}
\end{proof}

Once the notion of a sub-process has been established, we can define the corresponding notion of a sub-model.

\begin{definition}
	Let $\mathfrak{M} = \langle \Omega, \Sigma, T, v \rangle$ be a Markov model. Suppose that $\Omega' \in \Sigma$ satisfies the conditions of Definition~\ref{subpro}, and let $\mathfrak{P}' = \langle \Omega', \Sigma', T' \rangle$ be the corresponding sub-process. Define a valuation $v'$ by setting
	\[
	v'(p) = v(p) \cap \Omega' \quad \text{for each } p \in \mathbb{P}.
	\]
	Then $\mathfrak{M}' = \langle \Omega', \Sigma', T', v' \rangle$ is called a
	\emph{sub-model} of $\mathfrak{M}$.
\end{definition}

\begin{lemma}
	Let $\mathfrak{M}'$ be a sub-model of $\mathfrak{M}$. Then for every formula $\varphi$
	and every $w \in \Omega'$,
	\[
	\mathfrak{M}, w \vDash \varphi
	\quad \text{iff} \quad
	\mathfrak{M}', w \vDash \varphi.
	\]
\end{lemma}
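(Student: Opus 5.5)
We argue by induction on the complexity of $\varphi$. The claim we actually carry through the induction is the equivalent set-theoretic statement
\[
[\![\varphi]\!]_{\mathfrak{M}'} = [\![\varphi]\!]_{\mathfrak{M}} \cap \Omega' \qquad \text{for every formula } \varphi .
\]
Phrasing it this way is convenient for two reasons. It is exactly what the modal clause needs. It also shows at each stage that $[\![\varphi]\!]_{\mathfrak{M}'} \in \Sigma'$, because $[\![\varphi]\!]_{\mathfrak{M}} \in \Sigma$ and $\Sigma'$ is the trace $\sigma$-algebra on $\Omega'$.

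The base case holds by definition: $v'(p) = v(p)\cap\Omega'$. For negation, the induction hypothesis gives
\[
[\![\neg\varphi]\!]_{\mathfrak{M}'} = \Omega'\setminus\bigl([\![\varphi]\!]_{\mathfrak{M}}\cap\Omega'\bigr) = \bigl(\Omega\setminus[\![\varphi]\!]_{\mathfrak{M}}\bigr)\cap\Omega' .
\]
Conjunction is similar, since intersection distributes over $\cap\,\Omega'$.

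The only substantive case is $L_r\varphi$. Fix $w\in\Omega'$ and set $N = [\![\varphi]\!]_{\mathfrak{M}} \in \Sigma$. By the induction hypothesis, $[\![\varphi]\!]_{\mathfrak{M}'} = N\cap\Omega'$. By the definition of $T'$ (Definition~\ref{subpro}), which is well defined by Lemma~\ref{lem sub}, we have $T'(w, N\cap\Omega') = T(w,N)$. Hence
\[
\mathfrak{M}',w\vDash L_r\varphi \iff T'(w,N\cap\Omega')\ge r \iff T(w,N)\ge r \iff \mathfrak{M},w\vDash L_r\varphi .
\]

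The point that needs care is the well-definedness of $T'$: $T(w,N) = T(w,N\cap\Omega')$ whenever $T(w,\Omega')=1$. This holds because $T(w,N\setminus\Omega')\le T(w,\Omega\setminus\Omega') = 0$, and it is already secured by Lemma~\ref{lem sub}. So no step is a real obstacle; one only has to keep the sets $[\![\cdot]\!]_{\mathfrak{M}}$ and $[\![\cdot]\!]_{\mathfrak{M}'}$ distinct throughout.
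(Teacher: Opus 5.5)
Your proof is correct and takes the same approach as the paper, which records this lemma only as a structural induction on $\varphi$. Strengthening the claim to $[\![\varphi]\!]_{\mathfrak{M}'} = [\![\varphi]\!]_{\mathfrak{M}}\cap\Omega'$ and reducing the $L_r$ case to $T'(w,N\cap\Omega')=T(w,N)$, justified by $T(w,N\setminus\Omega')\le T(w,\Omega\setminus\Omega')=0$, is exactly what that induction needs (and states the well-definedness of $T'$ more precisely than the paper's proof of Lemma~\ref{lem sub} does).
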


%\begin{proof}
%	By induction on the structure of $\varphi$.
%\end{proof}

\begin{corollary}\label{sub val}
	Let $\varphi$ be a formula. If $\mathfrak{P} \vDash \varphi$ and
	$\mathfrak{P}'$ is a sub-process of $\mathfrak{P}$, then
	$\mathfrak{P}' \vDash \varphi$.
\end{corollary}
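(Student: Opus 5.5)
The proof will go by contraposition, and the key point is that every valuation on the sub-process $\mathfrak{P}'$ extends to a valuation on $\mathfrak{P}$ whose induced sub-model is exactly the given model on $\mathfrak{P}'$.

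Suppose $\mathfrak{P}' \nvDash \varphi$. Then there is a valuation $v'$ on $\mathfrak{P}' = \langle \Omega', \Sigma', T'\rangle$ and some $w \in \Omega'$ with $\langle \mathfrak{P}', v'\rangle, w \nvDash \varphi$. We want to produce a model $\mathfrak{M}$ on $\mathfrak{P}$ whose sub-model determined by $\Omega'$ is exactly $\langle \mathfrak{P}', v'\rangle$.

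Each $v'(p)$ lies in $\Sigma' = \{N\cap\Omega' \mid N\in\Sigma\}$. Since $\Omega' \in \Sigma$, every such set $N \cap \Omega'$ is itself in $\Sigma$. So we may simply set $v(p) := v'(p)$, regarded as a measurable subset of $\Omega$. This gives $v(p)\cap\Omega' = v'(p)$, so the sub-model of $\mathfrak{M} = \langle \mathfrak{P}, v\rangle$ determined by $\Omega'$ is precisely $\mathfrak{M}' = \langle \mathfrak{P}', v'\rangle$.

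By the preceding lemma on sub-models, $\mathfrak{M}, w \vDash \varphi$ if and only if $\mathfrak{M}', w \vDash \varphi$. Hence $\mathfrak{M}, w \nvDash \varphi$, and so $\mathfrak{P} \nvDash \varphi$, contradicting the hypothesis.

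The only delicate step is the valuation bookkeeping when the processes are presented via generating algebras in the sense of Convention~\ref{con}. Suppose $\mathfrak{P} = \langle \Omega, \mathcal{B}, T\rangle$ and $\mathfrak{P}'$ carries $\mathcal{B}' = \{B\cap\Omega' \mid B \in\mathcal{B}\}$ as in Lemma~\ref{lem sub}. Then $v'(p) = B\cap\Omega'$ for some $B\in\mathcal{B}$, and we take $v(p) := B$. Restriction again recovers $v'$, but now $v$ takes values in $\mathcal{B}$, as the definition of a model requires. If instead $\Omega'\notin\mathcal{B}$ and one insists on $\mathcal{B}$-valued valuations, this choice of $B$ is exactly what is needed. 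Alternatively, Proposition~\ref{validity} lets us work with the full $\sigma$-algebra, where $\Omega'\in\Sigma$ makes the extension immediate.

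The sub-model lemma itself is a routine induction. Its only nontrivial case is $L_r\psi$, which uses $T'(w, [\![\psi]\!]_{\mathfrak{M}}\cap\Omega') = T(w,[\![\psi]\!]_{\mathfrak{M}})$, a consequence of $T(w,\Omega')=1$ for $w \in \Omega'$.
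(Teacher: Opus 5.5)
Your proof is correct and follows the paper's argument exactly: you extend $v'$ to a valuation $v$ on $\mathfrak{P}$ by $v(p)=v'(p)$, which is legitimate since $\Omega'\in\Sigma$, and then use the sub-model lemma to transfer the failure of $\varphi$ at $w$ back to $\langle\mathfrak{P},v\rangle$. Your extra bookkeeping is a small refinement the paper's proof glosses over: when valuations must be $\mathcal{B}$-valued, you choose $v(p)=B$ with $v'(p)=B\cap\Omega'$.
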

\begin{proof}%[Proof of Corollary \ref{sub val}]
	Suppose not. Then there exists a valuation $v'$ on $\mathfrak{P}'$ and a state
	$w \in \Omega'$ such that $\mathfrak{M}', w \nvDash \varphi$.
	Extend $v'$ to a valuation $v$ on $\mathfrak{P}$ by setting $v(p)=v'(p)$ for all
	$p \in \mathbb{P}$. By the previous lemma,
	$\mathfrak{M}, w \nvDash \varphi$, contradicting $\mathfrak{P} \vDash \varphi$.
\end{proof}

For a given model $\mathfrak{M}$ the following lemma introduces a sub-model $\mathfrak{M}'$ of $\mathfrak{M}$ which plays the same role as a {\em generated sub-model} in the context of the Kripke semantics.

\begin{lemma} \label{subL1}
	Suppose $\mathfrak{M} = \langle \Omega, \Sigma,T, v\rangle$ is a model, $w\in \Omega$ and $\Gamma$ is a set of formulas such that  $\mathfrak{M}, w \vDash L_1^n \varphi$  for each $\varphi\in \Gamma$ and $n\geq 0$. Then there exists a sub-model $\mathfrak{M}' = \langle \Omega', \Sigma',T', v'\rangle$ such that $w\in \Omega'$ and $\mathfrak{M}'\vDash \Gamma$.
\end{lemma}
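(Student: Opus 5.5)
We take $\Omega'$ to be the set of states at which every formula of $\Gamma$ holds ``almost surely to every depth'':
\[
\Omega' \;=\; \bigcap_{\varphi\in\Gamma}\ \bigcap_{n\ge 0} [\![L_1^n\varphi]\!]_{\mathfrak{M}} .
\]
We will verify that this set satisfies the hypotheses of Definition~\ref{subpro}, take the induced sub-model, and then transfer truth using the preceding lemma on sub-models.

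First, $\Omega'$ is measurable. Since $\mathbb{P}$ is countable, there are countably many formulas, so $\Gamma$ is countable and $\Omega'$ is a countable intersection of sets $[\![L_1^n\varphi]\!]_{\mathfrak{M}}$. Each such set lies in $\Sigma$ by the remark following the definition of satisfaction. Hence $\Omega'\in\Sigma$. By hypothesis, $w\in\Omega'$.

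Next, we check the closure condition $T(u,\Omega')=1$ for all $u\in\Omega'$. Fix $u\in\Omega'$. For each $\varphi\in\Gamma$ and $n\ge 0$ we have $u\vDash L_1^{n+1}\varphi = L_1(L_1^n\varphi)$, so $T(u,[\![L_1^n\varphi]\!]_{\mathfrak{M}})\ge 1$, i.e.\ this set has $T(u,\cdot)$-measure $1$. Now $\Omega'$ is a countable intersection of sets each of measure $1$ under the probability measure $T(u,\cdot)$. By countable subadditivity applied to the complements, $T(u,\Omega')=1$. This is the key step, and it is exactly where countability of $\Gamma$ together with $\sigma$-additivity is used. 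With finite additivity alone it would fail.

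By Lemma~\ref{lem sub}, the induced structure $\mathfrak{P}'=\langle\Omega',\Sigma',T'\rangle$ is a Markov process. We equip it with the valuation $v'(p)=v(p)\cap\Omega'$, giving the sub-model $\mathfrak{M}'$. If one works with a countably generated presentation $\langle\Omega,\mathcal{B},T\rangle$, part 2 of Lemma~\ref{lem sub} shows that $\mathfrak{M}'$ remains countably generated. In that case $v'(p)\in\mathcal{B}'$ as required, because $v(p)\in\mathcal{B}$.

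Finally, take $u\in\Omega'$ and $\varphi\in\Gamma$. Using $n=0$, we have $\mathfrak{M},u\vDash\varphi$. By the lemma preceding Corollary~\ref{sub val}, $\mathfrak{M}',u\vDash\varphi$. Hence $\mathfrak{M}'\vDash\Gamma$.

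The only delicate point is the measurability and measure-one argument for $\Omega'$; both rest on $\Gamma$ being countable. This holds automatically because the language is countable.
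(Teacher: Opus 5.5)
Your proof is correct and follows essentially the same route as the paper. The paper also enumerates $\Gamma$ as $\varphi_0,\varphi_1,\dots$ and sets $\Omega'=\bigcap_{i,n}[\![L_1^n\varphi_i]\!]_{\mathfrak{M}}$. It then uses $L_1^{n+1}\varphi_i$ at each $w'\in\Omega'$ to get $T(w',[\![L_1^n\varphi_i]\!]_{\mathfrak{M}})=1$, hence $T(w',\Omega')=1$, and passes to the induced sub-model. You additionally make explicit two points the paper leaves tacit: that $\Gamma$ is countable, which is needed for measurability and for the countable-intersection step, and the final truth transfer via the sub-model lemma.
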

\begin{proof}%[Proof of Lemma subL1]
	Let $\Gamma = \{\varphi_0, \varphi_1, \dots\}$. So we have  $\mathfrak{M}, w \vDash L_1^n \varphi_i$ for each $i , n \geq 0$.  Therefore,  $T(w, [\![L_1^{n}\varphi_i]\!]_\mathfrak{M})= 1$, for each $i,n\geq 0$. Put $\Omega_{i,n} = [\![L_1^{n}\varphi_i]\!]_\mathfrak{M}$ and $\Omega' = \bigcap_{i, n} \Omega_{i,n}$. Then, $w\in \Omega'$ and $\Omega'$ is a measurable set. Moreover,  for each $w'\in \Omega'$ and $i, n\geq  0$, we have  $\mathfrak{M}, w' \vDash L_1^{n+1}\varphi_i$. Hence, $T(w', \Omega_{i,n})= 1$, for each $i,n\geq 0$. This implies that
	$T(w',\Omega')=1$. Now, if we take $\mathfrak{M}' = \langle \Omega', \Sigma',T', v'\rangle$ as a sub-model of $\mathfrak{M} $, then $\mathfrak{M}'$ satisfies the required claim.
\end{proof}

\begin{definition}
	Let $\mathfrak{P}=\langle \Omega, \Sigma, T \rangle$ be a Markov process. We say that a Markov process $\mathfrak{P}'=\langle \Omega, \Sigma', T' \rangle$ is an \emph{event sub-process} of $\mathfrak{P}$ if $\Sigma'$ is a $\sigma$-sub-algebra of $\Sigma$ and $T' = T\upharpoonright_{\Sigma'}$ is a Markov kernel.
\end{definition}

The following lemma is immediate by Proposition~\ref{validity}.

\begin{lemma}
	Let $\mathfrak{P}'$ be an event sub-process of $\mathfrak{P}$. Then
	$\mathfrak{P}\models\varphi$ implies that $\mathfrak{P}'\models\varphi$,
	for any formula $\varphi$.
\end{lemma}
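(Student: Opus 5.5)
The plan is to reduce the statement to Proposition~\ref{validity}. Write $\mathfrak{P}=\langle\Omega,\Sigma,T\rangle$ and $\mathfrak{P}'=\langle\Omega,\Sigma',T'\rangle$ with $\Sigma'\subseteq\Sigma$ a $\sigma$-subalgebra and $T'=T\upharpoonright_{\Sigma'}$. I will argue by contraposition. Suppose $\mathfrak{P}'\nvDash\varphi$. Then there is a valuation $v':\mathbb{P}\to\Sigma'$ and a state $w\in\Omega$ with $\langle\mathfrak{P}',v'\rangle,w\nvDash\varphi$.

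The first step is to regard $v'$ also as a valuation $v$ on $\mathfrak{P}$. This is legitimate because $\Sigma'\subseteq\Sigma$.

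The second step is to show, by induction on formulas $\psi$, that
\[
[\![\psi]\!]_{\langle\mathfrak{P}',v'\rangle}=[\![\psi]\!]_{\langle\mathfrak{P},v\rangle},
\]
and that this set lies in $\Sigma'$. The cases for atoms and Boolean connectives are immediate. For $L_r\psi$, the induction hypothesis gives a common truth set $N\in\Sigma'$. Then $T'(u,N)=T(u,N)$ for every $u$, so the truth sets of $L_r\psi$ coincide. This common set is $\{u\mid T'(u,N)\ge r\}$, which lies in $\Sigma'$ because $T'$ is a Markov kernel on $\Sigma'$.

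From this identity, $\langle\mathfrak{P},v\rangle,w\nvDash\varphi$, contradicting $\mathfrak{P}\vDash\varphi$.

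The same argument can be phrased through Proposition~\ref{validity}, in the spirit of the remark preceding the lemma. Take $\mathcal{B}'=\{[\![\theta]\!]_{\langle\mathfrak{P}',v'\rangle}\mid\theta \text{ a formula}\}$, which is a countable subalgebra of $\Sigma'\subseteq\Sigma$. Since $T\upharpoonright\sigma(\mathcal{B}')=T'\upharpoonright\sigma(\mathcal{B}')$, the refuting model is one of the processes mentioned in Part~4 of Proposition~\ref{validity} for $\mathfrak{P}$. Hence $\varphi$ fails there, and by the implication $1\Rightarrow 4$ this contradicts $\mathfrak{P}\vDash\varphi$.

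The only point needing care is measurability, namely that the truth sets stay inside $\Sigma'$ so that the modal clause for $T'$ is defined. This is exactly where the hypothesis that $T'$ is a Markov kernel on $\Sigma'$, and in particular that $T'(\cdot,N)$ is $\Sigma'$-measurable, is used. Apart from this, the argument is routine.
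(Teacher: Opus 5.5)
Your argument is correct and is essentially the paper's: the paper simply declares the lemma immediate from Proposition~\ref{validity}, and your second formulation is exactly that reduction. There you take the countable algebra of truth sets of a refuting valuation, view it as a countable subalgebra of $\Sigma$, and apply Part~4. Your direct induction unpacks what the proposition's ``clear'' steps rely on, and you correctly identify the $\Sigma'$-measurability of $T(\cdot,N)$ as the only place the hypothesis that $T'$ is a Markov kernel is used.
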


Next, we define disjoint unions of Markov processes.

\begin{definition}[Disjoint union]\label{diun}
	\begin{enumerate}
		\item Let $I$ be a non-empty index set, and for each $i \in I$ let
		$\mathfrak{P}_i = \langle \Omega_i, \Sigma_i, T_i \rangle$ be a Markov process.
		The \emph{disjoint union}
		$\biguplus_{i \in I} \mathfrak{P}_i$ is the Markov process
		$\mathfrak{P} = \langle \Omega, \Sigma, T \rangle$ defined as follows:
		\begin{itemize}
			\item $\Omega = \biguplus_{i \in I} \Omega_i$;
			\item $\Sigma$ is the $\sigma$-algebra generated by the algebra of all sets of the form
			$\biguplus_{i \in I} N_i$, where $N_i \in \Sigma_i$;
			\item For all $j\in I$ and $w \in \Omega_j$,
			\[
			T(w,\biguplus_{i \in I} N_i) = T_j(w,N_j).
			\]
		\end{itemize}
		
		\item If $\mathfrak{M}_i = \langle \Omega_i, \Sigma_i, T_i, v_i \rangle$ is a Markov
		model for each $i \in I$, then the disjoint union
		$\biguplus_{i \in I} \mathfrak{M}_i$ is the Markov model
		$\mathfrak{M} = \langle \biguplus_{i \in I} \mathfrak{P}_i, v \rangle$, where
		\[
		v(p) = \biguplus_{i \in I} v_i(p)
		\quad \text{for all } p \in \mathbb{P}.
		\]
	\end{enumerate}
\end{definition}

\begin{lemma}
	The disjoint union $\mathfrak{P} = \biguplus_{i \in I} \mathfrak{P}_i$ and $\mathfrak{M}=\biguplus_{i \in I}\mathfrak{M}_i$ are respectively a Markov process and a Markov model.
\end{lemma}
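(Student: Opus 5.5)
Most of the verification concerns the kernel $T$; the measurable space and the model are routine. First, $\Sigma$ is by definition the $\sigma$-algebra generated by the family
\[
\mathcal{A}=\Bigl\{\biguplus_{i\in I}N_i \;\Bigm|\; N_i\in\Sigma_i\Bigr\},
\]
so $\langle\Omega,\Sigma\rangle$ is a measurable space. I will also note that $\mathcal{A}$ is already a $\sigma$-algebra, since complements and countable unions are taken coordinatewise: $\Omega\setminus\biguplus_i N_i=\biguplus_i(\Omega_i\setminus N_i)$ and $\bigcup_k\biguplus_i N_i^k=\biguplus_i\bigcup_k N_i^k$. Hence $\Sigma=\mathcal{A}$, and every $N\in\Sigma$ has a unique decomposition $N=\biguplus_i N_i$ with $N_i=N\cap\Omega_i\in\Sigma_i$. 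This shows that the defining clause $T(w,\biguplus_i N_i)=T_j(w,N_j)$ for $w\in\Omega_j$ is well defined on all of $\Sigma$, and not merely on a generating family. I will do this step first because it removes any need for a Carathéodory-type extension.

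Next I check the two conditions of Definition~\ref{Markov}. Fix $w\in\Omega_j$. Then $T(w,\cdot)$ is the composite of the map $N\mapsto N\cap\Omega_j$ with the probability measure $T_j(w,\cdot)$. It satisfies $T(w,\Omega)=T_j(w,\Omega_j)=1$. It is countably additive, because disjoint sets in $\Sigma$ have disjoint $j$-th components and intersecting with $\Omega_j$ commutes with countable unions. For measurability, fix $N=\biguplus_i N_i$ and a Borel set $A\subseteq[0,1]$. Then
\[
\{w\mid T(w,N)\in A\}=\biguplus_{i\in I}\{w\in\Omega_i\mid T_i(w,N_i)\in A\}.
\]
Each component lies in $\Sigma_i$ by measurability of $T_i(\cdot,N_i)$, so by the description of $\Sigma$ the whole set lies in $\Sigma$. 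This is the step I expect to be the main obstacle, although it is mild. Its real content is that $\Sigma$ imposes no uniformity across components, so that arbitrary (possibly uncountable) unions of componentwise measurable pieces are measurable; this is exactly what the identification $\Sigma=\mathcal{A}$ secures.

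Finally, for the model, $v(p)=\biguplus_i v_i(p)$ has components $v_i(p)\in\Sigma_i$, so $v(p)\in\Sigma$ and $\mathfrak{M}$ is a Markov model. If one instead works with generating subalgebras $\mathcal{B}_i$ as in Convention~\ref{con}, I would check two points. The first is that the algebra $\mathcal{B}=\{\biguplus_i B_i\mid B_i\in\mathcal{B}_i\}$ generates $\Sigma$. The second is that it satisfies the closure assumption $F_r(\biguplus_i B_i)=\biguplus_i F^i_r(B_i)\in\mathcal{B}$ of Definition~\ref{coun gen def}. Both follow componentwise. The only caveat is that countable generation is not preserved for uncountable $I$, but the lemma does not claim that.
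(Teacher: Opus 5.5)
Your main argument is correct, and it is the routine verification the paper leaves as ``straightforward''. The key observation is that $\{\biguplus_i N_i\mid N_i\in\Sigma_i\}$ is already a $\sigma$-algebra. That is what lets the clause defining $T$ cover all of $\Sigma$, and after it the kernel axioms and $v(p)\in\Sigma$ follow componentwise.

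You should withdraw the first point of your closing remark. For uncountable $I$, the algebra $\mathcal{B}=\{\biguplus_i B_i\mid B_i\in\mathcal{B}_i\}$ need not generate $\Sigma$. This is a stronger failure than the loss of countability you mention.

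Let $\mathcal{H}_\beta(\mathcal{B})$, for $1\le\beta<\omega_1$, be the levels of the Borel hierarchy over $\mathcal{B}$: countable unions of members of $\mathcal{B}$, then alternately complements and countable unions. Then $\sigma(\mathcal{B})=\bigcup_{\beta<\omega_1}\mathcal{H}_\beta(\mathcal{B})$. Intersecting with $\Omega_\alpha$ commutes with countable unions and relative complements, and $\{B\cap\Omega_\alpha\mid B\in\mathcal{B}\}=\mathcal{B}_\alpha$. Hence $N\in\mathcal{H}_\beta(\mathcal{B})$ forces $N\cap\Omega_\alpha\in\mathcal{H}_\beta(\mathcal{B}_\alpha)$ for every $\alpha$, with the same $\beta$. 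So $\sigma(\mathcal{B})$ imposes exactly the uniformity across components that you rightly note $\Sigma$ lacks.

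Here is a concrete counterexample. Take $I=\omega_1$, and let each $\Omega_\alpha$ be Cantor space with $\mathcal{B}_\alpha$ its countable clopen algebra and a constant kernel (so $F_r(\mathcal{B}_\alpha)\subseteq\mathcal{B}_\alpha$ holds). Choose Borel sets $N_\alpha\subseteq\Omega_\alpha$ of unbounded Borel rank. Then $\biguplus_\alpha N_\alpha\in\Sigma\setminus\sigma(\mathcal{B})$.

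Your claim does hold for countable $I$. For each $j$, the members of $\Sigma_j$ lying in $\sigma(\mathcal{B})$ form a $\sigma$-algebra on $\Omega_j$ containing $\mathcal{B}_j$, and $\biguplus_i N_i$ is then a countable union.

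For uncountable $I$, a model with valuations in $\mathcal{B}$ should instead be based on the event sub-process $\langle\Omega,\sigma(\mathcal{B}),T\upharpoonright\sigma(\mathcal{B})\rangle$. This is still a Markov process: $F_r(\mathcal{B})\subseteq\mathcal{B}$ together with a monotone-class argument gives $\sigma(\mathcal{B})$-measurability of $T(\cdot,N)$ for every $N\in\sigma(\mathcal{B})$.

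None of this affects the lemma as stated, which is formulated with the $\Sigma_i$.
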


%\begin{proof}
%	Straightforward.
%\end{proof}

\begin{lemma}
	\begin{enumerate}
		\item Let $\{\mathfrak{M}_i\}_{i \in I}$ be a family of Markov models and let $\varphi$
		be a formula. Then for each $w \in \Omega_i$,
		\[
		\biguplus_{i \in I} \mathfrak{M}_i, w \vDash \varphi
		\quad \text{iff} \quad
		\mathfrak{M}_i, w \vDash \varphi.
		\]
		
		\item Let $\{\mathfrak{P}_i\}_{i \in I}$ be a family of Markov processes. Then for every
		formula $\varphi$,
		\[
		\biguplus_{i \in I} \mathfrak{P}_i \vDash \varphi
		\quad \text{iff} \quad
		\mathfrak{P}_i \vDash \varphi \text{ for all } i \in I.
		\]
	\end{enumerate}
\end{lemma}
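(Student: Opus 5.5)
Part 1 goes by induction on the structure of $\varphi$, simultaneously for all $i \in I$ and all $w \in \Omega_i$. The key auxiliary claim, proved along with it, is that for every formula $\varphi$ the truth set in the disjoint union decomposes componentwise:
\[
[\![\varphi]\!]_{\mathfrak{M}} = \biguplus_{i\in I} [\![\varphi]\!]_{\mathfrak{M}_i}.
\]
This is just a restatement of Part 1, since $\Omega=\biguplus_i\Omega_i$.

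The cases run as follows.
\begin{itemize}
\item \emph{Propositional variables.} The base case $p$ follows from the definition $v(p)=\biguplus_i v_i(p)$: a state $w\in\Omega_i$ lies in $v(p)$ iff it lies in $v_i(p)$.
\item \emph{Boolean connectives.} The cases $\neg$ and $\land$ are immediate from the induction hypothesis, because the satisfaction clauses are pointwise.
\item \emph{The modal operator $L_r$.} Take $w \in \Omega_j$. By the induction hypothesis, $[\![\varphi]\!]_{\mathfrak{M}}=\biguplus_i N_i$ with $N_i=[\![\varphi]\!]_{\mathfrak{M}_i}\in\Sigma_i$. This is a generator of $\Sigma$, so $T$ is evaluated on it directly by the defining clause of Definition~\ref{diun}: $T(w,[\![\varphi]\!]_{\mathfrak{M}})=T_j(w,N_j)$. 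Hence $\mathfrak{M},w\vDash L_r\varphi$ iff $T_j(w,[\![\varphi]\!]_{\mathfrak{M}_j})\ge r$ iff $\mathfrak{M}_j,w\vDash L_r\varphi$.
\end{itemize}
The only subtlety is that the truth sets remain of the form $\biguplus_i N_i$ with each $N_i\in\Sigma_i$, so that the defining clause of $T$ applies. The displayed decomposition guarantees exactly this, and I would prove it together with Part 1 in the same induction.

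For Part 2, the right-to-left direction runs as follows. Suppose each $\mathfrak{P}_i\vDash\varphi$, and let $v$ be any valuation on $\biguplus_i\mathfrak{P}_i$. I need each restriction $v_i(p)=v(p)\cap\Omega_i$ to be admissible in $\mathfrak{P}_i$, that is, to lie in $\Sigma_i$. This is where the main obstacle lies. Consider the collection of sets $N\in\Sigma$ with $N\cap\Omega_i\in\Sigma_i$ for all $i$. It is a $\sigma$-algebra and contains the generating algebra, so it is all of $\Sigma$. (If the valuation is required to land in a generating algebra $\mathcal{B}$, the analogous check is that the generators $\biguplus_i B_i$ restrict to elements $B_i$ of $\mathcal{B}_i$; in that case one takes the combined generating algebra to consist of exactly such sets.) With this in hand, $\mathfrak{M}=\biguplus_i\mathfrak{M}_i$ where $\mathfrak{M}_i=\langle\mathfrak{P}_i,v_i\rangle$, since $v(p)=\biguplus_i v_i(p)$. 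Part 1 then gives that $\varphi$ holds at every $w\in\Omega_i$, for every $i$, so $\varphi$ holds everywhere in $\mathfrak{M}$.

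For the left-to-right direction, fix $j\in I$ and a valuation $v_j$ on $\mathfrak{P}_j$. Extend it to the union by choosing arbitrary valuations $v_i$ on the other components, for example $v_i(p)=\emptyset$. The resulting $v(p)=\biguplus_i v_i(p)$ is a generator, hence admissible. Then $\mathfrak{P}\vDash\varphi$ together with Part 1 yields $\mathfrak{M}_j\vDash\varphi$.
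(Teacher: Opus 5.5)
Your proof is correct and is the routine induction on formulas that the paper has in mind (the paper states this lemma without proof), with Part 2 obtained from Part 1 by splitting a valuation on the union into its restrictions to the components and recombining them. One simplification: the family $\{\biguplus_{i\in I} N_i \mid N_i\in\Sigma_i\}$ is already closed under complements and countable unions, so it equals $\Sigma$; this makes the admissibility of the restrictions $v(p)\cap\Omega_i$ and the direct use of the defining clause of $T$ immediate, with no good-sets argument needed.
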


Next, the notion of zigzag morphisms are presented that gives rise to an important notion of similarity between two Markov processes.

\begin{definition}[Zigzag morphisms]
	Let $\mathfrak{P} = \langle \Omega, \Sigma, T \rangle$ and $\mathfrak{P}' = \langle \Omega', \Sigma', T' \rangle$ be two Markov processes. A surjective measurable function $f: \Omega \to \Omega'$ is called a {\em zigzag  morphism} from $\mathfrak{P}$ to $\mathfrak{P}'$ if it satisfies the condition:
	\[T(w, f^{-1}(N')) = T'(f(w), N'),\] for each $w \in \Omega$ and $N' \in \Sigma'$.
\end{definition}

A zigzag morphism from $\mathfrak{M}  =\langle\mathfrak{P}, v \rangle$ to $\mathfrak{M}' = \langle\mathfrak{P}', v' \rangle$ is a zigzag  morphism $f$  from $\mathfrak{P}$ to $\mathfrak{P}'$  such that $v(p)= f^{-1} (v'(p))$, for each $p\in \mathbb{P}$.

Subsequently, the following preservation property can be derived.

\begin{lemma}\label{zigzag lem}
	Let $\mathfrak{P} = \langle \Omega, \Sigma, T \rangle$ and
	$\mathfrak{P}' = \langle \Omega', \Sigma', T' \rangle$ be two Markov processes.
	If there exists a zigzag morphism between $\mathfrak{P}$ and $\mathfrak{P}'$,
	then, $\mathfrak{P} \vDash \varphi \; \text{implies} \; \mathfrak{P}' \vDash \varphi$,
	for every formula $\varphi$.
\end{lemma}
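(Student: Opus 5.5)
We prove the contrapositive: assuming $\mathfrak{P}' \nvDash \varphi$, we show $\mathfrak{P} \nvDash \varphi$. Fix a zigzag morphism $f:\Omega\to\Omega'$. Take a valuation $v'$ on $\mathfrak{P}'$ and a state $w'\in\Omega'$ with $\mathfrak{M}',w'\nvDash\varphi$, where $\mathfrak{M}'=\langle\mathfrak{P}',v'\rangle$. We pull the valuation back along $f$, setting $v(p)=f^{-1}(v'(p))$ for every $p\in\mathbb{P}$. Since $f$ is measurable, $v(p)\in\Sigma$, so $\mathfrak{M}=\langle\mathfrak{P},v\rangle$ is a model and $f$ is a zigzag morphism of models. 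If one insists on the presentation $\langle\Omega,\mathcal{B},T\rangle$ with valuations into $\mathcal{B}$, Proposition~\ref{validity} lets us work with the full $\sigma$-algebras instead. We then use surjectivity of $f$ to pick $w\in\Omega$ with $f(w)=w'$.

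The core step is a truth-preservation lemma, proved by induction on formulas:
\[
\mathfrak{M},u\vDash\psi \iff \mathfrak{M}',f(u)\vDash\psi \quad\text{for all } u\in\Omega,
\]
which is equivalent to $[\![\psi]\!]_{\mathfrak{M}}=f^{-1}([\![\psi]\!]_{\mathfrak{M}'})$. The propositional case is exactly the definition of $v$. The Boolean cases follow because $f^{-1}$ commutes with complement and intersection. For $L_r\psi$, the induction hypothesis gives $[\![\psi]\!]_{\mathfrak{M}}=f^{-1}([\![\psi]\!]_{\mathfrak{M}'})$. We need $[\![\psi]\!]_{\mathfrak{M}'}\in\Sigma'$; this follows by a parallel induction, since $F_r$ of a measurable set is measurable by measurability of $T'(\cdot,N)$. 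The zigzag condition then yields
\[
T(u,[\![\psi]\!]_{\mathfrak{M}})=T\bigl(u,f^{-1}([\![\psi]\!]_{\mathfrak{M}'})\bigr)=T'\bigl(f(u),[\![\psi]\!]_{\mathfrak{M}'}\bigr),
\]
so one side is $\geq r$ exactly when the other is.

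Applying the lemma at $w$ gives $\mathfrak{M},w\nvDash\varphi$, contradicting $\mathfrak{P}\vDash\varphi$. The only point needing care is the measurability bookkeeping in the modal step: truth sets in $\mathfrak{M}'$ must lie in $\Sigma'$ for the zigzag equation to apply, and the pulled-back valuation must be admissible. Both follow from measurability of $f$ and of $T'(\cdot,N)$. Everything else is routine.
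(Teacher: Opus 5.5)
Your proof is correct and follows the same route as the paper: equip $\mathfrak{P}$ with the pulled-back valuation, use invariance of truth under zigzag morphisms, and use surjectivity of $f$ to reach every state of $\mathfrak{P}'$. You also make explicit two points the paper leaves implicit: that $v=f^{-1}\circ v'$ is an admissible valuation, so that an arbitrary model on $\mathfrak{P}'$ is the image of a model on $\mathfrak{P}$; and the inductive proof of invariance, including the measurability of truth sets needed to apply the zigzag equation.
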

\begin{proof}%[Proof of Lemma~\ref{zigzag lem}]
	Assume that $\mathfrak{P} \vDash \varphi$.
	Let $\mathfrak{M} = \langle \mathfrak{P}, v \rangle$ and
	$\mathfrak{M}' = \langle \mathfrak{P}', v' \rangle$ be arbitrary models,
	and let $f : \mathfrak{M} \to \mathfrak{M}'$ be a zigzag morphism.
	By the invariance of satisfaction under zigzag morphisms, for every world
	$w \in \Omega$ we have
	\[
	\mathfrak{M}, w \vDash \varphi
	\; \text{if and only if} \;
	\mathfrak{M}', f(w) \vDash \varphi .
	\]
	Since $\mathfrak{P} \vDash \varphi$, it follows that
	$\mathfrak{M}, w \vDash \varphi$ for all $w \in \Omega$, and hence
	$\mathfrak{M}', w' \vDash \varphi$ for all $w' \in \Omega'$.
	Therefore, $\mathfrak{P}' \vDash \varphi$.
\end{proof}
%%
%%
%%%%%%%%%%%%%%%%%%%%%%%%%%%%%%%%%%%%%%%%
%%%%%%%%%%%%%%%%%%%%%%%%%%%%%%%%%%%%%%%%
\subsection{Ultrafilter Extensions}
We now introduce ultrafilter extensions for countably generated Markov processes, a construction that parallels the classical ultrafilter extension of Kripke structures. Throughout this section, we assume familiarity with ultrafilters on Boolean algebras. %For a comprehensive account, see, for example, \cite{stone1937applications}.

We recall from \cite{kuter:model13} the definition of the \emph{ultrafilter limit} of a sequence $(r_w)_{w \in \Omega}$ of real numbers with respect to an ultrafilter $\mathfrak{u}$ over a Boolean algebra $\mathcal{B}$. This limit, denoted by $\lim_\mathfrak{u} r_w$, is defined as the unique real number $r \in [0,1]$ such that
\[
\{ w \in \Omega \mid |r_w - r| < \varepsilon \} \in \mathfrak{u}
\quad \text{for every } \varepsilon > 0.
\]

\begin{fact}\label{lim}
	Let $(r_w)_{w \in \Omega}$ be a sequence of real numbers in $[0,1]$, and let $\mathfrak{u}$ be an ultrafilter over $\Omega$. Then the following properties hold:
	\begin{enumerate}
		\item If $\{ w \in \Omega \mid r_w \geq r \} \in \mathfrak{u}$, then $\lim_\mathfrak{u} r_w \geq r$.
		\item If $\lim_\mathfrak{u} r_w > r$, then $\{ w \in \Omega \mid r_w > r \} \in \mathfrak{u}$.
		\item $\lim_\mathfrak{u} r_w \geq r$ if and only if $\{ w \in \Omega \mid r_w \geq s \} \in \mathfrak{u}$ for every $s < r$.
	\end{enumerate}
\end{fact}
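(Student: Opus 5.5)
The plan is to work directly from the definition of the ultrafilter limit: $r:=\lim_\mathfrak{u} r_w$ is the unique number with $A_\varepsilon:=\{w \mid |r_w - r|<\varepsilon\}\in\mathfrak{u}$ for every $\varepsilon>0$. I will take existence and uniqueness of this limit as given, as the paper does by citing \cite{kuter:model13}; for uniqueness one only needs that two distinct limits would give disjoint sets $A_\varepsilon$ both lying in $\mathfrak{u}$. Each item then follows from two basic ultrafilter facts: $\mathfrak{u}$ is closed under finite intersections and does not contain $\emptyset$, and $\mathfrak{u}$ is upward closed. If $\mathfrak{u}$ lives on a Boolean algebra $\mathcal{B}$ rather than on all of $\mathcal{P}(\Omega)$, I will tacitly assume that the relevant sets $\{w\mid r_w\geq s\}$ and $\{w \mid r_w>s\}$ lie in $\mathcal{B}$. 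This holds in the intended application $r_w=T(w,N)$, since by Definition~\ref{coun gen def} the sets $F_r(N)$ are in $\mathcal{B}$.

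For (1), I argue by contradiction. Suppose $B:=\{w\mid r_w\geq r'\}\in\mathfrak{u}$ but $r<r'$. Put $\varepsilon=r'-r>0$. Then $A_\varepsilon\in\mathfrak{u}$, and every $w\in A_\varepsilon$ satisfies $r_w<r+\varepsilon=r'$. Hence $A_\varepsilon\cap B=\emptyset$, which is impossible because this intersection must belong to $\mathfrak{u}$.

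For (2), suppose $r>r'$ and set $\varepsilon=r-r'$. Every $w\in A_\varepsilon$ has $r_w>r-\varepsilon=r'$, so $A_\varepsilon\subseteq\{w\mid r_w>r'\}$. Upward closure then puts the latter set in $\mathfrak{u}$.

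For (3), suppose first that $\lim_\mathfrak{u} r_w\geq r$ and $s<r$. By (2), $\{w\mid r_w>s\}\in\mathfrak{u}$. This set is contained in $\{w\mid r_w\geq s\}$, so upward closure gives the claim. Conversely, suppose $\{w\mid r_w\geq s\}\in\mathfrak{u}$ for every $s<r$. Then (1) gives $\lim_\mathfrak{u} r_w\geq s$ for all $s<r$, and therefore $\lim_\mathfrak{u} r_w\geq r$. If only rational $s$ are allowed, as for the sets $F_s(N)$, I use density of the rationals to approach $r$.

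None of these steps is a real obstacle. The only delicate point is the measurability/membership issue in the Boolean-algebra setting. It affects the upward-closure step in (3), where we need $\{w\mid r_w\geq s\}\in\mathcal{B}$. In the application this is guaranteed by the standing assumption $F_s(N)\in\mathcal{B}$ for rational $s$, which is why I will state (3) for rational $s$ if needed.
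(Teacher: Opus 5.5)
Your proof is correct. Items (1) and (2) are the direct $\varepsilon$-arguments from the definition of $\lim_\mathfrak{u}$, using closure of $\mathfrak{u}$ under finite intersections and supersets, and (3) follows from them as you say, with density of $\mathbb{Q}_0$ handling the rational case. The paper gives no proof of this Fact at all; it simply recalls it from \cite{kuter:model13}, and your argument is the standard one.

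One refinement of your side remark on the Boolean-algebra setting. It is true that $\{w\mid T(w,N)>s\}\in\mathcal{B}$ for rational $s$. However, this needs the complement identity $\{w\mid T(w,N)>s\}=\Omega\setminus F_{1-s}(\Omega\setminus N)$, not just $F_s(N)\in\mathcal{B}$.

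The subtlety you did not flag concerns the sets $A_\varepsilon$ in the definition of the limit itself. These need not belong to $\mathcal{B}$, since for all but countably many $\varepsilon$ their endpoints are irrational. So for an ultrafilter on $\mathcal{B}$, the limit has to be read through the filter generated by $\mathfrak{u}$: ``$X\in\mathfrak{u}$'' means that $X$ contains a member of $\mathfrak{u}$. Equivalently, $\lim_\mathfrak{u}T(w,N)=\sup\{s\in\mathbb{Q}_0\mid F_s(N)\in\mathfrak{u}\}$.

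Under that reading your argument goes through unchanged. The only step that needs a genuine element of $\mathcal{B}$ is the final upward-closure step in (3), and there the set is $\{w\mid r_w\geq s\}=F_s(N)$ with $s$ rational.
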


\begin{definition}
	Let $\mathfrak{P} = \langle \Omega, \mathcal{B}, T \rangle$ be a countably generated Markov process.
	We define the structure $\langle \mathcal{U}^*, \mathcal{B}^*, K^* \rangle$ as follows:
	\begin{itemize}
		\item $\mathcal{U}^*$ is the set of all ultrafilters on $\mathcal{B}$;
		\item $\mathcal{B}^* := \{ A^* \mid A \in \mathcal{B} \}$, where
		\[
		A^* := \{ \mathfrak{u} \in \mathcal{U}^* \mid A \in  \mathfrak{u}\};
		\]
		\item $K^* : \mathcal{U}^* \times \mathcal{B}^* \to [0,1]$ is defined by
		\[
		K^*(\mathfrak{u}, A^*) := \lim_\mathfrak{u} T(w, A).
		\]
	\end{itemize}
\end{definition}

\begin{lemma}\label{lem basic ue}
	The triple $\langle \mathcal{U}^*, \mathcal{B}^*, K^* \rangle$ satisfies the following properties:
	\begin{enumerate}
		\item $\mathcal{B}^*$ is a countable Boolean algebra over $\mathcal{U}^*$. Moreover, $\mathcal{B}^*$ forms a basis of clopen sets for the compact Hausdorff Stone topology $\beta(\mathcal{U}^*)$;
		\item for each $\mathfrak{u} \in \mathcal{U}^*$, the function $K^*(\mathfrak{u}, \cdot) : \mathcal{B}^* \to [0,1]$ is a pre-measure.
	\end{enumerate}
\end{lemma}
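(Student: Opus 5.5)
Part 1 is mostly standard Stone duality. The map $A \mapsto A^*$ is a Boolean homomorphism from $\mathcal{B}$ into the power set of $\mathcal{U}^*$, because ultrafilters are prime. This gives $(A\cap B)^* = A^*\cap B^*$, $(\Omega\setminus A)^* = \mathcal{U}^*\setminus A^*$, $\emptyset^*=\emptyset$ and $\Omega^*=\mathcal{U}^*$. The image $\mathcal{B}^*$ is therefore a Boolean algebra of subsets of $\mathcal{U}^*$. It is countable, since $\mathcal{B}$ is countable and the map is onto $\mathcal{B}^*$; it is in fact injective by the ultrafilter lemma. The topology $\beta(\mathcal{U}^*)$ is by definition generated by the sets $A^*$. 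These sets are closed under finite intersections, so they form a basis, and each is clopen because its complement $(\Omega\setminus A)^*$ is again basic. Hausdorffness: two distinct ultrafilters $\mathfrak{u}\neq\mathfrak{v}$ are separated by some $A\in\mathfrak{u}\setminus\mathfrak{v}$, and then $A^*$ and $(\Omega\setminus A)^*$ are disjoint basic neighbourhoods. Compactness is the usual Stone argument. Given a cover by basic sets $A_i^*$ with no finite subcover, the family $\{\Omega\setminus A_i\}$ has the finite intersection property in $\mathcal{B}$. It therefore extends to an ultrafilter, and that ultrafilter lies in no $A_i^*$, a contradiction.

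For Part 2 I first check that $K^*$ is well defined. Injectivity of $A\mapsto A^*$ guarantees this, and the ultrafilter limit exists and is unique by compactness of $[0,1]$. Here the limit is taken in the sense recalled above: the relevant index sets are $\{w \mid |T(w,A)-r|<\varepsilon\}$, and these must lie in $\mathcal{B}$ to be tested against $\mathfrak{u}$. To arrange this I use the standing assumption $F_r(A)\in\mathcal{B}$ for rational $r$. Each set $\{w \mid T(w,A) > s\}$ is a countable union of the sets $F_r(A)$, so it is not itself guaranteed to be in $\mathcal{B}$. I therefore work with the closed-type sets $F_r(A)$ and their complements only, and approximate $\varepsilon$-intervals by finite Boolean combinations of them. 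Existence then follows by setting $r:=\sup\{q\in\mathbb{Q}_0 \mid F_q(A)\in\mathfrak{u}\}$, and the properties in Fact~\ref{lim} follow.

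Next, finite additivity and normalisation. We have $K^*(\mathfrak{u},\mathcal{U}^*)=\lim_\mathfrak{u} T(w,\Omega)=1$, and also $K^*(\mathfrak{u},\emptyset)=0$. If $A^*\cap B^*=\emptyset$ then $A\cap B=\emptyset$, so $T(w,A\cup B)=T(w,A)+T(w,B)$ for every $w$. Ultrafilter limits are additive, which can be checked via Fact~\ref{lim} with rational thresholds, so $K^*(\mathfrak{u},\cdot)$ is a finitely additive probability on $\mathcal{B}^*$.

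The remaining point, and the only substantive one, is countable additivity on the algebra $\mathcal{B}^*$. Suppose $A^* = \biguplus_n A_n^*$ with all $A_n^*\in\mathcal{B}^*$. The set $A^*$ is closed in a compact space, hence compact, and the $A_n^*$ are open and pairwise disjoint. Some finite subfamily already covers $A^*$, so by disjointness all but finitely many of the $A_n^*$ are empty. Countable additivity therefore reduces to finite additivity. I expect this compactness step to be the main obstacle to state cleanly, since it is exactly what makes $K^*(\mathfrak{u},\cdot)$ a pre-measure despite $T$ being only a limit. Once it is in place, the Carathéodory extension to $\sigma(\mathcal{B}^*)$ becomes available for later use.
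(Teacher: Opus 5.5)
Your proposal is correct and follows the paper's proof. Part 1 goes through the Boolean homomorphism $A\mapsto A^*$ and standard Stone duality. Part 2 combines finite additivity with the compactness fact that a clopen set of $\beta(\mathcal{U}^*)$ cannot be split into infinitely many nonempty clopen pieces; the paper states this as continuity at $\emptyset$ for a decreasing sequence $B_n^*$, and you state it as ``only finitely many $A_n^*$ in a disjoint decomposition of $A^*$ are nonempty,'' which is the same fact. Your extra checks fill in details the paper leaves implicit, and they are sound: injectivity of $A\mapsto A^*$ makes $K^*$ well defined, and testing the ultrafilter limit only against the sets $F_q(A)\in\mathcal{B}$ avoids the non-$\mathcal{B}$ sets $\{w \mid |T(w,A)-r|<\varepsilon\}$.
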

\begin{proof}%[Proof of Lemma~\ref{lem basic ue}]
	\begin{enumerate}
		\item For $A,B \in \mathcal{B}$, the following equalities hold:
		\[
		A^* \cap B^* = (A \cap B)^*, \qquad
		A^* \cup B^* = (A \cup B)^*, \qquad
		(A^*)^c = (A^c)^*,
		\]
		since elements of $\mathcal{U}^*$ are ultrafilters. The remaining claims follow from standard results in Stone duality; see, for example, \cite{Johnstone1982}.
		\item It is straightforward to verify that $K^*(\mathfrak{u},\cdot)$ is finitely additive.
		Now suppose $(B_n^*)_{n \in \mathbb{N}}$ is a decreasing sequence in $\mathcal{B}^*$ such that
		\[
		\bigcap_{n=1}^\infty B_n^* = \varnothing.
		\]
		Since each $B_n^*$ is clopen in the compact space $\beta(\mathcal{U}^*)$, there exists $N \in \mathbb{N}$ such that $B_n^* = \varnothing$ for all $n > N$. Consequently,
		\[
		K^*(\mathfrak{u}, B_n^*) = 0 \quad \text{for all } n > N,
		\]
		and hence $\lim_{n \to \infty} K^*(\mathfrak{u}, B_n^*) = 0$.
	\end{enumerate}
\end{proof}

By Carathéodory’s Extension Theorem, the pre-measure $K^*$ extends uniquely to a Markov kernel
\[
T^* : \mathcal{U}^* \times \Sigma^* \to [0,1],
\]
where $\Sigma^*$ is the Borel $\sigma$-algebra generated by $\mathcal{B}^*$.

\begin{corollary}\label{P*}
	Let $\mathfrak{P} = \langle \Omega, \mathcal{B}, T \rangle$ be a countably generated Markov process.
	Then the structure
	\[
	\mathfrak{P}^* = \langle \mathcal{U}^*, \mathcal{B}^*, T^* \rangle
	\]
	is a countably generated Stone--Markov process.
\end{corollary}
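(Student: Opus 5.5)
We need to check four things: $T^*$ is a Markov kernel on $\langle\mathcal{U}^*,\Sigma^*\rangle$; $\mathcal{B}^*$ is a countable generating algebra; the closure condition $F_r(A^*)\in\mathcal{B}^*$ holds; and $\langle\mathcal{U}^*,\tau_{\mathcal{B}^*}\rangle$ has the Baire property.

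\textbf{Kernel and countable generation.} By Lemma~\ref{lem basic ue}(2), each $K^*(\mathfrak{u},\cdot)$ is a pre-measure on the algebra $\mathcal{B}^*$, and $K^*(\mathfrak{u},\mathcal{U}^*)=\lim_\mathfrak{u} T(w,\Omega)=1$. Carath\'eodory's theorem then gives a unique probability measure $T^*(\mathfrak{u},\cdot)$ on $\Sigma^*=\sigma(\mathcal{B}^*)$. Countability of $\mathcal{B}^*$ is Lemma~\ref{lem basic ue}(1), since $A\mapsto A^*$ is a surjection from the countable algebra $\mathcal{B}$, and $\Sigma^*=\sigma(\mathcal{B}^*)$ holds by definition.

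\textbf{Closure condition.} I will prove directly that
\[
F^*_r(A^*)=\{\mathfrak{u}\mid T^*(\mathfrak{u},A^*)\geq r\} = \Big(\bigcap_{s<r,\,s\in\mathbb{Q}_0} F_s(A)\Big)^*
\]
in a suitable sense. By Fact~\ref{lim}(3), $\lim_\mathfrak{u}T(w,A)\ge r$ iff $F_s(A)\in\mathfrak{u}$ for every rational $s<r$. The intersection over $s<r$ need not lie in $\mathcal{B}$, so this is the main obstacle. The better route is to show that $F_r(A)\in\mathfrak{u}$ iff $\lim_\mathfrak{u}T(w,A)\ge r$, which would give $F^*_r(A^*)=(F_r(A))^*\in\mathcal{B}^*$.

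The forward direction is Fact~\ref{lim}(1). The converse can fail for a bare ultrafilter: if $T(w,A)$ approaches $r$ from below along $\mathfrak{u}$, the limit equals $r$ while $F_r(A)\notin\mathfrak{u}$. I would therefore handle this in one of two ways. The first is to read $K^*$ through the ultrafilter membership itself, defining the limit via the sets $F_s(A)$; this is exactly how the limit is formulated relative to a Boolean algebra in the paper, with the limit taken in $\mathfrak{u}$ over sets of the form $F_s(A)$. The second, as in \cite{kozen2013stone}, is to set $T^*(\mathfrak{u},A^*)=\sup\{s\mid F_s(A)\in\mathfrak{u}\}$ and to note that the sets $\{\mathfrak{u}\mid T^*(\mathfrak{u},A^*)\ge r\}=\bigcap_{s<r}F_s(A)^*$ are closed, hence measurable. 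This yields measurability of $T^*(\cdot,A^*)$ for $A^*\in\mathcal{B}^*$. A monotone class argument ($\pi$–$\lambda$) then extends measurability to all of $\Sigma^*$, so $T^*$ is a Markov kernel.

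For the convention $F_r(A^*)\in\mathcal{B}^*$, I would argue with rational $r$ and use the Aumann-style axioms $F_r(A)=\bigcap_{s<r}F_s(A)$ holding in $\mathcal{B}$. These hold in $\mathfrak{P}$ because $T(w,\cdot)$ is real-valued, but membership of the intersection in an ultrafilter is not automatic. I expect to weaken the requirement to ``$F_r(A^*)$ is a countable intersection of elements of $\mathcal{B}^*$,'' or else to adopt Kozen's rule and define the ultrafilter limit so that equality holds. This is the step I regard as hardest.

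\textbf{Stone--Markov property.} $\tau_{\mathcal{B}^*}$ is the Stone topology, which is compact Hausdorff by Lemma~\ref{lem basic ue}(1), and it is second countable since $\mathcal{B}^*$ is a countable clopen base. By the remark after the definition of Stone--Markov processes, a compact separable space is saturated, and Baire's theorem for compact Hausdorff spaces supplies the Baire property. Hence $\mathfrak{P}^*$ is a countably generated Stone--Markov process.
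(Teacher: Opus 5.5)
Your measurability argument is the paper's proof. Fact~\ref{lim}(3) gives $\{\mathfrak{u}\mid T^*(\mathfrak{u},A^*)\ge r\}=\bigcap_{s<r}(F_s(A))^*$, a countable intersection of clopen sets and hence a member of $\Sigma^*$. The paper's ``without loss of generality $N=B^*$'' is exactly your $\pi$--$\lambda$ extension, and the paper leaves the Stone--Markov (Baire) part implicit in Lemma~\ref{lem basic ue}(1). Spelling out both steps is an improvement. Reading $\lim_{\mathfrak{u}}T(w,A)$ as $\sup\{s\mid F_s(A)\in\mathfrak{u}\}$ is also the right way to make sense of the limit for an ultrafilter on $\mathcal{B}$, because the sets $\{w\mid |T(w,A)-r|<\varepsilon\}$ need not lie in $\mathcal{B}$.

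On the closure convention $F_r(A^*)\in\mathcal{B}^*$ you have found a real problem, and the paper does not resolve it. Its proof only shows $F_r(B^*)\in\Sigma^*$, which is your weakened version. The strong version is false on $\mathcal{U}^*$ in general, so your other fallback, redefining the limit so that equality holds, cannot work.

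\begin{itemize}
\item Suppose $F_r(A^*)=C^*$. Fact~\ref{lim}(1) and (3) give $(F_r(A))^*\subseteq C^*\subseteq (F_s(A))^*$ for all $s<r$, hence $F_r(A)\subseteq C\subseteq\bigcap_{s<r}F_s(A)=F_r(A)$.
\item Then $\bigcap_{s<r}(F_s(A)\setminus F_r(A))^*=\emptyset$. By compactness, $F_s(A)=F_r(A)$ for some rational $s<r$.
\item This fails, for example, for $\Omega=[0,1]$ with $T(w,\cdot)=w\delta_1+(1-w)\delta_0$, $\mathcal{B}$ the finite unions of intervals with rational endpoints (which is closed under every $F_q$), $A=\{1\}$ and $r=1/2$.
\item Take an ultrafilter $\mathfrak{u}$ that contains every $F_s(A)$ with $s<r$ but not $F_r(A)$. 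It satisfies $\{q\in\mathbb{Q}_0\mid F_q(A)\in\mathfrak{u}\}=\mathbb{Q}_0\cap[0,r)$. This set is not of the form $\{q\mid c\ge q\}$ for any real $c$, so no value of $T^*(\mathfrak{u},A^*)$ makes ``$\ge q$ iff $F_q(A)\in\mathfrak{u}$'' true.
\end{itemize}

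This is exactly what the nested Archimedean condition excludes. Lemma~\ref{star}(3) gives $F_r(B^*)\cap\mathcal{U}^*_{na}=(F_r(B))^*\cap\mathcal{U}^*_{na}$, so the convention holds for $\mathfrak{Ue(P)}$ with $\mathcal{B}^*_{na}$, not for $\mathfrak{P}^*$ with $\mathcal{B}^*$. You should therefore settle on the weakened reading, which is what is actually provable. One small correction: the obstacle is not that $\bigcap_{s<r}F_s(A)$ lies outside $\mathcal{B}$, since it equals $F_r(A)$. The obstacle is that $A\mapsto A^*$ does not preserve countable intersections.
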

\begin{proof}%[Proof of Corollary~\ref{P*}]]
	It suffices to verify that for each $N \in \Sigma^*$, the function
	\[
	T^*(\cdot, N) : \mathcal{U}^* \to [0,1]
	\]
	is measurable. That is, for every $r \in \mathbb{Q}_0$,
	\[
	\{ \mathfrak{u} \in \mathcal{U}^* \mid T^*(\mathfrak{u}, N) \geq r \} \in \Sigma^*.
	\]
	Without loss of generality, assume $N = B^*$ for some $B \in \mathcal{B}$. By Fact~\ref{lim}(3), we have
	\[
	F_r(B^*)
	= \{ \mathfrak{u} \in \mathcal{U}^* \mid T^*(\mathfrak{u}, B^*) \geq r \}
	= \bigcap_{s<r} (F_s(B))^*.
	\]
	Since $(F_s(B))^* \in \Sigma^*$ for all $s < r$, it follows that $F_r(B^*) \in \Sigma^*$.
\end{proof}

Next, we isolate the key notion of nested Archimedean ultrafilters, which constitute the basic building blocks of ultrafilter extensions.
%{\color{orange}Recall the definition of $F_r$ from Definition~\ref{coun gen def}. Let $F_{r_1\dots r_n}=F_{r_1}\dots F_{r_n}$.}\rhn{Or add it above, when $F_r$ is defined.}
\begin{definition}
	Let $\mathfrak{P} = \langle \Omega, \mathcal{B}, T \rangle$ be a countably generated Markov process.
	An ultrafilter $\mathfrak{u}$ on $\mathcal{B}$ is called a \emph{nested Archimedean ultrafilter} if, for all
	$t_1,\dots,t_n,r \in \mathbb{Q}_0$ and all $A \in \mathcal{B}$, the following condition holds:
	whenever
	\[
	F_{t_1 \dots t_n s}(A) \in \mathfrak{u} \quad \text{for all } s < r,
	\]
	then
	\[
	F_{t_1 \dots t_n r}(A) \in \mathfrak{u}.
	\]
\end{definition}

Let $\mathcal{U}^*_{na}$ denote the set of all nested Archimedean ultrafilters. We show that $\mathcal{U}^*_{na}$ forms a $G_{\delta}$-subspace of $\mathcal{U}^*$  such that $T^*(u, \mathcal{U}^*_{na}) = 1$, for each $\mathfrak{u}\in \mathcal{U}^*_{na}$.

\begin{lemma}\label{star}
	The following properties hold.
	\begin{enumerate}
		\item $\mathcal{U}^*_{na}$ forms a dense $G_{\delta}$-subspace of $\mathcal{U}^*$;
		\item $T^*(\mathfrak{u}, \mathcal{U}^*_{na}) = 1$, for each $\mathfrak{u}\in \mathcal{U}^*_{na}$;
		\item $\{\mathfrak{u}\in \mathcal{U}^*_{na}\;|\; T^*(\mathfrak{u}, B^*) \geq r \} = (F_r(B))^* \cap \mathcal{U}^*_{na}$, for each $B \in \mathcal{B}$.
	\end{enumerate}
\end{lemma}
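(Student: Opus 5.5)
Throughout I write $T^*(\mathfrak{u},B^*)=\lim_\mathfrak{u} T(w,B)$, by the definition of $K^*$.

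\textbf{Part 1.} I will write the complement of $\mathcal{U}^*_{na}$ as a countable union of closed nowhere dense sets. For each tuple $t_1,\dots,t_n,r$ and each $A\in\mathcal{B}$, the failure set is
\[
D_{\bar t,r,A}=\bigcap_{s<r,\,s\in\mathbb{Q}_0}(F_{\bar t s}(A))^*\setminus (F_{\bar t r}(A))^*.
\]
This set is closed, being an intersection of clopen sets minus a clopen set. There are only countably many such sets, since $\mathcal{B}$ and $\mathbb{Q}_0$ are countable. Hence $\mathcal{U}^*_{na}=\bigcap D_{\bar t,r,A}^c$ is $G_\delta$.

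For density (via Baire in the compact Hausdorff space $\mathcal{U}^*$), it suffices to show each $D$ has empty interior. Suppose instead that a basic clopen $C^*\neq\emptyset$ lies in $D$. Then $C\subseteq F_{\bar t s}(A)$ for all $s<r$, since $C^*\subseteq (F_{\bar t s}(A))^*$ forces $C\subseteq F_{\bar t s}(A)$ by Stone duality. It follows that $C\subseteq\bigcap_{s<r}F_{\bar t s}(A)=F_{\bar t r}(A)$. This intersection identity holds in $\Omega$ because $F_{t_n}$ commutes with decreasing intersections of the form $\{w: T(w,X)\ge s\}$, and $F_{t_1},\dots,F_{t_{n-1}}$ preserve such countable intersections by continuity from above of measures. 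Therefore $C^*\subseteq(F_{\bar t r}(A))^*$, which contradicts $C^*\subseteq D$.

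\textbf{Part 3.} The inclusion $\supseteq$ follows from Fact~\ref{lim}(1). For $\subseteq$: if $\lim_\mathfrak{u}T(w,B)\ge r$, then Fact~\ref{lim}(3) gives $F_s(B)\in\mathfrak{u}$ for all $s<r$. The nested Archimedean property with $n=0$ then gives $F_r(B)\in\mathfrak{u}$.

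\textbf{Part 2.} Part 2 is the main obstacle. Fix $\mathfrak{u}\in\mathcal{U}^*_{na}$. Since $\mathcal{U}^*\setminus\mathcal{U}^*_{na}=\bigcup D_{\bar t,r,A}$ is a countable union, countable subadditivity reduces the claim to showing $T^*(\mathfrak{u},D)=0$ for each $D=D_{\bar t,r,A}$. Write $D=\bigcap_{s<r}(F_{\bar t s}(A))^*\setminus(F_{\bar t r}(A))^*$ and choose rationals $s_k\uparrow r$. By continuity from above of $T^*(\mathfrak{u},\cdot)$,
\[
T^*(\mathfrak{u},D)=\lim_k T^*\big(\mathfrak{u},(F_{\bar t s_k}(A)\setminus F_{\bar t r}(A))^*\big)=\lim_k\lim_\mathfrak{u}T\big(w,F_{\bar t s_k}(A)\setminus F_{\bar t r}(A)\big).
\]
In $\Omega$ the sets $F_{\bar t s_k}(A)\setminus F_{\bar t r}(A)$ decrease to $\emptyset$, by the intersection identity from Part 1. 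Hence for each $w$ the values $T(w,\cdot)$ of these sets tend to $0$, but this pointwise convergence need not pass through $\lim_\mathfrak{u}$. To handle the exchange of limits I will use the nested Archimedean property of $\mathfrak{u}$ itself, with the prefix $\bar t$ extended by one more level. Fix $\epsilon>0$ and rational $q$, and set $q'=T^*(\mathfrak{u},(F_{\bar t r}(A))^*)$. For every $s<r$ I will show that $F_{q'+\epsilon}(F_{\bar t s}(A))\notin\mathfrak{u}$ fails to hold only in a controlled way, and then apply Archimedean-ness of $\mathfrak{u}$ at level $n+1$ with sequences $F_{q\,\bar t\, s}(A)$. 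Concretely, I first try to show that for each rational $q$, $F_{q\bar t s}(A)\in\mathfrak{u}$ for all $s<r$ implies $F_{q\bar t r}(A)\in\mathfrak{u}$. This gives $\lim_\mathfrak{u}T(w,F_{\bar t r}(A))\ge\inf_s\lim_\mathfrak{u}T(w,F_{\bar t s}(A))$, and by monotonicity the two agree, yielding $T^*(\mathfrak{u},D)=0$.
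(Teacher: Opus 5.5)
Your proposal is correct and follows essentially the same route as the paper. Part~3 is the same $n=0$ Archimedean argument. Part~2 likewise reduces, by countable subadditivity, to the individual failure sets. It then closes the gap between $\inf_{s<r}T^*(\mathfrak{u},(F_{t_1\dots t_n s}(A))^*)$ and $T^*(\mathfrak{u},(F_{t_1\dots t_n r}(A))^*)$ by applying the nested Archimedean property of $\mathfrak{u}$ to the longer prefix $(q,t_1,\dots,t_n)$, where $q$ is rational and lies below that infimum; the paper's $\epsilon'$ plays the role of your $q$. Your direct nowhere-density argument in Part~1 is just the Baire-category content that the paper invokes through Rasiowa--Sikorski.

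The implication you say you will ``try to show'' needs no proof: it is literally the defining condition of $\mathcal{U}^*_{na}$ for that prefix. Combined with Fact~\ref{lim}(2) and (1), it gives the required inequality, so the garbled sentence before it can simply be dropped.
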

\begin{proof}%[Proof of Lemma~\ref{star}]
	\begin{enumerate}
		\item It is shown in \cite{kozen2013stone} that
		\[\mathcal{U}^*_{na}= \bigcap_{B\in \mathcal{B}} \;\; \bigcap_{t_1, \dots, t_n, r\in \mathbb{Q}_0} \Big(\bigcup_{s<r}  \big(F^c_{t_1\cdots t_n s}  (B)\big)^* \;\cup\; \big(F_{t_1\cdots t_n r} (B)\big)^*\Big).\]
		Since  $\mathcal{O}_{t_1\cdots t_n r}(B):= \Big(\bigcup_{s<r}  \big(F^c_{t_1\cdots t_n s}  (B)\big)^* \cup \big(F_{t_1\cdots t_n r} (B)\big)^*\Big)$ is an open subset of $\mathcal{U}^*$, it follows that $\mathcal{U}^*_{na}$ is a $G_{\delta}$-subspace of $\mathcal{U}^*$. Moreover, by Rasiowa–Sikorski Lemma \cite{RasiowaSikorski1950},
		%\ref{good}\rhn{We never add it, and maybe we don't need to add it.},
        there exists an ultrafilter  $\mathfrak{u}\in \mathcal{U}^*$ such that for  $\emptyset \not= B'\in \mathcal{B} $ and $t_1, \dots, t_n, r\in \mathbb{Q}_0$, we have that $B'\in \mathfrak{u}$ and whenever
		\[F_{t_1\cdots t_n s}  (B)\in \mathfrak{u}\;\;\;\; \text{for all}\; s<r,\]
		then
		\[F_{t_1\cdots t_n r}  (B)\in \mathfrak{u}.\]
		Therefore, $\mathfrak{u}\in \mathcal{O}_{t_1\cdots t_n r}(B) \cap (B')^*$. This implies that $\mathcal{O}_{t_1\cdots t_n r}(B)$ is dense and open. But, Since $\mathcal{U}^*$ is topologically complete, $\mathcal{U}^*_{na}$ is dense.
		\item  It follows from Item 1 that $\mathcal{U}^*_{na}\in \Sigma^*$. Now to show the claim, it suffices to prove that for each $\mathfrak{u}\in\mathcal{U}^*_{na}\!$ ,
		\[T^*(\mathfrak{u}, \mathcal{U}^*\setminus\mathcal{U}^*_{na}) =0.\]
		Equivalently, we have to show that for each $B\in \mathcal{B}$ and $t_1, \dots, t_n, r\in \mathbb{Q}_0$,
		\begin{align*}
			&\;T^*(\mathfrak{u}, \bigcap_{s<r}  \big(F_{t_1\cdots t_n s}  (B)\big)^* \cap \big(F^c_{t_1\cdots t_n r} (B)\big)^* ) \\
			=& \;T^*(\mathfrak{u}, \bigcap_{s<r}  \big(F_{t_1\cdots t_n s}  (B)\big)^*) - T^*(\mathfrak{u}, \big(F_{t_1\cdots t_n r} (B)\big)^* )\\
			=& \; \inf_{s<r}{\{T^*(\mathfrak{u}, \big(F_{t_1\cdots t_n s}  (B)\big)^*) \}} - T^*(\mathfrak{u}, \big(F_{t_1\cdots t_n r} (B)\big)^* )\\
			=& \; 0
		\end{align*}
		
		Let $\beta := T^*(\mathfrak{u}, \big(F_{t_1\cdots t_n r} (B)\big)^* )$. Then,
		
		{\bf Claim:} For each positive $\epsilon \in \mathbb{Q}_0$, there exists $s<r$ such that
		\[T^*(\mathfrak{u}, \big(F_{t_1\cdots t_n s}  (B)\big)^* ) \leq \epsilon +\beta.\]
		{\it Proof of claim}: Suppose not. Then, for some $\epsilon$, we have
		\[ \lim_u T(w, F_{t_1\cdots t_n s}  (B)) > \epsilon +\beta,\]		
		for each $s< r$. Now pick $\epsilon' \in \mathbb{Q}_0$ such that $0 \leq \beta< \epsilon' \leq \epsilon +\beta$. Then, for each $s< r$,
		%	\[ \lim_\mathfrak{u} \{T(w, F_{t_1\cdots t_n s}  (B))\;|\; w\in \Omega\} \geq \epsilon'.\]	
		\[\{w\in \Omega\;|\; T(w, F_{t_1\cdots t_n s}(B))\geq \epsilon'\}\in \mathfrak{u}.\]
		Since $\mathfrak{u}$ is a nested Archimedean ultrafilter, it follows that \[\{w\in \Omega\;|\; T(w, F_{t_1\cdots t_n r}(B))\geq \epsilon'\}\in \mathfrak{u}.\]
		So,
		\[\lim_\mathfrak{u}  T(w, F_{t_1\cdots t_n r}  (B)) = \beta  \geq \epsilon',\]
		a contradiction.
		
		\item This holds, since $(F_r(B))^*\cap \mathcal{U}^*_{na} = \bigcap_{s<r} ((F_s(B))^* \cap \mathcal{U}^*_{ua})$.
	\end{enumerate}
\end{proof}

In light of the above lemma, we define the ultrafilter extensions.

\begin{definition}[Ultrafilter extensions] \label{ultrafil}
	Given a Markov process $\mathfrak{P} = \langle \Omega, \mathcal{B}, T \rangle$, we define the {\em ultrafilter extension} of $\mathfrak{P}$ as $\mathfrak{Ue(P)} = \langle \mathcal{U}^*_{na}, \Sigma^*_{na}, T^*_{na}\rangle$ where
	\begin{enumerate}
		%\item $\mathcal{U}^*_{na}$ is the set of all good ultrafilters over $\Omega$;
		\item $\Sigma^*_{na} := \{ N\cap\mathcal{U}^*_{na} \;|\;N\in \Sigma^*\}$;
		\item $T^*_{na}:\mathcal{U}^*_{na}\times \Sigma^*_{na} \to [0, 1]$ such that for each $\mathfrak{u}\in \mathcal{U}^*_{na}$ and $N\in \Sigma^*$,
		\[T^*_{na} (\mathfrak{u}, N\cap\mathcal{U}^*_{na}) = T^*(\mathfrak{u}, N).\]
	\end{enumerate}
\end{definition}

Equivalently, $\mathfrak{Ue(P)}$ is a sub-process of $\mathfrak{P}^* = \langle \mathcal{U}^*, \mathcal{B}^*, T^* \rangle$, introduced in Corollary \ref{P*}. Notice that $\mathfrak{Ue(P)} = \langle \mathcal{U}^*_{na}, \mathcal{B}^*_{na}, T^*_{na} \rangle$ where $\mathcal{B}^*_{na}:= \{B^* \cap\mathcal{U}^*_{na}\;|\; B\in\mathcal{B} \}$.

In the following lemma, we show that the validity of formulas is {\em reflected} by the ultrafilter extensions. To prove this, we first examine the following lemma.

\begin{lemma}\label{lem ue}
	Let $\mathfrak{M} = \langle \Omega, \mathcal{B}, T, v \rangle$ be a model such that $\mathcal{B}$ is countable. Consider the model $\mathfrak{Ue(M)} = \langle \mathfrak{Ue(P)}, v^*_{na} \rangle$ where $\mathfrak{Ue(P)} =  \langle\mathcal{U}^*_{na}, \mathcal{B}^*_{na}, T^*_{na} \rangle$
	and  $v^*_{na}$ is a valuation on $\mathfrak{Ue(P)}$ defined as \[v^*_{na}(p):= 
    \{ \mathfrak{u}\in\mathcal{U}^*_{na}\;|\; v(p) \in \mathfrak{u} \},\;\;\; \text{for each $p\in \mathbb{P}$}.\]
	Then, the following conditions hold:
	\begin{itemize}
		\item[1.] $v^*_{na}(p) = \mathcal{U}^*_{na} \cap (v(p))^*\in\mathcal{B}^*_{na}$, for each $p\in \mathbb{P}$.
		\item[2.] For each  formula $\varphi$ and $\mathfrak{u}\in \mathcal{U}^*_{na}$, we have  
		\[\mathfrak{Ue(M)}, \mathfrak{u}\vDash \varphi \;\;\text{if and only if}\;\; \{w\in \Omega\;|\; \mathfrak{M}, w\vDash \varphi\}\in \mathfrak{u}.\]
		Therefore,  $[\![\varphi]\!]_{\mathfrak{Ue(M)}} = \mathcal{U}^*_{na}   \cap ([\![\varphi]\!]_{\mathfrak{M}})^* $. 
	\end{itemize}
\end{lemma}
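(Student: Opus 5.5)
Item 1 is immediate from the definitions. Since $v(p)\in\mathcal{B}$, the set $(v(p))^*=\{\mathfrak{u}\in\mathcal{U}^*\mid v(p)\in\mathfrak{u}\}$ is defined, and intersecting it with $\mathcal{U}^*_{na}$ gives exactly $v^*_{na}(p)$. By the description $\mathcal{B}^*_{na}=\{B^*\cap\mathcal{U}^*_{na}\mid B\in\mathcal{B}\}$, this set lies in $\mathcal{B}^*_{na}$.

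For Item 2 I will argue by induction on $\varphi$. First note that $[\![\varphi]\!]_{\mathfrak{M}}\in\mathcal{B}$ for every formula, by the remark after the satisfaction definition together with the standing assumption $F_r(N)\in\mathcal{B}$. So the statement ``$[\![\varphi]\!]_{\mathfrak{M}}\in\mathfrak{u}$'' makes sense.

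The base case $p$ is Item 1. The Boolean cases use only that $\mathfrak{u}$ is an ultrafilter on $\mathcal{B}$:
\begin{itemize}
\item for negation, $[\![\neg\varphi]\!]_{\mathfrak{M}}=([\![\varphi]\!]_{\mathfrak{M}})^c\in\mathfrak{u}$ iff $[\![\varphi]\!]_{\mathfrak{M}}\notin\mathfrak{u}$;
\item for conjunction, we use closure of $\mathfrak{u}$ under intersections and upward closure.
\end{itemize}
Once the equivalence holds for $\varphi$, the identity $[\![\varphi]\!]_{\mathfrak{Ue(M)}}=\mathcal{U}^*_{na}\cap([\![\varphi]\!]_{\mathfrak{M}})^*$ is just a restatement of it.

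The main step is the modal case $L_r\varphi$. Put $A=[\![\varphi]\!]_{\mathfrak{M}}\in\mathcal{B}$. By the induction hypothesis, $[\![\varphi]\!]_{\mathfrak{Ue(M)}}=A^*\cap\mathcal{U}^*_{na}$. By Definition~\ref{ultrafil}, $T^*_{na}(\mathfrak{u},A^*\cap\mathcal{U}^*_{na})=T^*(\mathfrak{u},A^*)$. Hence
\[
\mathfrak{Ue(M)},\mathfrak{u}\vDash L_r\varphi \iff T^*(\mathfrak{u},A^*)\geq r .
\]
Now apply Lemma~\ref{star}(3) to $B=A$. It says that for $\mathfrak{u}\in\mathcal{U}^*_{na}$ we have $T^*(\mathfrak{u},A^*)\geq r$ iff $\mathfrak{u}\in(F_r(A))^*$, i.e.\ iff $F_r(A)\in\mathfrak{u}$. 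Since $F_r(A)=\{w\mid T(w,A)\geq r\}=[\![L_r\varphi]\!]_{\mathfrak{M}}$, this closes the induction.

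The obstacle is concentrated in Lemma~\ref{star}(3), and I should check it rather than just cite it. By Fact~\ref{lim}(3), $T^*(\mathfrak{u},A^*)=\lim_{\mathfrak{u}}T(w,A)\geq r$ iff $F_s(A)\in\mathfrak{u}$ for all $s<r$. Fact~\ref{lim} is stated for ultrafilters over $\Omega$, but it transfers to ultrafilters on $\mathcal{B}$ because all the sets involved lie in $\mathcal{B}$, and rational $s$ suffice.
\begin{itemize}
\item Left to right: the nested Archimedean property with $n=0$ upgrades ``$F_s(A)\in\mathfrak{u}$ for all $s<r$'' to $F_r(A)\in\mathfrak{u}$.
\item Right to left: $F_r(A)\subseteq F_s(A)$ for $s<r$, so $F_r(A)\in\mathfrak{u}$ gives $F_s(A)\in\mathfrak{u}$ for all $s<r$, and Fact~\ref{lim}(1) gives the limit bound.
\item The edge case $r=0$ is trivial on both sides.
\end{itemize}
Only the $n=0$ instance of the nested Archimedean condition is needed here. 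The nested instances matter only for keeping $\mathcal{U}^*_{na}$ closed under $T^*$, which is already handled by Lemma~\ref{star}(2) and guarantees that $\mathfrak{Ue(M)}$ is a genuine model.
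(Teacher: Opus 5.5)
Your proposal is correct and follows the paper's proof essentially step for step. It argues by induction on $\varphi$ and handles the $L_r$ case by combining the induction hypothesis, the identity $T^*_{na}(\mathfrak{u},N\cap\mathcal{U}^*_{na})=T^*(\mathfrak{u},N)$ from Definition~\ref{ultrafil}, and Lemma~\ref{star}(3); your extra verification of Lemma~\ref{star}(3) via Fact~\ref{lim}(3) and the $n=0$ instance of the nested Archimedean condition just spells out the paper's one-line justification of that item.
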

\begin{proof}%[Proof of Lemma~\ref{lem ue}]
	\begin{enumerate}
		\item This follows immediately by definition. 
		\item This can be shown by induction on the complexity of formulas.
		The basic and induction steps for Boolean connectives are clear.
		We assume that the induction hypothesis holds for $\psi$. Now for $\varphi= L_r \psi$, we have
		\begin{align*}
			[\![ L_r \psi]\!]_{\mathfrak{Ue(M)}} &= \{\mathfrak{u}\in\mathcal{U}^*_{na} \;|\; T^*_{na}(\mathfrak{u}, [\![\psi]\!]_{\mathfrak{Ue(M)}})\geq r\}\\
			&= \{\mathfrak{u}\in \mathcal{U}^*_{na} \;|\; T^*(\mathfrak{u}, \;\mathcal{U}^*_{na} \cap  ([\![\psi]\!]_{\mathfrak{M}})^*)\geq r\},  \;(\text{by induction hypothesis for $\psi$})\\
			&= \{\mathfrak{u}\in \mathcal{U}^*_{na} \;|\; T^*(\mathfrak{u}, \;([\![\psi]\!]_{\mathfrak{M}})^*)\geq r\}, \;(\text{by Definition \ref{ultrafil} (2))}\\
			&= \mathcal{U}^*_{na} \cap ([\![L_r \psi]\!]_{\mathfrak{M}})^*, \;(\text{by Lemma \ref{star} (3))}\\
			&= \mathcal{U}^*_{na} \cap ([\![L_r \psi]\!]_{\mathfrak{M}})^*.
		\end{align*}
	\end{enumerate}
\end{proof}

\begin{lemma}\label{ue reflect}
	Suppose $\mathfrak{P} = \langle \Omega, \Sigma, T \rangle$ is a countably generated Markov processes. Then, for each formula  $\varphi$,
	\[\mathfrak{Ue(P)}\vDash \varphi \;\;\text{implies} \;\;\mathfrak{P}\vDash \varphi.\]
\end{lemma}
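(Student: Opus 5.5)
We argue by contraposition, reducing to a countable generating algebra and then invoking Lemma~\ref{lem ue}. Suppose $\mathfrak{P}\nvDash\varphi$. Then there is a valuation $v$ on $\mathfrak{P}$ and a state $w_0\in\Omega$ with $\mathfrak{M},w_0\nvDash\varphi$, where $\mathfrak{M}=\langle\mathfrak{P},v\rangle$. Here $\mathfrak{P}$ is presented as $\langle\Omega,\mathcal{B},T\rangle$ with $\mathcal{B}$ a countable generating subalgebra closed under the operators $F_r$ (Definition~\ref{coun gen def}), and by Convention~\ref{con} the valuation takes values in $\mathcal{B}$. Since $\mathcal{B}$ is closed under the Boolean operations and under each $F_r$, an easy induction shows $[\![\psi]\!]_{\mathfrak{M}}\in\mathcal{B}$ for every formula $\psi$. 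This is exactly what is needed to apply Lemma~\ref{lem ue} to $\mathfrak{M}$.

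The key step is to produce a nested Archimedean ultrafilter $\mathfrak{u}\in\mathcal{U}^*_{na}$ containing the set $A:=[\![\neg\varphi]\!]_{\mathfrak{M}}$. This set lies in $\mathcal{B}$ and is nonempty, since it contains $w_0$.

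\emph{First option.} Use Lemma~\ref{star}(1): $\mathcal{U}^*_{na}$ is dense in $\mathcal{U}^*$ and $A^*$ is a nonempty open set, so $A^*\cap\mathcal{U}^*_{na}\neq\varnothing$.

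\emph{More direct option.} Take the principal ultrafilter $\mathfrak{u}_{w_0}=\{B\in\mathcal{B}\mid w_0\in B\}$, which contains $A$. It is nested Archimedean: if $w_0\in F_{t_1\dots t_n s}(B)$ for all $s<r$, then $T(w_0,F_{t_1\dots t_n}(B))\ge s$ for all $s<r$, hence $\ge r$, i.e.\ $w_0\in F_{t_1\dots t_n r}(B)$. Here $F_{t_1\dots t_n}(B)\in\mathcal{B}$, so this is a statement about a single measurable set and needs no limit argument. I would present this direct argument and mention density only as a remark.

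With $\mathfrak{u}\in\mathcal{U}^*_{na}$ and $A\in\mathfrak{u}$, Lemma~\ref{lem ue}(2) gives $\mathfrak{Ue(M)},\mathfrak{u}\vDash\neg\varphi$, so $\varphi$ fails in the model $\mathfrak{Ue(M)}$. Its valuation $v^*_{na}$ takes values in $\mathcal{B}^*_{na}$ by Lemma~\ref{lem ue}(1), so it is a legitimate valuation on $\mathfrak{Ue(P)}$. Hence $\mathfrak{Ue(P)}\nvDash\varphi$, which proves the contrapositive.

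The main obstacle I expect is the bookkeeping about which algebra is in use. Lemma~\ref{lem ue} requires the generating algebra to be countable and to contain all truth sets. If $\mathfrak{P}$ were given only as $\langle\Omega,\Sigma,T\rangle$ with valuations into $\Sigma$, I would first pass, as in the proof of Proposition~\ref{validity}, to the countable algebra $\mathcal{B}'=\{[\![\theta]\!]_{\mathfrak{M}}\}$. This algebra is closed under Boolean operations and the $F_r$, and $\varphi$ still fails in $\langle\Omega,\mathcal{B}',T\upharpoonright\sigma(\mathcal{B}')\rangle$. One then works with the ultrafilter extension built from $\mathcal{B}'$, which is presumably how $\mathfrak{Ue(P)}$ is to be read for a given generating algebra.
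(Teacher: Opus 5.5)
Your skeleton matches the paper's proof: contrapose, take a countermodel whose truth sets lie in the countable algebra $\mathcal{B}$, find some $\mathfrak{u}\in\mathcal{U}^*_{na}$ containing $[\![\neg\varphi]\!]_{\mathfrak{M}}$, and apply Lemma~\ref{lem ue}(2). The paper obtains $\mathfrak{u}$ from the Rasiowa--Sikorski lemma. That is essentially your ``first option'' via the density in Lemma~\ref{star}(1), and that option is fine. Your remark about passing to $\mathcal{B}'=\{[\![\theta]\!]_{\mathfrak{M}}\}$ mirrors the paper's appeal to Proposition~\ref{validity}.

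The option you say you would actually present has a wrong verification. By Definition~\ref{coun gen def}, $F_{t_1\dots t_n s}(B)=F_{t_1}\cdots F_{t_n}F_s(B)$, so the varying index $s$ is the \emph{innermost} one. Thus $w_0\in F_{t_1\dots t_n s}(B)$ means $T(w_0,F_{t_2\dots t_n s}(B))\ge t_1$, not $T(w_0,F_{t_1\dots t_n}(B))\ge s$. What you checked is only the unnested Archimedean condition for the set $F_{t_1\dots t_n}(B)$, which is trivial for any principal ultrafilter. The nested condition is exactly where $\sigma$-additivity enters (it is what Lemma~\ref{star}(2) relies on), so your claim that no limit argument is needed is false.

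The conclusion is still true and easy to repair. Prove by induction on $n$ that
\[F_{t_1\dots t_n r}(B)=\bigcap_{s<r}F_{t_1\dots t_n s}(B).\]
The case $n=0$ is your argument; for $r=0$ both sides are $\Omega$. For the inductive step, $s\mapsto F_{t_2\dots t_n s}(B)$ is decreasing, since each $F_t$ is monotone. By the induction hypothesis and continuity from above of $T(w,\cdot)$ along rationals $s_k\uparrow r$, you get $T(w,F_{t_2\dots t_n r}(B))=\inf_{s<r}T(w,F_{t_2\dots t_n s}(B))$. Applying the threshold $t_1$ then gives the identity.

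With this, every principal ultrafilter $\mathfrak{u}_w$ is nested Archimedean. Your route then becomes a genuinely more elementary alternative to the paper's. It avoids Rasiowa--Sikorski and the Baire-category argument entirely. Via Lemma~\ref{lem ue}(2) it shows $\mathfrak{M},w\vDash\psi$ iff $\mathfrak{Ue(M)},\mathfrak{u}_w\vDash\psi$ for all $\psi$. It even yields density of $\mathcal{U}^*_{na}$ directly, since every nonempty basic open set $(B')^*$ contains some $\mathfrak{u}_x$ with $x\in B'$.
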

\begin{proof}%[Proof of Lemma~\ref{ue reflect}]
	Assume the above claim is not true for some formula $\varphi$ . Then by Proposition \ref{validity} (2), there exists a countable subalgebra $\mathcal{B}$ of $\Sigma$ such that  $\langle \Omega, \mathcal{B}, T \rangle\nvDash \varphi$. Hence, for some model $\mathfrak{M}= \langle \mathfrak{P}, v \rangle$  and $w\in \Omega$, we have that  $\mathfrak{M}, w \nvDash \varphi$.  So $ \emptyset \neq [\![\neg\varphi]\!]_{\mathfrak{M}} \in \mathcal{B}$. Now consider an ultrafilter $\mathfrak{u}\in \mathcal{U}^*_{na}$ with $[\![\neg\varphi]\!]_{\mathfrak{M}} \in u$, which exists by Rasiowa–Sikorski Lemma. Then on the basis of Lemma \ref{lem ue} (2), we have $\mathfrak{Ue(M)}, u \nvDash \varphi$, a contradiction.
\end{proof}

The proof of the above lemma easily yields the following corollary.

\begin{corollary}\label{ue preserve}
	Under the condition of the above corollary,  for each formula $\varphi$,
	\begin{center}
		$\mathfrak{P}\vDash \varphi$ implies $\langle \mathcal{U}^*_{na}, \mathcal{B}^*_{na}, T^*_{na}\rangle \vDash \varphi.$
	\end{center} 
\end{corollary}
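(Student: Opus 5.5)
We argue by contraposition, mirroring the proof of Lemma~\ref{ue reflect} but with the roles of $\mathfrak{P}$ and $\mathfrak{Ue(P)}$ exchanged. Assume $\mathfrak{P}\vDash\varphi$ and suppose, for contradiction, that $\langle \mathcal{U}^*_{na}, \mathcal{B}^*_{na}, T^*_{na}\rangle \nvDash \varphi$. Then there are a valuation $v'$ on $\mathfrak{Ue(P)}$ and some $\mathfrak{u}_0\in\mathcal{U}^*_{na}$ with $\langle \mathfrak{Ue(P)}, v'\rangle,\mathfrak{u}_0\nvDash\varphi$. By the definition of a model over $\langle \mathcal{U}^*_{na}, \mathcal{B}^*_{na}, T^*_{na}\rangle$, each $v'(p)$ lies in $\mathcal{B}^*_{na}$. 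We may therefore choose, for every propositional variable $p$, some $A_p\in\mathcal{B}$ with $v'(p)=A_p^*\cap\mathcal{U}^*_{na}$.

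Next, define a valuation on $\mathfrak{P}$ by $v(p):=A_p\in\mathcal{B}$ and set $\mathfrak{M}=\langle\Omega,\mathcal{B},T,v\rangle$. By Lemma~\ref{lem ue}(1), the induced valuation $v^*_{na}$ satisfies $v^*_{na}(p)=\mathcal{U}^*_{na}\cap (A_p)^* = v'(p)$. Hence the model $\langle \mathfrak{Ue(P)},v'\rangle$ coincides with $\mathfrak{Ue(M)}$.

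We then apply Lemma~\ref{lem ue}(2) to get $\{w\mid \mathfrak{M},w\vDash\varphi\}\notin\mathfrak{u}_0$. Since $\mathfrak{u}_0$ is an ultrafilter on $\mathcal{B}$, this gives $[\![\neg\varphi]\!]_{\mathfrak{M}}\in\mathfrak{u}_0$. In particular, $[\![\neg\varphi]\!]_{\mathfrak{M}}\neq\emptyset$, so some $w\in\Omega$ satisfies $\mathfrak{M},w\nvDash\varphi$. This contradicts $\mathfrak{P}\vDash\varphi$, since $\mathfrak{M}$ is a model based on $\mathfrak{P}$. If the statement is read with an arbitrary $\Sigma$ as in Lemma~\ref{ue reflect}, we first fix the countable generating algebra $\mathcal{B}$ underlying the construction of $\mathcal{U}^*_{na}$ and use Proposition~\ref{validity} to see that $\mathfrak{P}\vDash\varphi$ gives validity over $\langle\Omega,\mathcal{B},T\rangle$.

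The only delicate point is making sure that every valuation on the ultrafilter extension arises as some $v^*_{na}$. This holds precisely because valuations are restricted to the generating algebra $\mathcal{B}^*_{na}$ rather than to all of $\Sigma^*_{na}$, and because $\mathcal{B}$ is closed under the operators $F_r$, so Lemma~\ref{lem ue} applies with the same $\mathcal{B}$. The rest is the same routine bookkeeping as in Lemma~\ref{ue reflect}.
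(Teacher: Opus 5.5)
Your proof is correct and follows the paper's intended argument: the paper says only that the corollary follows from the proof of Lemma~\ref{ue reflect}, and that argument rests on Lemma~\ref{lem ue}(2), $[\![\varphi]\!]_{\mathfrak{Ue(M)}}=\mathcal{U}^*_{na}\cap([\![\varphi]\!]_{\mathfrak{M}})^*$, exactly as yours does. You also make explicit the step the paper leaves implicit, namely that every valuation into $\mathcal{B}^*_{na}$ is of the form $v^*_{na}$ for some valuation $v$ into the same fixed countable $\mathcal{B}$.
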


We can conclude this section by the following lemma which shows that for a given theory $\Gamma$, the class of all Markov processes in which $\Gamma$ is valid is closed under the constructions introduced above.

\begin{lemma}\label{lem closed}
	Assume that $\Gamma$ is a set of formulas and let $\mathcal{C}_\Gamma=\{\mathfrak{P} \mid \mathfrak{P}\models\Gamma\}$, then $\mathcal{C}_\Gamma$ is closed under Markov sub-processes, event sub-processes, disjoint unions, zigzag morphisms and reflected by ultrafilter extensions of countably generated Markov processes\footnote{i.e., if $\mathfrak{Ue(P)}\in\mathcal{C}_\Gamma$, then $\mathfrak{P}\in\mathcal{C}_\Gamma$.}.
	Furthermore, $\mathfrak{P}\in\mathcal{C}$ if and only if every countably generated event sub-process of $\mathfrak{P}$ is in $\mathcal{C}$.
\end{lemma}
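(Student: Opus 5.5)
The proof collects the preservation results established earlier in this section and applies them formula by formula. Fix $\Gamma$. By definition, $\mathfrak{P}\in\mathcal{C}_\Gamma$ means $\mathfrak{P}\models\varphi$ for every $\varphi\in\Gamma$. Each closure condition can therefore be checked for a single arbitrary $\varphi\in\Gamma$ and then quantified over $\Gamma$.

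For the closure properties I invoke the earlier results in turn.
\begin{itemize}
\item If $\mathfrak{P}\in\mathcal{C}_\Gamma$ and $\mathfrak{P}'$ is a Markov sub-process, then Corollary~\ref{sub val} gives $\mathfrak{P}'\models\varphi$ for each $\varphi\in\Gamma$.
\item For an event sub-process $\mathfrak{P}'$, the lemma following the definition of event sub-processes gives the same conclusion.
\item For disjoint unions, item~2 of the lemma on disjoint unions gives $\biguplus_i\mathfrak{P}_i\models\varphi$ iff $\mathfrak{P}_i\models\varphi$ for all $i$. If each $\mathfrak{P}_i\in\mathcal{C}_\Gamma$, the union is in $\mathcal{C}_\Gamma$.
\item For zigzag morphic images, Lemma~\ref{zigzag lem} transfers validity of each $\varphi$ from $\mathfrak{P}$ to $\mathfrak{P}'$.
\item For reflection, let $\mathfrak{P}$ be countably generated with $\mathfrak{Ue(P)}\in\mathcal{C}_\Gamma$. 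Lemma~\ref{ue reflect}, applied to each $\varphi\in\Gamma$, yields $\mathfrak{P}\models\varphi$, hence $\mathfrak{P}\in\mathcal{C}_\Gamma$.
\end{itemize}

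For the final equivalence, read $\mathcal{C}$ as $\mathcal{C}_\Gamma$. The left-to-right direction is the event sub-process closure just shown, since a countably generated event sub-process is in particular an event sub-process.

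For the right-to-left direction I use Proposition~\ref{validity}, (3)$\Rightarrow$(1). Suppose every countably generated event sub-process of $\mathfrak{P}=\langle\Omega,\Sigma,T\rangle$ lies in $\mathcal{C}_\Gamma$. Take any $\varphi\in\Gamma$ and any countably generated $\sigma$-subalgebra $\Sigma'\subseteq\Sigma$. Then $\langle\Omega,\Sigma',T\upharpoonright\Sigma'\rangle$ is an event sub-process, because the restriction of a probability measure is a probability measure and measurability of $T(\cdot,N)$ for $N\in\Sigma'$ is inherited. By hypothesis it validates $\varphi$, so Proposition~\ref{validity} gives $\mathfrak{P}\models\varphi$.

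The only delicate point is a convention mismatch. Definition~\ref{coun gen def} requires $F_r(N)\in\mathcal{B}$ for generators $N$. In the proof of Proposition~\ref{validity}, the witnessing subalgebra is $\{[\![\theta]\!]_{\mathfrak{M}}\}$, which is closed under $F_r$ because it is closed under $L_r$. So the countably generated event sub-process produced there is countably generated in the paper's sense, and it suffices to refute $\varphi$ on it.
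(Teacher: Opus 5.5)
Your proof is correct and follows the intended route: the paper gives no separate proof of this lemma, which summarizes Corollary~\ref{sub val}, the event sub-process lemma, the disjoint-union lemma, Lemma~\ref{zigzag lem}, Lemma~\ref{ue reflect}, and Proposition~\ref{validity} (for the final equivalence), applied formula by formula just as you do. One small correction: $\Sigma$-measurability of $T(\cdot,N)$ does not give $\Sigma'$-measurability for an arbitrary sub-$\sigma$-algebra $\Sigma'$, so ``inherited'' is the wrong reason; for the witnessing algebra $\mathcal{B}'=\{[\![\theta]\!]_{\mathfrak{M}}\}$ it follows instead from the $F_r$-closure you point out together with a $\pi$-$\lambda$ argument, and that is what makes $\langle\Omega,\sigma(\mathcal{B}'),T\upharpoonright\sigma(\mathcal{B}')\rangle$ an event sub-process.
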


Since the conditions stated in the above lemma are important conditions for proving the Goldblatt-Thomason theorem, we isolate those classes that satisfy the conditions of the above lemma. 

\begin{definition}
	Let $\mathcal{C}$ be a class of Markov processes. The class $\mathcal{C}$ is said to have the Goldblatt-Thomason property (GT-property for short) if it satisfies the conditions of Lemma \ref{lem closed}.
\end{definition}
%%
%%
%%%%%%%%%%%%%%%%%%%%%%%%%%%%%%%%%%%%%%%%
%%%%%%%%%%%%%%%%%%%%%%%%%%%%%%%%%%%%%%%%
\section{The Goldblatt-Thomason Theorem}\label{sec gt}
In this section we prove the Goldblatt-Thomason theorem for probability logic.
Motivated by the same theorem in the context of basic modal logic with Kripke semantics we consider the ultrafilter extension as replacement of ultraproduct construction. This also rooted back to the original paper of Goldblatt and Thomason \cite{gt:axiom75}.

Recall that a class of Markov processes $\mathcal{C}$ is closed under ultrafilter extensions, if for every countably generated Markov process $\mathfrak{P}$, if $\mathfrak{P}\in\mathcal{C}$, then $\mathfrak{Ue(P)}\in\mathcal{C}$.

\begin{theorem}[Goldblatt-Thomason Theorem]
	Let $\mathcal{C}$ be a class of Markov processes that is closed under taking ultrafilter extensions. Then $\mathcal{C}$ is P-definable if and only if it has the GT-property.
\end{theorem}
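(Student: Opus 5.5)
The forward direction is Lemma~\ref{lem closed}: if $\mathcal{C}=\mathcal{C}_\Gamma$ for some $\Gamma$, then $\mathcal{C}$ has the GT-property. The work is in the converse. Assume $\mathcal{C}$ has the GT-property and is closed under ultrafilter extensions. Put $\Gamma:=\mathrm{Th}(\mathcal{C})=\{\varphi \mid {\vDash}_{\mathcal{C}}\varphi\}$. We must show that every $\mathfrak{P}$ with $\mathfrak{P}\vDash\Gamma$ lies in $\mathcal{C}$. By the last clause of the GT-property it suffices to treat countably generated $\mathfrak{P}=\langle\Omega,\mathcal{B},T\rangle$ with $\mathcal{B}$ countable, since $\mathfrak{P}\in\mathcal{C}$ as soon as all its countably generated event sub-processes are, and these still validate $\Gamma$. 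Since $\mathcal{C}$ is reflected by ultrafilter extensions, it then suffices to show $\mathfrak{Ue(P)}\in\mathcal{C}$. The strategy mimics the general-frame proof: build a process in $\mathcal{C}$ that maps onto $\mathfrak{Ue(P)}$ by a zigzag morphism, and use closure under disjoint unions, sub-processes and zigzag images.

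Introduce one propositional variable $p_A$ for each $A\in\mathcal{B}$; this is possible since $\mathbb{P}$ is countable. Let $\Delta$ be the ``diagram'' of $\mathfrak{P}$: the formulas $p_{A\cap B}\leftrightarrow(p_A\land p_B)$, $p_{A^c}\leftrightarrow\neg p_A$, $p_\Omega$, and $p_{F_r(A)}\leftrightarrow L_r p_A$, for all $A,B\in\mathcal{B}$ and $r\in\mathbb{Q}_0$. Let $\Delta^+=\{L_1^n\delta \mid \delta\in\Delta,\ n\ge 0\}$. For each $\mathfrak{u}\in\mathcal{U}^*_{na}$, consider $\Delta^+\cup\{p_A\mid A\in\mathfrak{u}\}$.

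\textbf{Claim:} this set is satisfiable at a point of some model over some $\mathfrak{Q}_\mathfrak{u}\in\mathcal{C}$. Given that, Lemma~\ref{subL1} yields a sub-model $\mathfrak{N}_\mathfrak{u}$ of it in which $\Delta$ is globally true, and $\mathfrak{N}_\mathfrak{u}$ is still based on a member of $\mathcal{C}$ by closure under sub-processes. Take $\mathfrak{N}=\biguplus_{\mathfrak{u}}\mathfrak{N}_\mathfrak{u}$, again in $\mathcal{C}$, with $\Delta$ globally true. Define $f(x)=\{A\in\mathcal{B}\mid \mathfrak{N},x\vDash p_A\}$. The Boolean axioms make $f(x)$ an ultrafilter on $\mathcal{B}$, and surjectivity onto each chosen $\mathfrak{u}$ is built in. I then need three facts:
\begin{enumerate}
\item $f(x)$ is nested Archimedean. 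This should follow because $T_\mathfrak{N}(x,\cdot)$ is a genuine measure: if $L_{t_1\dots t_n s}p_A$ holds at $x$ for all $s<r$, then the probability is $\geq r$. Via the $F$-axioms, $p_{F_{t_1\dots t_n s}(A)}$ corresponds to $L_{t_1}\cdots L_{t_n}L_s p_A$ holding at $x$.
\item $f$ is measurable, since $f^{-1}(A^*\cap\mathcal{U}^*_{na})=[\![p_A]\!]$.
\item The zigzag condition $T_\mathfrak{N}(x,f^{-1}(A^*))=T^*_{na}(f(x),A^*)$ holds. The $F_r$-axioms give, for all rational $r$, that $T_\mathfrak{N}(x,[\![p_A]\!])\ge r$ iff $F_r(A)\in f(x)$ iff (Lemma~\ref{star}(3)) $T^*(f(x),A^*)\ge r$. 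So the two measures agree on the countable algebra $\mathcal{B}^*_{na}$ and hence, by uniqueness of Carathéodory extension, on $\Sigma^*_{na}$.
\end{enumerate}
Thus $\mathfrak{Ue(P)}$ is a zigzag image of a member of $\mathcal{C}$, so $\mathfrak{Ue(P)}\in\mathcal{C}$, and therefore $\mathfrak{P}\in\mathcal{C}$.

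The main obstacle is the Claim, because compactness fails: finite satisfiability in $\mathcal{C}$ does not give satisfiability. The plan avoids compactness by using $\mathcal{C}$'s closure under ultrafilter extensions in place of ultraproducts.

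First, every finite subset is satisfiable in $\mathcal{C}$. Suppose $\Phi\subseteq\Delta^+$ and $A_1,\dots,A_k\in\mathfrak{u}$ are finite, and $\bigwedge\Phi\to\neg p_{A_1\cap\dots\cap A_k}$ were valid in $\mathcal{C}$. Then it would be in $\Gamma$, hence valid in $\mathfrak{P}$. That is refuted by the natural valuation $p_A\mapsto A$, which makes $\Delta^+$ true and $A_1\cap\dots\cap A_k\neq\emptyset$.

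To get full satisfiability, I would try the following. For each finite stage $m$, pick a witness $\mathfrak{Q}_m\in\mathcal{C}$ with a model $\mathfrak{M}_m$. Form the disjoint union $\mathfrak{D}=\biguplus_m \mathfrak{M}_m$, which is in $\mathcal{C}$. Pass to a countably generated event sub-process, generated by the countable algebra of definable sets, which is in $\mathcal{C}$ by the last GT clause. Then take its ultrafilter extension, which is in $\mathcal{C}$ by hypothesis. In $\mathfrak{Ue}$, choose a nested Archimedean ultrafilter extending the filter of definable sets $[\![\theta]\!]$ for $\theta$ in the target set, relativised to tails of the union. Such an ultrafilter exists by the Rasiowa–Sikorski argument of Lemma~\ref{star}(1), since only countably many dense conditions must be met. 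Lemma~\ref{lem ue}(2) then transfers each formula to this ultrafilter, giving a single satisfying point.

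The delicate part is arranging that each $L_1^n\delta$ eventually holds on all components, so that the filter is proper and consistent with the countably many Archimedean requirements. I expect the main technical effort to go into checking the Rasiowa–Sikorski step here.
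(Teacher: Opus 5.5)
The overall architecture of your converse is sound, and your items 1--3 (nested Archimedean images, measurability, the zigzag identity via Lemma~\ref{star}(3) and uniqueness of extension) are correct. \textbf{The gap is in the Claim, at the appeal to Rasiowa--Sikorski.}

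That lemma gives a meet-preserving ultrafilter through a single nonzero element. It does not let you extend an arbitrary filter while preserving meets: that would require the meets to survive in the quotient by the filter, which is exactly what is in question. The sets $\mathcal{O}_{t_1\cdots t_n r}(B)$ are dense open in the whole Stone space, but they need not meet the closed set of ultrafilters extending your filter. For instance, no ultrafilter containing $[\![L_s q]\!]$ for all $s<r$ together with $[\![\neg L_r q]\!]$ is nested Archimedean, although such a family can have the finite intersection property in a union of witnesses.

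A clear symptom is that your argument for the Claim never uses that $\mathfrak{u}$ is nested Archimedean. Finite satisfiability of $\Sigma_\mathfrak{u}:=\Delta^+\cup\{p_A\mid A\in\mathfrak{u}\}$ holds for every ultrafilter $\mathfrak{u}$ on $\mathcal{B}$. Yet suppose $F_s(A)\in\mathfrak{u}$ for all $s<r$ and $F_r(A)\notin\mathfrak{u}$. At a satisfying point, $\Delta$ gives $L_s p_A$ for every $s<r$, hence $L_r p_A$ by $\sigma$-additivity, hence $p_{F_r(A)}$. This contradicts $p_{F_r(A)^c}\in\Sigma_\mathfrak{u}$. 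So $\Sigma_\mathfrak{u}$ is satisfiable in no Markov model, and your step as described would prove a false statement.

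The missing idea is to make the Archimedean property of $\mathfrak{u}$ do the work.
\begin{enumerate}
\item By induction on formulas, choose $A_\varphi\in\mathcal{B}$ with $A_{\neg\psi}=A_\psi^c$, $A_{\psi\wedge\chi}=A_\psi\cap A_\chi$ and $A_{L_r\psi}=F_r(A_\psi)$.
\item For each $n$, finitely many members of $\Delta^+$ force $L_1^n(\varphi\leftrightarrow p_{A_\varphi})$ at a point. The $L_r$ case consumes the $L_1^{n+1}$-instances, which is why you need $\Delta^+$ and not just $\Delta$.
\item Consequently $\Sigma_\mathfrak{u}$ decides every formula by a finite fragment. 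Every $\varphi$ with $A_\varphi\in\mathfrak{u}$ holds at your witness points from some stage on.
\item So the filter you generate in the algebra of definable sets of $\mathfrak{D}$ is already the ultrafilter $\{[\![\varphi]\!]_{\mathfrak{D}}\mid A_\varphi\in\mathfrak{u}\}$; there is nothing to choose.
\item This ultrafilter is nested Archimedean precisely because $A_{L_{t_1\cdots t_n s}\psi}=F_{t_1\cdots t_n s}(A_\psi)$ and $\mathfrak{u}$ is nested Archimedean. Lemma~\ref{lem ue} then yields the realizing point.
\end{enumerate}

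With this repair your route is a genuine alternative to the paper's. The paper avoids the issue by realizing only the complete theories $\Gamma_w$ of actual points $w\in\Omega$. These are automatically closed under the Archimedean rule, since they are computed in a genuine Markov process. It then reaches an arbitrary $\mathfrak{u}\in\mathcal{U}^*_{na}$ through a second ultrafilter extension, mapping $\mathfrak{Ue}(\mathfrak{P}')$ onto $\mathfrak{Ue(P)}$. Your repaired version uses one ultrafilter extension per $\mathfrak{u}$ and maps a member of $\mathcal{C}$ directly onto $\mathfrak{Ue(P)}$.
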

\begin{proof}
	The implication from left to right follows directly from Lemma~\ref{lem closed}.
	For the converse direction, let $Th(\mathcal{C})$ denote the set of all formulas of an arbitrary countable language $\mathcal{L}$ that are valid in the class $\mathcal{C}$.
	We show that $Th(\mathcal{C})$ defines $\mathcal{C}$, i.e., for any Markov processes $\mathfrak{P} =\langle \Omega, \Sigma,T\rangle $, we have 
	\[\mathfrak{P}\in \mathcal{C}\;\;\text{iff}\; \;\mathfrak{P} \vDash Th(\mathcal{C}).\] 
	
	By definition of $Th(\mathcal{C})$, the forward implication is immediate.  Hence, assume that $\mathfrak{P} \vDash Th(\mathcal{C})$. 
	By Proposition \ref{validity}, we may assume without loss of generality that $\mathfrak{P} =\langle \Omega, \mathcal{B},T\rangle $, where $\mathcal{B}$ is a countable Boolean algebra and $\Sigma=\sigma(\mathcal{B})$.
	
	Now consider the language $\mathcal{L}_{\mathfrak{P}}$ consisting of propositional variables  $\{p_A\;|\; A\in \mathcal{B}\}$.  Define a valuation $v_{\mathfrak{P}}$  on  $\mathfrak{P}$ by setting  $v_{\mathfrak{P}}(p_A) = A$, for each $A\in \mathcal{B}$.
	Let $\mathfrak{M} = \langle \mathfrak{P}, v_{\mathfrak{P}}\rangle$ and let $\Gamma = Th(\mathfrak{M})$ be the set of formulas valid in $\mathfrak{M}$. Observe that $\Gamma$ contains the following formulas
	\begin{align}
		\neg p_{\emptyset} \\
		p_A\wedge p_B &\leftrightarrow p_{A\cap B}\\
		p_A &\leftrightarrow \neg p_{A^c}\\
		p_{F_rA} &\leftrightarrow L_r p_A
		%	p_{M_rA} &\leftrightarrow M_r p_A
	\end{align}
	for each $A, B \in  \mathcal{B}$ and $r\in \mathbb{Q}_0$. Moreover, if $\varphi\in \Gamma$ then $L_1\varphi\in \Gamma$.
	
	Now for each $w\in \Omega$, define $\Gamma_w = \{ \gamma\in \mathcal{L}_{\mathfrak{P}} \;|\; \mathfrak{M}, w\vDash \gamma\}$. Clearly, $Th(\mathcal{C})\subseteq \Gamma\subseteq \Gamma_w$.
	
	\medskip
	\noindent
	{\bf Claim 1:} $\Gamma_w$  is finitely satisfiable in $\mathcal{C}$.
	
	\medskip
	\noindent
	Suppose otherwise. Then there exists a finite subset $\Delta \subseteq \Gamma_w$ that is not satisfiable in $\mathcal{C}$. By definition of $Th(\mathcal{C})$, this implies 
	\[\neg \bigwedge \Delta \in Th(\mathcal{C}) \subseteq \Gamma_w,\] 
	contradicting $\Delta\subseteq \Gamma_w$.

	For each finite $\Delta\subseteq \Gamma_w$, let $\mathfrak{N}_\Delta= \langle \mathfrak{Q}_\Delta, v_\Delta\rangle$ be a model satisfying $\Delta$ with $\mathfrak{Q}_\Delta\in \mathcal{C}$. Define $ \mathfrak{N}'_w = (\mathfrak{Q}'_w, v_w)$ %\biguplus_{\Delta \subseteq \Gamma_w}  \mathfrak{M}_\Delta$ 
	where $\mathfrak{Q}'_w=\biguplus_{\Delta \subseteq \Gamma_w}\mathfrak{Q}_\Delta\in\mathcal{C}$ 
	and $v_w=\biguplus_{\Delta \subseteq \Gamma_w}v_\Delta$.
	Let $\mathcal{B}_w = \{[\![\varphi]\!]_{\mathfrak{N}_w}|\; \varphi \in  \mathcal{L}_{\mathfrak{P}}\}$, and define
	$\mathfrak{N}_w =\mathfrak{N}'_w\upharpoonright_{\mathcal{B}_w}$. Then $\mathfrak{N}_w $ is an event sub-process  based on $\mathfrak{Q}_w=(\mathfrak{Q}'_w)\upharpoonright_{\mathcal{B}_w}\in\mathcal{C}$.
	One can verify that $\mathfrak{u}_w= \{[\![\varphi]\!]_{\mathfrak{N}_w}|\; \varphi \in \Gamma_w\}$  is a nested Archimedean ultrafilter over $\mathcal{B}_w$.
	Hence, $\mathfrak{u}_w \in \mathfrak{Ue}(\mathfrak{N}_w)$ and $\mathfrak{Ue}(\mathfrak{Q}_w)\in \mathcal{C}$, since $\mathcal{C}$ is closed under taking ultrafilter extensions.
	Moreover, by Lemma \ref{lem ue}, $ \mathfrak{Ue}(\mathfrak{N}_w), \mathfrak{u}_w\vDash  \Gamma_w$.
	Applying Lemma \ref{subL1}  to  $\Gamma$, we obtain a sub-model $\mathfrak{M}'_w =  \langle \mathfrak{P}'_w,v'_w\rangle$
	with $\mathfrak{P}'_w \in \mathcal{C}$ and $\mathfrak{u}_w\in \mathfrak{M}'_w$ such that $\mathfrak{M}'_w\vDash \Gamma$ and $\mathfrak{M}'_w, \mathfrak{u}_w\vDash\Gamma_w$.
	Define 
	$$\mathfrak{M}' =  (\biguplus_{w \in \Omega} \mathfrak{M}_w)\upharpoonright_{\mathcal{B}'} = \langle \mathfrak{P}',v'\rangle,$$ 
	and $\mathfrak{P}' = \langle \Omega', \mathcal{B}', T',v'\rangle$, where $\mathcal{B}' = \{[\![\varphi]\!]_{\biguplus_{w \in \Omega}\mathfrak{N}_w}|\; \varphi \in  \mathcal{L}_{\mathfrak{P}}\}$. Thus, $ \mathfrak{Ue}(\mathfrak{P'})\in \mathcal{C}$.
	
	\medskip
	\noindent
	{\bf Claim 2:} There is a zigzag morphism from $\mathfrak{Ue}(\mathfrak{P}')$ to $\mathfrak{Ue}(\mathfrak{P})$.
	
	\medskip
	\noindent	
	Define $f: \mathfrak{Ue}(\mathfrak{P}') \to \mathfrak{Ue}(\mathfrak{P})$ by $f(\mathfrak{u}') = \{A \in \mathcal {B}\;|\; \mathfrak{Ue}(\mathfrak{M}'), \mathfrak{u}' \vDash p_A\}$, for each $\mathfrak{u}' \in \Omega'$.
	\begin{enumerate}
		\item $Im(f)\subseteq\mathfrak{Ue}(\mathfrak{P})$:  By the definition of $\Gamma$ and this fact that $p_A\to p_B \equiv \top$ for each $A\subseteq B \in \mathcal{B}$, it is easy to see that $f(\mathfrak{u}')$ is an ultrafilter over $\mathcal{B}$, for each $\mathfrak{u}' \in \Omega'$. To prove that it is a nested Archimedean ultrafilter, assume that $$F_{t_1\dots t_n s} A \in f(\mathfrak{u}') \; \; \; \; \; \forall s<r.$$ 
		Then, 
		$$\mathfrak{Ue}(\mathfrak{M}), \mathfrak{u}' \vDash L_{t_1\dots t_n s} p_A \; \; \; \; \; \forall s<r.$$
		So we have $\mathfrak{Ue}(\mathfrak{M}), \mathfrak{u}' \vDash L_{t_1\dots t_n r} p_A$ and by applying axiom 4 of $\Gamma$, $\mathfrak{Ue}(\mathfrak{M}), \mathfrak{u}' \vDash p_{F_{t_1\dots t_n r} A}$.
		So $F_{t_1\dots t_n r} A\in f(\mathfrak{u}')$.
		\item $f$ is surjective: Let $\mathfrak{u}$ be a nested Archimedean ultrafilter over $\mathcal{B}$. Then, consider $\mathfrak{v}:= \{[\![p_A]\!]_{\mathfrak{M}'}\;|\; A\in \mathfrak{u}\}$. We show that $\mathfrak{v}$ is a nested Archimedean ultrafilter on $\mathcal{B}'$. Since all formulas of $\Gamma$ are valid in $\mathfrak{M}'$, it follows that every formula $\varphi$ in the language $\mathcal{L}_{\mathfrak{P}}$ is logically equivalent to a propositional variable $p_A$. Now since $\mathfrak{u}$ is an ultrafilter over $\mathcal{B}$, either  $[\![p_A]\!]_{\mathfrak{M}'}\in  \mathfrak{v} $ or $[\![p_A]\!]^c_{\mathfrak{M}'}\in  \mathfrak{v}$.
		Also, if $[\![p_A]\!]_{\mathfrak{M}'} , [\![p_B]\!]_{\mathfrak{M}'} \in \mathfrak{v}$, then so  $[\![p_A]\!]_{\mathfrak{M}'} \cap [\![p_B]\!]_{\mathfrak{M}'} = [\![p_{A\cap B}]\!]_{\mathfrak{M}'} \in \mathfrak{v}$.
		Moreover, since $A\in \mathfrak{u}$ is non-empty,  there exists $w\in A$. Therefore, $[\![p_A]\!]_{\mathfrak{M}'} \in \mathfrak{u}_w = \{[\![\varphi]\!]_{\mathfrak{N}_w}|\; \varphi \in \Gamma_w\}$. This implies $[\![p_A]\!]_{\mathfrak{M}'}\neq\emptyset$.
		Hence $\emptyset\not\in\mathfrak{v}$. 
		Finally, to show that $\mathfrak{v}$  is a nested Archimedean ultrafilter, let $A\in \mathfrak{u}$ and 
		$$F_{t_1\dots t_ns} [\![p_A]\!]_{\mathfrak{M}'}\in\mathfrak{v} \; \; \; \; \;  \forall s<r.$$ 
		Then  
		$$F_{t_1\dots t_ns} [\![p_A]\!]_{\mathfrak{M}'} =  [\![p_{F_{t_1\dots t_ns}  A}]\!]_{\mathfrak{M}'} \in \mathfrak{v}  \; \; \; \; \;  \forall s<r.$$ 
		Therefore, 
		$$F_{t_1\dots t_ns}  A\in \mathfrak{u}  \; \; \; \; \;   \forall s<r.$$ 
		But since $\mathfrak{u}$ is a nested Archimedean ultrafilter, it follows that $F_{t_1\dots t_nr}  A\in \mathfrak{u}$. So,  $[\![p_{F_{t_1\dots t_nr}  A}]\!]_{\mathfrak{M}'}   = F_{t_1\dots t_nr}  [\![p_{ A}]\!]_{\mathfrak{M}'} \in  \mathfrak{v}$.
		\item $f$ is a measurable zigzag morphism:  We have to verify that for $A^* = \{\mathfrak{u}\in \mathcal{U}^*_{na}\;|\; A\in \mathfrak{u}\}$,   $ f^{-1}(A^*)$ is measurable and
		\[T'^*_{na}(\mathfrak{u}, f^{-1}(A^*)) = T^*_{na}(f(\mathfrak{u}), A^*) \;\;\;\;(\star) \]
		for each $\mathfrak{u}\in \Omega'$. To see the measurability of $f$,  notice that
		\[f^{-1}(A^*)= \{\mathfrak{u}'\in \mathcal{U}'^*_{na} \; | \;  [\![p_A]\!]_{\mathfrak{M}'} \in \mathfrak{u}'\} \in \mathcal{B}^*_{na}.\]
		On the other hand, to show $(\star)$, assume  for some $r\in \mathbb{Q}_0$, we have \[T'^*_{na}(\mathfrak{u}, f^{-1}(A^*)) = \lim_{\mathfrak{u}'} T'(w', [\![p_A]\!]_{\mathfrak{M}'}) \geq r.  \]  
		As $\mathfrak{u}'$ is a nested Archimedean ultrafilter and $[\![p_{F_r A}]\!]_{\mathfrak{M}'}  = [\![L_r p_{A}]\!]_{\mathfrak{M}'} $, we have $[\![p_{F_r A}]\!]_{\mathfrak{M}'}  \in \mathfrak{u}'$.
		Therefore, $F_r A \in f(\mathfrak{u}')$. So $\mathfrak{Ue}(\mathfrak{M}), \mathfrak{u}'\vDash L_r p_A$ which implies that \[T_{\mathcal{G}}(f(\mathfrak{u}'), [\![p_A]\!]_{\mathfrak{Ue}(\mathfrak{M})}) = T_{\mathcal{G}}(f(\mathfrak{u}'), A^*)\geq r.\]
		Conversely, one can show that if $T^*_{na}(f(\mathfrak{u}'), A^*)\geq r$, then so $ T'^*_{na}(\mathfrak{u}', f^{-1}(A^*))\geq r$. Thus, $(\star)$ is achieved. 		
	\end{enumerate}	
	Since $f$ is a zigzag morphism, $\mathfrak{Ue(P)}\in\mathcal{C}$. 
	Moreover, as $\mathcal{C}$ reflects ultrafilter extensions, it follows that $\mathfrak{P}\in\mathcal{C}$.
\end{proof}	
%%
%%
%%%%%%%%%%%%%%%%%%%%%%%%%%%%%%%%%%%%%%%%
%%%%%%%%%%%%%%%%%%%%%%%%%%%%%%%%%%%%%%%%
\subsection{Some Examples}
%\subsection{Harsanyi type Markov processes}

Here in this subsection, we show that the class of Harsanyi type spaces \cite{harsan:games68} is an important instance for a P-definable class which is closed under ultrafilter extensions, and has the GT-property.

\begin{example}\label{harsanyi exm}
	For a Markov process $\mathfrak{P}=\langle{\Omega, \Sigma, T}\rangle$ and $w\in \Omega$ let 
	$$[T(w)] = \{w'\in \Omega \mid T(w)=T(w')\}.$$
	Recall that a Markov process $\mathfrak{P}$ is of {\em Harsanyi type} if
	$T(w,E)=1$, for each $w\in\Omega$ and $E\in\Sigma$ with $[T(w)]\subseteq E$.
	Note that $[T(w)]$ is not necessarily in $\Sigma$, unless $\Sigma$ is countably generated \cite[Page 10]{zhou:hars14}.
	In fact in case of countably generated Markov process $\mathfrak{P}=\langle\Omega,\mathcal{B},T\rangle$, 
	\begin{align}\label{t(w)}
		[T(w)] & = \bigcap_{B\in \mathcal{B}} \{w'\in \Omega\;|\; T(w)(B) = T(w')(B) \}\\
		&= \bigcap_{B\in \mathcal{B}} \bigcap_{r\in \mathbb{Q}_0} \{w'\in \Omega\;|\; T(w)(B)\geq r\leftrightarrow   T(w')(B)\geq r \}.
	\end{align}
	We denote the class of all Harsanyi type spaces by $\mathcal{H}ar$.
\end{example}
%%
%%	

%	\begin{proposition}
	%		Let $\mathfrak{P}$ be a Markov process. Then $\mathfrak{P}\in\mathcal{H}ar$ if and only if every countably generated event sub-process $\mathfrak{P}'$ of $\mathfrak{P}$ is in $\mathcal{H}ar$.
	%	\end{proposition}
%	\begin{proof}
	%		Let $\mathfrak{P}'= \langle\Omega,\Sigma',T'\rangle$ with $T'=T\upharpoonright_{\Sigma'}$ be a countably generated event sub-process of $\mathfrak{P}$. Then for each $w\in\Omega$ we have
	%		$[T'(w)]\in\Sigma'\subseteq\Sigma$ and 
	%		$[T(w)]\subseteq [T'(w)]$.
	%		Moreover, $T'(w,E) = T(w,E)$, for each $E\in\Sigma'$. 
	%		Therefore, if $\mathfrak{P}\in\mathcal{H}ar$, then so $\mathfrak{P}'\in\mathcal{H}ar$.
	%		Now assume that $\mathfrak{P}$ is not Harsanyi.
	%		So, $T(w,E)<1$ for some $w\in\Omega$ and some $E\in\Sigma$ with $[T(w)]\subseteq E$.??????
	%	\end{proof}

\begin{proposition}\label{har def}
	Assume that $\mathfrak{P}$ is a countably generated Harsanyi type space.
	Then $\mathfrak{P}\in\mathcal{H}ar$ if and only if the following set of formulas is valid in $\mathfrak{P}$:
	$$\Gamma_{har}= \{L_rp\to L_1L_rp \mid r\in\mathbb{Q}_0 \} \cup \{\neg L_rp\to L_1\neg L_rp \mid r\in\mathbb{Q}_0 \}.$$
\end{proposition}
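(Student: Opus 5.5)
The statement as written ("assume $\mathfrak{P}$ is a countably generated Harsanyi type space. Then $\mathfrak{P}\in\mathcal{H}ar$ iff \dots") is presumably meant with $\mathfrak{P}$ an arbitrary countably generated Markov process $\langle\Omega,\mathcal{B},T\rangle$. I will prove both directions for such $\mathfrak{P}$. The key tool is the representation of $[T(w)]$ from Example~\ref{harsanyi exm}: as a countable intersection of sets of the form $\{w'\mid T(w')(B)\ge r\}$ or their complements, with $B\in\mathcal{B}$ and $r\in\mathbb{Q}_0$. Because of the standing assumption in Definition~\ref{coun gen def}, each such set is $F_r(B)\in\mathcal{B}$ or $F_r(B)^c\in\mathcal{B}$. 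In particular $[T(w)]\in\Sigma$.

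\emph{Left to right.} Let $\mathfrak{P}$ be Harsanyi, and fix a model with valuation $v$, a rational $r$, and a state $w$. Put $E=[\![L_rp]\!]$, which is measurable. If $w\in E$, then every $w'\in[T(w)]$ satisfies $T(w')([\![p]\!])=T(w)([\![p]\!])\ge r$, so $[T(w)]\subseteq E$. The Harsanyi condition then gives $T(w,E)=1$, so $w\models L_1L_rp$. If $w\notin E$, the same argument with $E^c=[\![\neg L_rp]\!]$ shows $[T(w)]\subseteq E^c$, hence $w\models L_1\neg L_rp$. Both families in $\Gamma_{har}$ are therefore valid. This direction uses only measurability of $[\![\varphi]\!]$ and does not need countable generation.

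\emph{Right to left.} Assume $\Gamma_{har}$ is valid and fix $w\in\Omega$. For each $B\in\mathcal{B}$ and $r\in\mathbb{Q}_0$, choose the valuation $v(p)=B$, which is allowed since valuations take values in $\mathcal{B}$. If $T(w)(B)\ge r$, the first schema gives $T(w,F_r(B))=1$. Otherwise the second schema gives $T(w,F_r(B)^c)=1$. In either case $T(w,S_{B,r})=1$, where $S_{B,r}=\{w'\mid T(w)(B)\ge r\leftrightarrow T(w')(B)\ge r\}$. Since $\mathcal{B}\times\mathbb{Q}_0$ is countable and countable intersections of measure-one sets have measure one, the displayed equality for $[T(w)]$ yields $T(w,[T(w)])=1$. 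Monotonicity then gives $T(w,E)=1$ for every $E\in\Sigma$ with $[T(w)]\subseteq E$, so $\mathfrak{P}\in\mathcal{H}ar$.

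The main obstacle is justifying the equality in Example~\ref{harsanyi exm}: if $T(w)$ and $T(w')$ agree on the generating algebra $\mathcal{B}$, they agree on $\Sigma=\sigma(\mathcal{B})$. This follows from the uniqueness part of Carathéodory's theorem, or equivalently the $\pi$-$\lambda$ theorem, because $\mathcal{B}$ is a $\pi$-system generating $\Sigma$. A second point to check is the step from "agree on $B$" to the rational thresholds: $T(w)(B)=T(w')(B)$ holds iff the two values lie on the same side of every rational $r$, which is clear by density of $\mathbb{Q}_0$ in $[0,1]$.
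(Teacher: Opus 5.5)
Your proof is correct and is essentially the paper's. You rightly read the hypothesis as ``$\mathfrak{P}$ is a countably generated Markov process'', as the paper's proof does, and your converse uses the same countable decomposition of $[T(w)]$ into sets $F_r(B)$ or $F_r(B)^c$ with the same valuation $v(p)=B$. You phrase it directly (each piece has full measure) where the paper argues by contraposition (some piece of $[T(w)]^c$ has positive measure). The only additions are that you prove the left-to-right direction yourself instead of citing Zhou, and you justify the displayed formula for $[T(w)]$ via the $\pi$-$\lambda$ theorem, which the paper takes for granted.
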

\begin{proof}%[Proof of Proposition~\ref{har def}]
	It is known that,\cite{zhou:hars14}, these formulas are valid on any Harsanyi type space.
	
	%By the above mentioned fact, it is clear that $HAR$ is valid in the class (countably generated) $\mathcal{H}ar$. 
	Now assume that $\mathfrak{P}=\langle{\Omega,\Sigma, T}\rangle$ is a countably generated Markov process which is not Harsanyi type. Hence, there is $w\in\Omega$ such that $T(w)([T(w)])<1$. Therefore, $T(w)([T(w)]^c)>0$. Let $\Sigma=\sigma(\mathcal{B})$ where $\mathcal{B}$ is countable. By \ref{t(w)}, we have
	$$[T(w)]^c = \bigcup_{B\in \mathcal{B}} \bigcup_{r\in \mathbb{Q}_0} \{w'\in \Omega\;|\; T(w)(B)\geq r\leftrightarrow  T(w')(B)\geq r \}^c.$$ Thus, there are  $B\in \mathcal{B}$ and $r \in \mathbb{Q}_0$ such that either $T(w)(B) \geq r$, while $T(w)((F_r (B))^c) >0$ or $T(w)(B)< r$, while $T(w)(F_r (B)) >0$.
	In either of these cases, if we define a valuation $v$ over $\mathfrak{P}$ as $v(p) = B$, then we turn $\mathfrak{P}$ into a model $\mathfrak{M}$ such that we have either \[ \mathfrak{M},w\models L_rp\wedge \neg L_1L_rp \;\;\;\;\text{or} \;\;\;\; \mathfrak{M},w\models \neg L_rp\wedge  L_1 \neg L_rp.\]
\end{proof}

%{\color{orange} Find an example of non-countably generated space which is not Harsanyi but $HAR$ is valid in it! Or improve the above example to the class of all spaces.}

\begin{proposition}\label{har closed ue}
	The class of all countably generated Harsanyi type spaces is closed under ultrafilter extensions.
\end{proposition}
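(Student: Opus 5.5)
The plan is to go through the syntactic characterization rather than through the semantics of $[T(w)]$ directly. Let $\mathfrak{P}=\langle \Omega,\mathcal{B},T\rangle$ be a countably generated Harsanyi type space. By Proposition~\ref{har def}, $\mathfrak{P}\vDash\Gamma_{har}$. By Corollary~\ref{ue preserve}, validity is preserved by the ultrafilter extension, so $\mathfrak{Ue(P)}\vDash\Gamma_{har}$. By Corollary~\ref{P*} and Lemma~\ref{lem sub}(2), $\mathfrak{Ue(P)}=\langle \mathcal{U}^*_{na},\mathcal{B}^*_{na},T^*_{na}\rangle$ is again countably generated, with the countable algebra $\mathcal{B}^*_{na}$. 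It then remains to apply the converse half of Proposition~\ref{har def} to $\mathfrak{Ue(P)}$ and conclude that it is of Harsanyi type.

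The point needing care is that the converse direction of Proposition~\ref{har def} is really a statement about arbitrary countably generated processes. Its proof only uses that $\Sigma=\sigma(\mathcal{B})$ with $\mathcal{B}$ countable, and that $F_r(B)\in\mathcal{B}$. So I will check that $\mathcal{B}^*_{na}$ is closed under $F_r$. This is exactly Lemma~\ref{star}(3): $F_r(B^*\cap\mathcal{U}^*_{na})=(F_r(B))^*\cap\mathcal{U}^*_{na}$. With this, the representation \eqref{t(w)} of $[T^*_{na}(\mathfrak{u})]$ as a countable intersection holds in $\mathfrak{Ue(P)}$. If $\mathfrak{Ue(P)}$ were not Harsanyi, the argument of Proposition~\ref{har def} would produce $\mathfrak{u}$, $B^*_{na}$ and $r$ together with a valuation $v(p)=B^*\cap\mathcal{U}^*_{na}$ falsifying an instance of $\Gamma_{har}$ at $\mathfrak{u}$. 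That contradicts $\mathfrak{Ue(P)}\vDash\Gamma_{har}$.

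The main obstacle is the precise form of the validity transfer. Corollary~\ref{ue preserve} as stated concerns the fixed generating algebra, with valuations into $\mathcal{B}^*_{na}$. But the failure instance produced above only needs a valuation into $\mathcal{B}^*_{na}$, so that is sufficient. If one wants validity on the full $\sigma$-algebra, Proposition~\ref{validity} reduces it to countable subalgebras anyway. Should the preservation step prove delicate, the fallback is a direct argument. For $\mathfrak{u}\in\mathcal{U}^*_{na}$ and each $B$ and $r$, the instances $L_rp\to L_1L_rp$ and $\neg L_rp\to L_1\neg L_rp$ with $v(p)=B$, valid in $\mathfrak{P}$, give $F_r(B)\subseteq F_1F_r(B)$ and $F_r(B)^c\subseteq F_1(F_r(B)^c)$. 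Hence by Lemma~\ref{lem ue} and Lemma~\ref{star}(3), $T^*_{na}(\mathfrak{u},(F_r(B))^*\cap\mathcal{U}^*_{na})=1$ whenever $F_r(B)\in\mathfrak{u}$, and similarly for complements. Countable additivity of $T^*_{na}(\mathfrak{u},\cdot)$ then gives measure $1$ to the countable intersection equal to $[T^*_{na}(\mathfrak{u})]$, which is the Harsanyi condition.
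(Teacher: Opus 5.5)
Your proof is correct, but your main route differs from the paper's.

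\textbf{The paper's proof.} The paper argues directly on the measure. It writes $[T^*_{na}(\mathfrak{u})]$ as the countable intersection, over $B\in\mathcal{B}$ and $r\in\mathbb{Q}_0$, of the sets $\{\mathfrak{u}'\in\mathcal{U}^*_{na} \mid T^*_{na}(\mathfrak{u},B^*)\geq r \leftrightarrow T^*_{na}(\mathfrak{u}',B^*)\geq r\}$. It identifies each such set as $(F_r(B))^*\cap\mathcal{U}^*_{na}$ or $((F_r(B))^c)^*\cap\mathcal{U}^*_{na}$. It then shows each has $T^*_{na}(\mathfrak{u},\cdot)$-measure $1$ by evaluating $\lim_{\mathfrak{u}}$. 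This uses the Harsanyi property of $\mathfrak{P}$ in the form $F_r(B)\subseteq F_1F_r(B)$ and $(F_r(B))^c\subseteq F_1((F_r(B))^c)$. It also uses the nested Archimedean property of $\mathfrak{u}$ to pass from $T^*_{na}(\mathfrak{u},B^*)\geq r$ to $F_r(B)\in\mathfrak{u}$. Your fallback is essentially this proof, with Lemma~\ref{star}(3) absorbing the Archimedean step.

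\textbf{Your main route.} You factor through the syntactic characterization instead:
\begin{itemize}
\item Harsanyi gives $\Gamma_{har}$ for valuations into $\mathcal{B}$.
\item Corollary~\ref{ue preserve} transports this to valuations into $\mathcal{B}^*_{na}$. That corollary does follow from Lemma~\ref{lem ue}, since every valuation into $\mathcal{B}^*_{na}$ has the form $v^*_{na}$ for some $v$ into $\mathcal{B}$.
\item The converse half of Proposition~\ref{har def} is re-run on $\mathfrak{Ue(P)}$. As you rightly observe, it only needs a countable generating algebra closed under $F_r$, and it produces falsifying valuations inside that algebra.
\end{itemize}

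\textbf{What each approach buys.} Your route is modular. The same steps show that any class of countably generated processes characterized in this sense by a set of formulas is closed under ultrafilter extensions, with no limit computations. It also isolates explicitly that $\mathfrak{Ue(P)}$ is countably generated in the strict sense. The cost is that it reads Proposition~\ref{har def} through its proof rather than its (oddly phrased) statement. The paper's computation is self-contained and shows exactly where the nested Archimedean condition enters.

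\textbf{Two minor remarks.} First, Corollary~\ref{P*} together with Lemma~\ref{lem sub}(2) would not by itself give countable generation in the sense of Definition~\ref{coun gen def}. The proof of Corollary~\ref{P*} only places $F_r(B^*)=\bigcap_{s<r}(F_s(B))^*$ in $\Sigma^*$, and this closed set is generally not clopen. It is your appeal to Lemma~\ref{star}(3), giving $F_r(B^*\cap\mathcal{U}^*_{na})=(F_r(B))^*\cap\mathcal{U}^*_{na}$, that does the work. Second, your aside that Proposition~\ref{validity} would upgrade validity to the full $\sigma$-algebra does not follow. That proposition quantifies over all countable subalgebras, whereas Corollary~\ref{ue preserve} concerns only $\mathcal{B}^*_{na}$. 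As you note, though, the aside is not needed.
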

\begin{proof}%[Proof of Proposition~\ref{har closed ue}]
	Assume that $\mathfrak{P}=\langle\Omega,\mathcal{B},T\rangle$ is a Harsanyi type space with $\mathcal{B}$ countable.
	%	By Proposition~\ref{har def} and Corollary~\ref{ue preserve}, we have that  $\langle \mathcal{U}^*_{na}, \mathcal{B}^*_{na}, T^*_{na}\rangle \vDash\Gamma_{har}$.
	Let $\mathfrak{u}\in\mathcal{U}^*_{na}$. Then
	\begin{align}
		[T_{na}^*(\mathfrak{u})] & = \bigcap_{B^*\in \mathcal{B}^*_{na}} \{\mathfrak{u}'\in \mathcal{U}^*_{na}\;|\; T^*_{na}(\mathfrak{u}, B^*) = T^*_{na}(\mathfrak{u}' ,B^*) \}\\
		&= \bigcap_{B^*\in \mathcal{B}^*_{na}} \bigcap_{r\in \mathbb{Q}_0} \{\mathfrak{u}'\in \mathcal{U}^*_{na}\;|\; T^*_{na}(\mathfrak{u} ,B^*)\geq r\leftrightarrow   T^*_{na}(\mathfrak{u}',B^*)\geq r \}.
	\end{align}
	Now for $B\in\mathcal{B}$ and $r\in\mathbb{Q}_0$ there are two cases to consider.
	
	Case 1: $T^*_{na}(\mathfrak{u},B^*)< r$. In this case 
	\begin{align*}
		\{\mathfrak{u}'\in \mathcal{U}^*_{na}\;|\; T^*_{na}(\mathfrak{u} ,B^*)\geq r\leftrightarrow   T^*_{na}(\mathfrak{u}',B^*)\geq r \}  & =\{\mathfrak{u}'\in \mathcal{U}^*_{na}\;|\; T^*_{na}(\mathfrak{u}',B^*)< r \}\\
		& =  \{\mathfrak{u}'\in \mathcal{U}^*_{na}\;|\; (F_r(B))^c \in\mathfrak{u'}\}.
	\end{align*}
	So
	\begin{align*}
		T^*_{na}(\mathfrak{u}, \{\mathfrak{u}'\in \mathcal{U}^*_{na}\;|\; T^*_{na}(\mathfrak{u}',B^*)< r \}) &= \lim_{\mathfrak{u}} T(w,(F_r(B))^c)
	\end{align*}
	On the other hand 
	$T^*_{na}(\mathfrak{u},B^*) =\lim_{\mathfrak{u}} T(w,B)<r$.
	Therefore $\{w\in\Omega \; \mid\; T(w,B)<r\}\in\mathfrak{u}$.
	Notice that since $\mathfrak{P}$ is Harsanyi,
	$$\{w\in\Omega \mid T(w,B)<r\} \subseteq \{w\in\Omega \mid T(w,(F_r(B))^c)=1\}\in\mathfrak{u}.$$
	Thus $\lim_{\mathfrak{u}} T(w,(F_r(B))^c) =1$.
	
	Case 2: $T^*_{na}(\mathfrak{u},B^*)\geq r$.
	In this case we have 
	\begin{align*}
		\{\mathfrak{u}'\in \mathcal{U}^*_{na}\;|\; T^*_{na}(\mathfrak{u} ,B^*)\geq r\leftrightarrow   T^*_{na}(\mathfrak{u}',B^*)\geq r \}  & =\{\mathfrak{u}'\in \mathcal{U}^*_{na}\;|\; T^*_{na}(\mathfrak{u}',B^*)\geq r \}\\
		& =  \{\mathfrak{u}'\in \mathcal{U}^*_{na}\;|\; F_s(B) \in\mathfrak{u'}, \; \forall s<r\}\\
		& =  \{\mathfrak{u}'\in \mathcal{U}^*_{na}\;|\; F_r(B) \in\mathfrak{u'}\}.
	\end{align*}
	We also have $T^*_{na}(\mathfrak{u},B^*) =\lim_{\mathfrak{u}}T(w,B) >s$, for each $s<r$.
	So, 
	$$\{w\in\Omega \mid T(w,B)> s\}\subseteq\{w\in\Omega \mid T(w,B)\geq s\}\in\mathfrak{u}, \; \; \forall s<r.$$
	Hence $\{w\in\Omega \mid T(w,B)\geq r\} \in\mathfrak{u}$, since $\mathfrak{u}$ is a nested Archimedean ultrfilter.
	Again this yields that $\{w\in\Omega \mid T(w,F_r(B)) =1 \}\in\mathfrak{u}$, and
	$\lim_{\mathfrak{u}} T(w,F_r(B)) =1$.
	The above argument shows that 
	$T^*_{na}(\mathfrak{u}, (F_r(B))^*)=1$.
	
	From both cases we conclude  that 
	$$T^*_{na}(\mathfrak{u},\{\mathfrak{u}'\in \mathcal{U}^*_{na}\;|\; T^*_{na}(\mathfrak{u} ,B^*)\geq r\leftrightarrow   T^*_{na}(\mathfrak{u}',B^*)\geq r \})=1,$$ for each $B\in\mathcal{B}$ and $r\in\mathbb{Q}_0$.
	Thus $T^*_{na}(\mathfrak{u},[T^*_{na}(\mathfrak{u})])=1$.
\end{proof}

\begin{proposition}\label{har gt}
	The class $\mathcal{H}ar$ has the GT-property. 
	%of Harsanyi type Markov process is closed under the taking event sub-processes, zigzag morphisms and disjoint unions. Furthermore, the class of countably generated Harsanyi type Markov process is closed under the ultrafilter extensions.
\end{proposition}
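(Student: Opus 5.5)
We verify the conditions of Lemma~\ref{lem closed} one by one for $\mathcal{H}ar$: closure under Markov sub-processes, event sub-processes, disjoint unions and zigzag images, reflection of ultrafilter extensions, and the fact that $\mathfrak{P}\in\mathcal{H}ar$ iff every countably generated event sub-process is in $\mathcal{H}ar$. The shortest route is semantic. By Proposition~\ref{har def}, a countably generated $\mathfrak{P}$ is Harsanyi iff $\mathfrak{P}\vDash\Gamma_{har}$. The class $\mathcal{C}_{\Gamma_{har}}$ is closed under all the constructions by Lemma~\ref{lem closed}, so for countably generated processes each closure property transfers at once. This requires checking that the constructions preserve countable generation: sub-processes do by Lemma~\ref{lem sub}, and countable disjoint unions and zigzag images of countably generated processes must be treated separately. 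Reflection of ultrafilter extensions then follows from Lemma~\ref{ue reflect} together with Proposition~\ref{har def}, since $\mathfrak{Ue(P)}$ is countably generated by Corollary~\ref{P*}.

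For arbitrary processes I would argue directly from the definition. For a sub-process $\Omega'$ and $w\in\Omega'$, the class $[T'(w)]$ computed in $\mathfrak{P}'$ equals $[T(w)]\cap\Omega'$. This holds because $T'(w')=T(w')$ on traces, and $T(w',\Omega')=1$ makes the measures agree iff they agree on $\Sigma$. If $E'=E\cap\Omega'\supseteq[T'(w)]$, then $E\cup(\Omega\setminus\Omega')\supseteq[T(w)]$, so $T'(w,E')=T(w,E\cup\Omega'^c)=1$. For a disjoint union, $[T(w)]$ stays inside the component of $w$, so the property is checked componentwise. For a zigzag image $f:\mathfrak{P}\to\mathfrak{P}'$, we have $f([T(w)])\subseteq[T'(f(w))]$ because $T'(f(w))=T(w)\circ f^{-1}$. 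Hence if $E'\supseteq[T'(f(w))]$, then $f^{-1}(E')\supseteq[T(w)]$ and $T'(f(w),E')=T(w,f^{-1}(E'))=1$. For an event sub-process $\Sigma'\subseteq\Sigma$, the class $[T'(w)]$ computed with respect to $\Sigma'$ contains $[T(w)]$, so any $E\in\Sigma'$ containing $[T'(w)]$ contains $[T(w)]$ and has measure one. This gives both event sub-process closure and the forward direction of the final clause.

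The main obstacle is the converse of the last clause: if every countably generated event sub-process is Harsanyi, then $\mathfrak{P}$ is. Suppose $E\in\Sigma$ contains $[T(w)]$ but $T(w,E)<1$. I would try to find countably many sets $B_n\in\Sigma$ such that $E\supseteq\bigcap_{n,r}\{w' : T(w)(B_n)\ge r\leftrightarrow T(w')(B_n)\ge r\}$, then take $\Sigma'$ generated by the $B_n$, $E$, and closure under the $F_r$ (still countable). In $\Sigma'$ one has $[T'(w)]\subseteq E$, so $T(w,E)=1$, a contradiction. Finding such $B_n$ is not automatic for uncountably generated $\Sigma$, so I expect this step to need either a restriction of the statement to countably generated members or a careful choice of $\Sigma'$. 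Failing that, I would state the proposition for the countably generated class, where the clause is immediate because the process is its own countably generated event sub-process.
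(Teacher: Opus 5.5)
\textbf{Assessment.} Your direct verifications are correct: closure under sub-processes, disjoint unions, zigzag images and event sub-processes, and the forward half of the last clause. Your argument for reflection of ultrafilter extensions is the paper's own: Harsanyi gives $\Gamma_{har}$, then Lemma~\ref{ue reflect}, then Proposition~\ref{har def} for the countably generated $\mathfrak{P}$. The paper proves only that clause and calls the rest easy, so your computations supply more than its proof does. The genuine gap is the one you flagged, the right-to-left half of the final clause of Lemma~\ref{lem closed}. It cannot be filled by any choice of $\Sigma'$, because that half is false for $\mathcal{H}ar$, and the paper's proof does not address it either.

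\textbf{Counterexample.} Take $I$ uncountable, $\Omega=\{0,1\}^I$, and $\mathcal{T}$ the product $\sigma$-algebra. Every nonempty member of $\mathcal{T}$ is determined by countably many coordinates, hence is uncountable, so a countable member of $\mathcal{T}$ is empty; also $\mathcal{T}$ separates points. Let $\Sigma=\{A\mathbin{\Delta}C\mid A\in\mathcal{T},\ C\subseteq\Omega \text{ countable}\}$. This is a $\sigma$-algebra containing all singletons, and $A$ is uniquely determined by $A\mathbin{\Delta}C$. Set $T(w,A\mathbin{\Delta}C)=\mathbf{1}_A(w)$. If the sets $A_n\mathbin{\Delta}C_n$ are pairwise disjoint, then $A_n\cap A_m\subseteq C_n\cup C_m$ is a countable member of $\mathcal{T}$, so the $A_n$ are disjoint. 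Hence $T$ is a Markov kernel. Since $\mathcal{T}$ separates points, $[T(w)]=\{w\}\in\Sigma$, yet $T(w,\{w\})=0$, so $\mathfrak{P}=\langle\Omega,\Sigma,T\rangle\notin\mathcal{H}ar$.

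On the other hand, for $B=A\mathbin{\Delta}C$ and $r>0$ we have $F_r(B)=A$ and $T(w,A)=\mathbf{1}_A(w)$. So $\Gamma_{har}$ is valid in $\mathfrak{P}$ (the case $r=0$ is trivial). It is therefore valid in every countably generated event sub-process, and each of these is Harsanyi by Proposition~\ref{har def}.

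\textbf{Why your plan fails and what survives.} Your specific strategy visibly fails here. With $E=\{w\}$, every countable intersection $\bigcap_{n,r}\{w'\mid T(w)(B_n)\ge r\leftrightarrow T(w')(B_n)\ge r\}$ is a nonempty member of $\mathcal{T}$, so it never lies inside $E$. Consequently $\mathcal{H}ar$ lacks the GT-property, and by Lemma~\ref{lem closed} it is not P-definable over all Markov processes. The statement survives only in your fallback form, with all quantification restricted to countably generated processes. Restricting the class alone is not enough: a non-countably generated Harsanyi process has all of its countably generated event sub-processes Harsanyi.
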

\begin{proof}%[Proof of Proposition ref{har gt}]
	%The proof follows from Lemma \ref{lem closed}, Corollary~\ref{ue reflect} and Example~\ref{harsanyi exm}.
	We only show that $\mathcal{H}ar$ reflects the ultrafilter extensions, since the other properties can be easily shown.
	Assume that $\mathfrak{Ue(P)}=\langle \mathcal{U}^*_{na}, \Sigma^*,T^*_{na}\rangle \in \mathcal{H}ar$ for a countably generated $\mathfrak{P}=\langle\Omega,\Sigma,T\rangle$.
	Then $\mathfrak{Ue(P)}\models\Gamma_{har}$. So, by Lemma \ref{ue reflect}, we have $\mathfrak{P}\models\Gamma_{har}$.
	But, since $\mathfrak{P}$ is countably generated by Lemma \ref{har def}, we conclude that $\mathfrak{P}\in\mathcal{H}ar$
\end{proof}

\begin{corollary}
	The class $\mathcal{H}ar$ is P-definable.
\end{corollary}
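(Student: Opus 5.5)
The plan is to apply the Goldblatt-Thomason Theorem of Section~\ref{sec gt} to $\mathcal{C}=\mathcal{H}ar$. That theorem needs two hypotheses: that $\mathcal{H}ar$ is closed under ultrafilter extensions, and that it has the GT-property. The second is exactly Proposition~\ref{har gt}. The first is essentially Proposition~\ref{har closed ue}. There is a small gap: ultrafilter extensions are only formed for countably generated processes, and Proposition~\ref{har closed ue} is stated for the class of countably generated Harsanyi spaces. So I would first note that if $\mathfrak{P}\in\mathcal{H}ar$ is countably generated, Proposition~\ref{har closed ue} gives $\mathfrak{Ue(P)}\in\mathcal{H}ar$. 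This is exactly the closure condition recalled before the theorem, which quantifies only over countably generated $\mathfrak{P}$. With both hypotheses in place, the theorem yields P-definability, and from its proof the defining set is $Th(\mathcal{H}ar)$.

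As a sanity check, and as a more concrete alternative, I would also aim to show that $\Gamma_{har}$ itself defines $\mathcal{H}ar$. The inclusion $\mathcal{H}ar\subseteq\{\mathfrak{P}\mid\mathfrak{P}\models\Gamma_{har}\}$ holds because these formulas are valid on every Harsanyi type space, as cited in the proof of Proposition~\ref{har def}. For the converse, suppose $\mathfrak{P}\models\Gamma_{har}$. By Proposition~\ref{validity}, every countably generated event sub-process $\mathfrak{P}'$ of $\mathfrak{P}$ also validates $\Gamma_{har}$. The argument in the second half of the proof of Proposition~\ref{har def} never used that $\mathfrak{P}$ is Harsanyi, only that it is countably generated, so it shows each such $\mathfrak{P}'$ is in $\mathcal{H}ar$. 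The last clause of the GT-property, which holds for $\mathcal{H}ar$ by Proposition~\ref{har gt}, states that $\mathfrak{P}\in\mathcal{H}ar$ iff every countably generated event sub-process of $\mathfrak{P}$ is in $\mathcal{H}ar$. Hence $\mathfrak{P}\in\mathcal{H}ar$.

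The main obstacle is this last clause: passing from countably generated event sub-processes back to an arbitrary $\mathfrak{P}$. If some $E\in\Sigma$ with $[T(w)]\subseteq E$ had $T(w,E)<1$, I would take a countable subalgebra containing $E$ and closed under the $F_r$. Its event sub-process $\mathfrak{P}'$ has $[T'(w)]\supseteq[T(w)]$, but $[T'(w)]$ need not lie inside $E$. So I would rely on Proposition~\ref{har gt} as given rather than reprove it.
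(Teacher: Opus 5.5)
Your main route is the paper's own: the corollary is read off from the Goldblatt--Thomason theorem using Proposition~\ref{har closed ue} and Proposition~\ref{har gt}. Your remark that closure under ultrafilter extensions only quantifies over countably generated processes, so that Proposition~\ref{har closed ue} is exactly what is needed, is correct. The problem is the step you singled out as the main obstacle. The last clause of the GT-property for $\mathcal{H}ar$ is dismissed as easy in the paper's proof of Proposition~\ref{har gt}, and both of your routes rest on it. It is not just unproven but false, and the corollary fails with it. Your alternative argument shows that no other route can help. Validity of $\Gamma_{har}$ is equivalent to every countably generated event sub-process being Harsanyi, and every P-definable class satisfies that clause by Lemma~\ref{lem closed}. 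So P-definability of $\mathcal{H}ar$ is equivalent to the implication ``$\mathfrak{P}\models\Gamma_{har}\Rightarrow\mathfrak{P}\in\mathcal{H}ar$'' for arbitrary $\mathfrak{P}$, and the ultrafilter machinery is beside the point.

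Here is a counterexample.
\begin{itemize}
\item Let $I$ be uncountable, $\Omega=2^I$, and $\Sigma_0$ the product $\sigma$-algebra. It separates points, and every nonempty member depends on countably many coordinates and is therefore uncountable.
\item Let $\Sigma$ consist of all sets $A\triangle C$ with $A\in\Sigma_0$ and $C\subseteq\Omega$ countable. This is a $\sigma$-algebra containing all singletons. Put $T(w,A\triangle C)=\mathbf{1}_A(w)$.
\item The only countable member of $\Sigma_0$ is $\varnothing$. Hence the part $A$ is determined by the set $A\triangle C$, and disjoint sets have disjoint $\Sigma_0$-parts. So each $T(w,\cdot)$ is a well-defined probability measure, equal to $\delta_w$ on $\Sigma_0$ but vanishing on countable sets. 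Also $T(\cdot,A\triangle C)=\mathbf{1}_A$ is measurable.
\item Since $\Sigma_0$ separates points, $[T(w)]=\{w\}\in\Sigma$, yet $T(w,\{w\})=0$. So $\mathfrak{P}=\langle\Omega,\Sigma,T\rangle\notin\mathcal{H}ar$.
\item For $B=A\triangle C$ and $r>0$ we get $\{w'\mid T(w',B)\geq r\}=A$. So each set $\{w'\mid T(w',B)\geq r\leftrightarrow T(w,B)\geq r\}$ is $\Omega$, $A$ or $A^c$ and contains $w$, hence has $T(w)$-measure $1$.
\item Thus $\mathfrak{P}\models\Gamma_{har}$. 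Every countably generated event sub-process $\mathfrak{P}'$ is Harsanyi, because $[T'(w)]$ is a countable intersection of such sets.
\item If some $\Delta$ defined $\mathcal{H}ar$, every such $\mathfrak{P}'$ would validate $\Delta$. Then $\mathfrak{P}\models\Delta$ by Proposition~\ref{validity}, i.e.\ $\mathfrak{P}\in\mathcal{H}ar$, a contradiction.
\end{itemize}
This is exactly the mechanism you suspected: for $E=\{w\}$, any countably generated $\Sigma'\ni E$ has $T'(w,[T'(w)])=1$, so $[T'(w)]\not\subseteq E$. What does survive is Proposition~\ref{har def} read for arbitrary countably generated processes: within the class of countably generated Markov processes, $\Gamma_{har}$ defines the Harsanyi type spaces.
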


\section{Some Variants of the Goldblatt-Thomason  Theorem}\label{sec variants}

In this section, we study an alternative version of the Goldblatt-Thomason theorem for the class of finite Markov processes with rational Markov kernels, denoted by $\mathrm{FMP}_{\mathbb{Q}_0}$.
This important subclass of Markov processes can be also seen as a natural generalization of finite Kripke structures for which there exists a version of the Goldblatt-Thomason theorem \cite{van1988notes}. So it would be natural to investigate a variant of this theorem for $\mathrm{FMP}_{\mathbb{Q}_0}$. 
So throughout this section, by a finite process we mean a Markov process $\mathfrak{P}=\langle\Omega, \mathcal{P}(\Omega), T\rangle$, where $\Omega$ is finite and 
$ T(w,w'):= T(w,\{w'\})\in\mathbb{Q}_0$.

One can associate to any finite Markov process $\mathfrak{P}=\langle\Omega, \mathcal{P}(\Omega), T\rangle$  a finite directed graph (or Kripke frame) $(\Omega,R_\mathfrak{P})$, which enables us to translate certain model theoretic concepts of probability logic into the corresponding notions of basic modal logic. 
%, where, for any state $w,w'\in\Omega$, if $T(w,w')>0$ there is an edge from $w$ to $w'$. %labeled by $r$. And if $r=0$, there is no edge from $w$ to $w'$.
%In this case, one can define point generated subspace as the one defined for Kripke frames.
%So, similar to Kripke frames, one can define Jankov-Fine like formula for finite space $\mathfrak{P}$, in a way that if this formula is satisfiable in some finite space $\mathfrak{P}'$ then there is a local bounded morphism from $\mathfrak{P}'$ to $\mathfrak{P}$. By local bounded morphism, we mean a bounded morphism which has the back and forth property up to some level $n$, and this level is related to the formula.

\begin{definition}
	Let $\mathfrak{P}=\langle\Omega, \mathcal{P}(\Omega), T\rangle$ be a finite Markov processes. Define a binary relation $R_\mathfrak{P}\subseteq\Omega\times\Omega$ as
	$$wR_\mathfrak{P}w' \; \text{ if and only if } \; T(w,w')>0.$$	
\end{definition}
Let $R_\mathfrak{P}(w)=\{w' \mid wR_\mathfrak{P}w'\}$.
Then we have $T(w,R_\mathfrak{P}(w))=1$, for each $w\in\Omega$.
Assume that $\bar{R}_\mathfrak{P}$ is the reflexive and transitive closure of $R_\mathfrak{P}$.
%Fix $w_0\in\Omega$.
Then for each $w'\in \bar{R}_\mathfrak{P}(w)$ we have $R_\mathfrak{P}(w')\subseteq \bar{R}_\mathfrak{P}(w)$. So, $T(w',\bar{R}_\mathfrak{P}(w))=1$, for each $w'\in \bar{R}_\mathfrak{P}(w)$.
Hence one can consider the sub-processes 
$$\mathfrak{P}_w=\langle \bar{R}_\mathfrak{P}(w), \mathcal{P}(\bar{R}_\mathfrak{P}(w)) , T\upharpoonright_{\bar{R}_\mathfrak{P}(w)}\rangle.$$
$\mathfrak{P}_w$ serves as a point-generated sub-frame of $(\Omega,R_\mathfrak{P})$.

A Markov process $\mathfrak{P}$ is called a point-generated process if there is $w\in \Omega$ such that $\mathfrak{P}=\mathfrak{P}_w$.

Note that if $\mathfrak{P}_w$ is a point generated finite process, then 
$\mathfrak{P}_w\models\varphi$ if and only if $\mathfrak{P}_w,w\models L_1^k\varphi$ for some $k\geq 0$.

Recall that for directed graph $(\Omega, R_\mathfrak{P})$ and $w,v\in\Omega$ the \textit{distance} between $w,v$, denoted by $d(w,v)$, is the length of the shortest path from $w$ to $v$ if there exists a path between them and $\infty$ if there is no such path.
Then, for $n\geq0$ by ${R}_\mathfrak{P}(w)\upharpoonright_n$ we mean the set of all $v\in\Omega$ with $d(w,v)\leq n$.

Below we define the notion of a local $n$-zigzag morphism. Before defining this notion notice that for two finite processes $\mathfrak{P}=\langle\Omega,\mathcal{P}(\Omega),T\rangle$ and $\mathfrak{P}'= \langle\Omega',\mathcal{P}(\Omega'),T'\rangle$ the function $f:\Omega\to\Omega'$ is a zigzag morphism if it is surjective and for each $z\in\Omega$ and $z'\in\Omega'$ we have
\begin{equation}\label{n zig equ}
	T(z,f^{-1}(z')) = T'(f(z),z').	
\end{equation}

\begin{remark}\label{point generate}
	Also note that any finite Markov process $\mathfrak{P}$ is a zigzag image of the disjoint union of its point-generated sub-processes, i.e., there is a zigzag morphism
	$f:\biguplus_{w\in\Omega}\mathfrak{P}_w \to \mathfrak{P}$, 
	%with $f=\biguplus_{w \in \Omega}f_w$ where $f_w:\bar{R}_\mathfrak{P}(w)\to \Omega$ is the inclusion function, for each $w\in\Omega$. %is defined as $f_w(v)=v$ for each $v\in\bar{R}_\mathfrak{P}(w)$.	
    where for each $v\in \biguplus_{w\in\Omega}\mathfrak{P}_w$,  $f(v)$ is equal to its corresponding state in $\Omega$.
\end{remark}

\begin{definition}[Local $n$-Zigzag]
	Let $\mathfrak{P}_w$ and $\mathfrak{P}'_{w'}$ be two finite point generated processes. A \textit{local $n$-zigzag function}, for $n\geq 0$ is a function $f:\Omega\to\Omega'$ such that ${R'}(w')\upharpoonright_n \subseteq Img(f)$ and
	(\ref{n zig equ}) holds for all $z\in {R}(w)\upharpoonright_n$ and $z'\in {R'}(w')\upharpoonright_n$.
	
	Likewise a local $n$-zigzag morphism from $\mathfrak{M}=\langle\mathfrak{P}_w,v\rangle$ to $\mathfrak{M}'=\langle\mathfrak{P}'_{w'},v'\rangle$ is a local $n$-zigzag morphism $f$ from $\mathfrak{P}_w$ to $\mathfrak{P}'_{w'}$ such that $v(p)= f^{-1}(v'(p))$, for each $p\in\mathbb{P}$.
\end{definition}

We recall that $deg_p(\varphi)\in\mathbb{N}$ is defined inductively as follows:
\begin{itemize}
	\item $deg_p(q) =0$, for $q\in\mathbb{P}$,
	\item $deg_p(\neg\varphi) = deg_p(\varphi)$,
	\item $deg_p(\varphi\wedge\psi) = \max{(deg_p(\varphi), deg_p(\psi))}$,
	\item $deg_p(L_r\varphi) = deg_p(\varphi)+1$.
\end{itemize}

\begin{lemma}\label{n-zigzag lem}
	Assume that $\mathfrak{M}_w=\langle\Omega,\mathcal{P}(\Omega),T,v\rangle$ and $\mathfrak{M}'_{w'}=\langle \Omega',\mathcal{P}(\Omega'), T',v'\rangle$ are two point-generated models and $f:\Omega\to \Omega'$ is a local $n$-zigzag morphism. Then 
	\[
	\mathfrak{M},w\models\psi \; \text{ if and only if } \; \mathfrak{M}',f(w)\models\psi,	
	\]
	for each $\psi$ with $deg_p(\psi)\leq n$.
\end{lemma}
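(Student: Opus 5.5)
We prove a stronger, graded statement by induction on $k$ with $0\le k\le n$. The claim is that for every $z\in R(w)\upharpoonright_{n-k}$ and every formula $\psi$ with $deg_p(\psi)\le k$,
\[
\mathfrak{M},z\models\psi \;\text{ iff }\; \mathfrak{M}',f(z)\models\psi .
\]
Taking $z=w$ and $k=n$ gives the lemma, since $w\in R(w)\upharpoonright_0$.

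Inside each level $k$ we do a secondary induction on the structure of $\psi$.
\begin{itemize}
\item Propositional variables hold at every $z$, because $v(p)=f^{-1}(v'(p))$.
\item The Boolean cases are immediate.
\end{itemize}

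The only substantive case is $\psi=L_r\chi$ with $deg_p(\chi)\le k-1$ and $z\in R(w)\upharpoonright_{n-k}$. Every successor $y$ of $z$ satisfies $d(w,y)\le n-k+1=n-(k-1)$. So the induction hypothesis at level $k-1$ applies to all $y\in R(z)$. This gives
\[
[\![\chi]\!]_{\mathfrak{M}}\cap R(z) \;=\; f^{-1}\big([\![\chi]\!]_{\mathfrak{M}'}\big)\cap R(z).
\]
Since $T(z,R(z))=1$, we get
\[
T\big(z,[\![\chi]\!]_{\mathfrak{M}}\big)=T\big(z,f^{-1}([\![\chi]\!]_{\mathfrak{M}'})\big).
\]
Using finiteness, we write this as $\sum_{z'\in[\![\chi]\!]_{\mathfrak{M}'}} T(z,f^{-1}(z'))$.

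We now want to replace each summand by $T'(f(z),z')$ using equation (\ref{n zig equ}). This needs $z'\in R'(w')\upharpoonright_n$. We argue this as follows.
\begin{itemize}
\item If $T'(f(z),z')>0$ or $T(z,f^{-1}(z'))>0$, then $z'$ is a successor of $f(z)$, or is the image of a successor of $z$.
\item So the relevant $z'$ are those in $R'(f(z))$.
\item Terms with $z'\notin R'(f(z))\cup f(R(z))$ vanish on both sides.
\end{itemize}
We therefore obtain $T(z,[\![\chi]\!]_{\mathfrak{M}})=T'(f(z),[\![\chi]\!]_{\mathfrak{M}'})$, and the $L_r$ case follows.

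\textbf{Main obstacle.} The delicate point is the bookkeeping of distances on the primed side. We need $f(z)\in R'(w')\upharpoonright_{n-k}$, so that the successors $z'$ of $f(z)$ lie within distance $n$ of $w'$, which is where (\ref{n zig equ}) is granted. We establish this alongside the main induction by a separate induction on $d(w,z)$, starting from $f(w)=w'$. Equation (\ref{n zig equ}) at $z$ shows that $T(z,f^{-1}(z'))=T'(f(z),z')$, so every successor of $z$ maps into $R'(f(z))$. Hence $d(w',f(z))\le d(w,z)$ for all $z\in R(w)\upharpoonright_n$, which is what we need.

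There is one residual worry: we must rule out images $f(y)$ of successors $y$ of $z$ that fall outside $R'(w')\upharpoonright_n$. This is handled by the same identity. Since $T(z,f^{-1}(f(y)))\ge T(z,y)>0$, equation (\ref{n zig equ}) forces $f(y)\in R'(f(z))$, provided $f(y)$ lies within distance $n$ of $w'$. We will treat the boundary level $d=n$ with care, or else read the definition as granting (\ref{n zig equ}) for all $z'$ when $z$ is within range.

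The hypothesis $R'(w')\upharpoonright_n\subseteq Img(f)$ is not needed for this implication. It is only needed for later uses of the lemma.
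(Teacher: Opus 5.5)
Your graded induction (the claim for $z\in R(w)\upharpoonright_{n-k}$ and $deg_p(\psi)\le k$) is a sound framework, and it is more explicit than the paper's induction on $n$ via restrictions $f\upharpoonright_{\bar{R}(z)}$. The propositional and Boolean cases, and the reduction to $T(z,[\![\chi]\!]_{\mathfrak{M}})=T(z,f^{-1}([\![\chi]\!]_{\mathfrak{M}'}))$, are fine.

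The gap is in the step you yourself single out. To show that every successor $y$ of $z$ satisfies $f(y)\in R'(f(z))$, you apply (\ref{n zig equ}) with $z'=f(y)$. That instance is only granted when $f(y)\in R'(w')\upharpoonright_n$, which is what you are trying to establish. This is not a boundary phenomenon at $d=n$. At every level, nothing in your argument prevents a successor of $z$ from being sent far from $w'$, where (\ref{n zig equ}) says nothing. Your fallback of reading the definition as granting (\ref{n zig equ}) for all $z'$ strengthens the hypothesis. So as written, the key step is circular, and with it your bullet ``the relevant $z'$ are those in $R'(f(z))$''.

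The missing idea is a total-mass argument. Suppose $d(w,z)=d\le n-1$ and, inductively, $f(z)\in R'(w')\upharpoonright_d$. Then $R'(f(z))\subseteq R'(w')\upharpoonright_n$, so (\ref{n zig equ}) is available for every $z'\in R'(f(z))$, and summing gives
\[
T\big(z,f^{-1}(R'(f(z)))\big)=\sum_{z'\in R'(f(z))}T'(f(z),z')=1 .
\]
Consequences:
\begin{itemize}
\item $T(z,y)=0$ whenever $f(y)\notin R'(f(z))$. Hence $f(R(z))\subseteq R'(f(z))\subseteq R'(w')\upharpoonright_{d+1}$, which propagates the distance bound.
\item For $z'\notin R'(w')\upharpoonright_n$, both $T(z,f^{-1}(z'))$ and $T'(f(z),z')$ vanish. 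So (\ref{n zig equ}) holds for \emph{all} $z'\in\Omega'$ at such $z$, and your summation over $[\![\chi]\!]_{\mathfrak{M}'}$ goes through.
\end{itemize}
The same argument is what silently justifies the paper's claim that $f\upharpoonright_{\bar{R}(z)}$ is a local zigzag into $\bar{R}'(f(z))$.

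Finally, your base case uses $f(w)=w'$, which is not part of the paper's definition of a local $n$-zigzag. The paper tacitly assumes it in its applications; you should state it as a hypothesis, because without it the lemma fails. Take the chains $w'\to a\to b\to c\to c$ and $w\to x\to y\to\{u_1,u_2\}$. All transitions have probability $1$ except $y\to u_1$ and $y\to u_2$, which have $1/2$ each, and $u_1,u_2$ are absorbing. The map sending $w,x,y$ to $b$, $u_1$ to $w'$ and $u_2$ to $a$ is a local $1$-zigzag. Yet with $v'(q)=\{b\}$ and $v=f^{-1}\circ v'$, the formula $L_1q$ holds at $w$ but fails at $f(w)=b$.
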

\begin{proof}%[Proof of Lemma \ref{n-zigzag lem}]
	We prove it by induction on $n$ and complexity of $\psi$.
	We only prove the induction step for $L_r\theta$, assuming $deg_p(\theta)\leq n$.
	Suppose that $f$ is a local $(n+1)$-zigzag morphism. Then by induction hypothesis, we have
	$$\llbracket\theta\rrbracket_{\mathfrak{M}}\cap  R(w) = f^{-1}(\llbracket\theta\rrbracket_{\mathfrak{M}'}\cap  R'(f(w))),$$
	since for each $z\in R(w)$, the function $f\upharpoonright_{\bar{R}(z)}: \bar{R}(z) \to \bar{R}'(f(z)) $ is a local $n$-zigzag morphism between $\mathfrak{P}_z$ and $\mathfrak{P}'_{f(z)}$.
	Hence,  
	$$T(w,\llbracket\theta\rrbracket_{\mathfrak{M}}) = T(w,\llbracket\theta\rrbracket_{\mathfrak{M}}\cap  R(w))  = T(f(w), \llbracket\theta\rrbracket_{\mathfrak{M}'}\cap  R'(f(w)) ) = T(f(w),\llbracket\theta\rrbracket_{\mathfrak{M}'}).$$
	But this implies 
	$$\mathfrak{M},w\models L_r\theta \; \text{ if and only if } \mathfrak{M}',f(w)\models L_r\theta.$$	
\end{proof}

Below we prove a version of the Goldblatt-Thomason theorem for a class of finite Markov processes with rational probabilities.
Before stating the theorem we isolate an important property which is necessary for a subclass $\mathcal{C}$ of $\mathrm{FMP}_{\mathbb{Q}_0}$ to be definable.
\begin{definition}
	Let $\mathcal{C}$ be a subclass of  $\mathrm{FMP}_{\mathbb{Q}_0}$. We say that $\mathcal{C}$ is closed under local zigzag morphism if for each point-generated process $\mathfrak{P}_w$ we have $\mathfrak{P}_w\in\mathcal{C}$ whenever for each $n>0$ there is a point-generated process $\mathfrak{P}'_n\in\mathcal{C}$ where there is a local $n$-zigzag from $\mathfrak{P}'_n$ to $\mathfrak{P}$.
\end{definition}
%%
%%

%Then the theory of $\mathcal{C}$ is valid on $\mathfrak{P}$.

\begin{proposition}\label{n-zigzag prop}
	Let $\Gamma$ be a set of formulas and assume that $\mathcal{C}$ is the class of all finite Markov processes $\mathfrak{P}$ in which  $\Gamma$ is valid. Then $\mathcal{C}$ is closed under local zigzag morphisms.
\end{proposition}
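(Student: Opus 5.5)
We argue by contraposition, using Lemma~\ref{n-zigzag lem} to transfer a refuting valuation from the target process back to the processes $\mathfrak{P}'_n$. Let $\mathfrak{P}_w$ be a finite point-generated process, and suppose that for every $n>0$ there is a point-generated $\mathfrak{P}'_n=\langle\Omega'_n,\mathcal{P}(\Omega'_n),T'_n\rangle\in\mathcal{C}$, generated by some $w'_n$, together with a local $n$-zigzag $f_n:\Omega'_n\to\Omega$ onto the appropriate neighbourhood of $w$. We must show $\mathfrak{P}_w\models\Gamma$. Suppose instead that $\mathfrak{P}_w\not\models\gamma$ for some $\gamma\in\Gamma$. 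Then there is a valuation $v$ on $\mathfrak{P}_w$ and a state $z$ with $\mathfrak{M},z\not\models\gamma$, where $\mathfrak{M}=\langle\mathfrak{P}_w,v\rangle$.

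Since $\mathfrak{P}_w$ is point-generated, $z\in\bar R(w)$, say at distance $k$ from $w$. By the remark on point-generated processes, $\mathfrak{P}_w\models\varphi$ iff $\mathfrak{P}_w,w\models L_1^j\varphi$ for some $j$. We use the model-level version directly. Every $z$ with $d(w,z)=k$ is reached along a path of positive-probability transitions, so $\mathfrak{M},w\models\neg L_1^k\gamma$. To see this, induct on $k$: if $T(w,u)>0$ and $\mathfrak{M},u\not\models L_1^{k-1}\gamma$, then $T(w,\llbracket L_1^{k-1}\gamma\rrbracket)<1$. Set $\psi:=\neg L_1^k\gamma$, and let $n:=deg_p(\psi)=k+deg_p(\gamma)$.

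Next take the local $n$-zigzag $f_n:\mathfrak{P}'_n\to\mathfrak{P}_w$ and pull back the valuation by setting $v'_n(p):=f_n^{-1}(v(p))$. This makes $f_n$ a local $n$-zigzag morphism of models $\mathfrak{M}'_n=\langle\mathfrak{P}'_n,v'_n\rangle\to\mathfrak{M}$, with $f_n(w'_n)=w$. Here we assume, as the definition intends, that the distinguished generating points correspond. Lemma~\ref{n-zigzag lem}, applied with the roles as stated there (source $\mathfrak{P}'_n$, target $\mathfrak{P}_w$), gives $\mathfrak{M}'_n,w'_n\models\psi$. Hence $\mathfrak{M}'_n,w'_n\not\models L_1^k\gamma$, so some state of $\mathfrak{M}'_n$ falsifies $\gamma$. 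This contradicts $\mathfrak{P}'_n\in\mathcal{C}$, i.e.\ $\mathfrak{P}'_n\models\Gamma$. Therefore $\mathfrak{P}_w\models\Gamma$, and since $\mathfrak{P}_w$ is finite with rational kernel, $\mathfrak{P}_w\in\mathcal{C}$.

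The main obstacle is the bookkeeping: making sure the local $n$-zigzag condition at depth $n=k+deg_p(\gamma)$ really covers everything Lemma~\ref{n-zigzag lem} needs. In particular, the condition $v'_n(p)=f_n^{-1}(v(p))$ must hold on the relevant neighbourhoods, and $f_n$ must send the generator $w'_n$ to $w$. If the definition does not force $f_n(w'_n)=w$, we would instead choose a preimage $x$ of $w$ within $R'(w'_n)\upharpoonright_n$, using the inclusion in the image. We would then work with the point-generated subprocess $(\mathfrak{P}'_n)_x$, which still validates $\Gamma$ by Corollary~\ref{sub val}, possibly enlarging $n$ to absorb the extra distance.
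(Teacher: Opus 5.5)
Your proof is correct and is essentially the paper's argument: refute $\gamma$ in $\mathfrak{P}_w$, convert this to $\mathfrak{M},w\not\models L_1^k\gamma$, pull the valuation back along a local $n$-zigzag with $n\ge deg_p(L_1^k\gamma)$, and apply Lemma~\ref{n-zigzag lem} at the generator. The paper also takes $n\ge|\Omega_w|$ to force surjectivity, which the lemma does not need, and it relies tacitly, exactly as you do, on $f_n(w'_n)=w$.

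Do drop your fallback, though. The image condition only gives a preimage of $w$ somewhere in $\Omega'_n$, not inside $R'(w'_n)\upharpoonright_n$ where the zigzag equations hold. In fact, under the literal definition without $f_n(w'_n)=w$ the statement is false. Take the chains $z_0=w'_n\to z_1\to\cdots\to z_{n+1}\to b$, where each $z_i$ puts mass $1/2$ on itself and $1/2$ on its successor in the chain, and $T'(b,b)=1$. Send $b\mapsto w$ and all $z_i\mapsto u$. These are literal local $n$-zigzags onto $w\to u$ with $T(w,u)=T(u,u)=1$. The chains validate $L_1p\to p$, while the target does not.
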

\begin{proof}%[Proof of Proposition \ref{n-zigzag prop}]
	Assume that $\mathfrak{P}_w= \langle\Omega_w,\mathcal{P}(\Omega_w),T\rangle$ is a point generated Markov process in $\mathrm{FMP}_{\mathbb{Q}_0}$ such that it is an image of local zigzag morphisms.
	We have to show that $\mathfrak{P}_w\models\Gamma$. 
	Assume not. Then for some $\varphi\in\Gamma$ we have $\mathfrak{P}_w\not\models\varphi$.
	Hence there exist $\mathfrak{M}=\langle \mathfrak{P}_w,v\rangle$ and $k\geq 0$ such that $\mathfrak{M},w\not\models L_1^k\varphi$.
	Now take $n\geq \max{(|\Omega_w|, deg_p(L_1^k\varphi))}$, $\mathfrak{P}_{w'}\in\mathcal{C}$  and $f:\mathfrak{P}_{w'}\to \mathfrak{P}_w$ a local $n$-zigzag morphism.
	By the choice of $n$, we conclude that $f$ is a surjection.
	Moreover, we can define $v'$ on $\mathfrak{P}_{w'}$ such that $f$ is a local $n$-zigzag morphism from $\mathfrak{M}'=\langle\mathfrak{P}_{w'},v'\rangle$ to $\mathfrak{M}=\langle\mathfrak{P}_w,v\rangle$.
	Hence in the light of Lemma \ref{n-zigzag lem}, we have $\mathfrak{M}',w'\not\models L_1^k\varphi$. Therefore, $\mathfrak{P}_{w'}\not\models\varphi$, a contradiction.
\end{proof}

\begin{theorem}
	Assume that $\mathcal{C}$ is a subclass of $\mathrm{FMP}_{\mathbb{Q}_0}$.
	Then $\mathcal{C}$ is definable if and only if it is closed under generated sub-processes, finite disjoint unions, and local zigzag morphisms.
\end{theorem}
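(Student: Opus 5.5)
Definability here is relative to $\mathrm{FMP}_{\mathbb{Q}_0}$: $\mathcal{C}$ is definable if there is a set $\Gamma$ with $\mathfrak{P}\in\mathcal{C}$ iff $\mathfrak{P}\models\Gamma$, for every $\mathfrak{P}\in\mathrm{FMP}_{\mathbb{Q}_0}$. For the left-to-right direction, suppose $\mathcal{C}$ is the class of finite rational processes validating $\Gamma$. Closure under generated sub-processes follows from Corollary~\ref{sub val}, since each $\mathfrak{P}_w$ is a sub-process. Closure under finite disjoint unions follows from the disjoint union lemma. Closure under local zigzag morphisms is exactly Proposition~\ref{n-zigzag prop}.

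For the converse, let $Th(\mathcal{C})$ be the set of formulas valid in $\mathcal{C}$, and let $\mathfrak{P}\in\mathrm{FMP}_{\mathbb{Q}_0}$ satisfy $\mathfrak{P}\models Th(\mathcal{C})$. We must show $\mathfrak{P}\in\mathcal{C}$. By Remark~\ref{point generate}, $\mathfrak{P}$ is a zigzag image of $\biguplus_{w}\mathfrak{P}_w$, a finite disjoint union. It therefore suffices to show that each $\mathfrak{P}_w\in\mathcal{C}$ and that $\mathcal{C}$ is closed under zigzag images of finite processes.

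The second fact I would derive from closure under local zigzags. If $g:\mathfrak{Q}\to\mathfrak{P}$ is a zigzag with $\mathfrak{Q}\in\mathcal{C}$, then for each $w\in\Omega$ pick $q\in g^{-1}(w)$. The restriction $g:\mathfrak{Q}_q\to\mathfrak{P}_w$ is a genuine zigzag, hence a local $n$-zigzag for every $n$, and $\mathfrak{Q}_q\in\mathcal{C}$ because $\mathcal{C}$ is closed under generated sub-processes. So $\mathfrak{P}_w\in\mathcal{C}$ for every $w$, and then $\mathfrak{P}$ is handled as above. The same argument shows that we may reduce to point-generated $\mathfrak{P}=\mathfrak{P}_w$, which validates $Th(\mathcal{C})$ by Corollary~\ref{sub val}.

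The main step is to produce, for each $n$, a point-generated $\mathfrak{P}'_n\in\mathcal{C}$ with a local $n$-zigzag onto $\mathfrak{P}_w$; closure under local zigzags then gives $\mathfrak{P}_w\in\mathcal{C}$. For this I would build a Jankov--Fine-style formula $\chi_n$ of depth about $n$. Take one variable $p_z$ for each state $z\in\Omega_w$. Let $\delta_z$ say that $p_z$ holds and every $p_{z'}$ with $z'\ne z$ fails. Let $\chi_n$ be the conjunction of $\delta_w$ with
\[
\bigwedge_{k\le n} L_1^k\Big(\bigvee_z \delta_z \wedge \bigwedge_{z,z'}\big(p_z\to (L_{T(z,z')}p_{z'}\wedge M_{T(z,z')}p_{z'})\big)\Big).
\]
Here rationality of $T$ is essential, since it makes $L_{T(z,z')}$ and $M_{T(z,z')}$ legitimate operators. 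The formula $\chi_n$ is satisfied at $w$ in $\mathfrak{P}_w$ under $v(p_z)=\{z\}$. Hence $\neg\chi_n\notin Th(\mathcal{C})$, so some model on a process in $\mathcal{C}$ satisfies $\chi_n$ at some state $u$. Passing to the generated sub-process at $u$ keeps us in $\mathcal{C}$. On that sub-process, define $f(y)=z$ iff $y\in v'(p_z)$ for $y$ within distance $n$ of $u$, and assign the remaining states arbitrarily.

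Checking that $f$ is a local $n$-zigzag is the delicate part. I expect this to be the main obstacle, on two counts. First, the exact-probability conjuncts must give (\ref{n zig equ}) at each level up to $n$. Second, the distance-$n$ ball of $w$ must lie in the image of $f$, which follows from the positive-probability clauses by induction on distance. Because the definition of local $n$-zigzag also requires $f$ to be total, the arbitrary assignment outside the ball needs care; if it causes trouble, I would instead take depth $n+1$ so that every relevant successor is labelled.
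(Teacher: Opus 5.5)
\textbf{Gap: the reduction to point-generated processes is circular.} You say it suffices that each $\mathfrak{P}_w\in\mathcal{C}$ and that $\mathcal{C}$ is closed under zigzag images. But your derivation of the latter, starting from a zigzag $g:\mathfrak{Q}\to\mathfrak{P}$ with $\mathfrak{Q}\in\mathcal{C}$, only gives $\mathfrak{P}_w\in\mathcal{C}$ for every $w$. The phrase ``then $\mathfrak{P}$ is handled as above'' then appeals to closure under zigzag images again. None of the three listed operations can reassemble a process that is neither point-generated nor a disjoint union of point-generated processes, for example two roots sharing a successor.

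No argument can close this gap. Let $D$ be the process $x\to y$, $y\to y$, and let $E$ be a single state with a loop, all transition probabilities being $1$. Let $\mathcal{C}_0$ be the class of finite disjoint unions of copies of $D$ and $E$.
\begin{itemize}
\item Generated sub-processes of members of $\mathcal{C}_0$ are again such unions.
\item The only point-generated members are $D$ and $E$. For $n$ large, a local $n$-zigzag from $D$ or $E$ onto a point-generated target is a genuine surjective zigzag, so the target is again $D$ or $E$.
\item Hence $\mathcal{C}_0$ has all three closure properties.
\end{itemize}
However, the process $a\to c$, $b\to c$, $c\to c$ is a zigzag image of $D\uplus D$ (send both copies of $y$ to $c$). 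By Lemma~\ref{zigzag lem} it validates $Th(\mathcal{C}_0)$. It is not in $\mathcal{C}_0$, since $c$ has two predecessors other than itself, which never happens in such a union.

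So the statement as given is false. The paper's own proof makes the same leap in its first step (``by Remark~\ref{point generate} and closure under finite disjoint unions we may assume $\mathfrak{P}$ is point-generated''). The missing hypothesis is closure under zigzag images. It is necessary for definability by Lemma~\ref{zigzag lem}, and once it is assumed your reduction is immediate.

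\textbf{The rest of your argument.} With that hypothesis added, your proof is correct and follows the paper's route:
\begin{itemize}
\item a Jankov--Fine formula built from exact-probability clauses (the paper's (i)--(v));
\item satisfiability in $\mathcal{C}$, because $\neg\chi_n\notin Th(\mathcal{C})$;
\item the labelling map on the distance-$n$ ball;
\item closure under local zigzags.
\end{itemize}
Your concern about states at distance exactly $n$ is justified. The paper's default value $f_n(z)=w_0$ off the ball can violate (\ref{n zig equ}) there, and labelling to depth $n+1$ (using $\chi_{n+1}$) is the right repair.
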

\begin{proof}
	The nontrivial direction is the right-to-left one, so assume that $\mathfrak{P}$ is a finite process with $\mathfrak{P}\models Th(\mathcal{C})$. We have to show that $\mathfrak{P}\in\mathcal{C}$.
	
	By Remark \ref{point generate} and since $\mathcal{C}$ is closed under finite disjoint unions, without loss of generality, we can assume that $\mathfrak{P}$ is point-generated.
	Suppose $\Omega=\{w_0,w_1,\dots,w_m\}$ where $\mathfrak{P}=\mathfrak{P}_{w_0}$. Let $n>0$ be arbitrary. %the largest distance from $w_0$ in $\mathfrak{P}$.
	Consider the language $\mathcal{L}_\mathfrak{P}=\{p_0,\dots, p_m\}$, and the valuation $v$ on $\mathfrak{P}$ as  $v(p_i) =\{w_i\}$, for each $1\leq i\leq n$.
	Define the following formulas:
	\begin{enumerate}[i.]
		\item $p_{_0}$.
		\item $L_r p_{_i} \wedge M_r p_{_i}$, if $T(w_0,w_i)=r$.
		\item $L_1^k(p_0\vee\dots\vee p_m)$, for $0\leq k\leq n$.
		\item $L_1^k (p_i\to \neg p_j)$, for $0\leq k\leq n$ and $i\neq j$.
		\item $L_1^k(p_i\to (L_r p_{_j} \wedge M_r p_{_j}))$, for $0\leq k\leq n$ if $T(w_i,w_j)=r$.
	\end{enumerate}
	While (i) and (ii) are satisfied in $w_0$, the other ones are valid on $\mathfrak{M}$, and in particular, are satisfied in $w_0$.
	Let $\varphi_{n,w_0}$ be the conjunction of the above formulas. Clearly,
	$\varphi_{n,w_0}\in Th(\mathcal{C})$ and is satisfiable in $\mathcal{C}$, as otherwise $\neg\varphi_{n,w_0}\in Th(\mathcal{C})$, which is a contradiction.
	So there exists $\mathfrak{N}_n=\langle\mathfrak{P}'_n,v'\rangle$ with $\mathfrak{P}'_n=\langle\Omega',\mathcal{P}(\Omega'),T'\rangle\in\mathcal{C}$ such that $\mathfrak{N}_n,z_0\models\varphi_{n,w_0}$ for some   $z_0\in\Omega'$.
	Again, without loss of generality, we can assume that $\mathfrak{P}'$ is point-generated with $\mathfrak{P}'=\mathfrak{P}'_{z_0}$.
	
	Now define $f_n:\Omega'\to \Omega$ as $f_n(z) =w_i$ if $z\in R(v_0)\upharpoonright_n$ and  $\mathfrak{N}_n,z\models p_i$. Otherwise let $f_n(z)=w_0$. We show that $f_n$ is a local $n$-zigzag morphism.
	\begin{itemize}
		\item $f_n$ is well-defined:
		assume that $z\in R(z_0)\upharpoonright_n$. Hence $d(v_0,z)=k\leq n$, and $\mathfrak{N}_n,v_0\models L_1^k (p_0\vee\dots \vee p_m)$.
		Therefore, this implies that $\mathfrak{N}_n,z\models (p_0\vee\dots \vee p_m)$.
		Thus, there is $0\leq i\leq m$ such that $\mathfrak{N}_n,z\models p_i$.
		Furthermore, for $i\neq j$ we have $\mathfrak{N}_n,z_0\models L_1^k(p_i\to\neg p_j)$ which implies that $\mathfrak{N}_n,z\models p_i\to \neg p_j$. So we conclude that there is a unique $i$ such that $f_n(z)=w_i$.
		
		\item $Img(f_n)\subseteq R(w_0)\upharpoonright_n$: %$\mathfrak{P}=\bar{R}(w_0)=R(w_0)\upharpoonright_n$.
		Assume that $w\in\Omega$ and $d(w_0,w)=k\leq n$.  
		So there are $w_{j_1},\dots, w_{j_k}\in\Omega$ with $w_0Rw_{j_1}\dots Rw_{j_k}$ and $w_{j_k}=w$.
		Suppose that $T(w_{j_i},w_{j_{i+1}})=r_i>0$, for each $0\leq i\leq k-1$.
		Since $\mathfrak{M},w_0\models L_{r_1}p_{j_1} \wedge L_1(p_{j_1}\to L_{r_2}p_{j_2})$, it follows that  $\mathfrak{N}_n,v_0\models L_{r_1}p_{j_1} \wedge L_1(p_{j_1}\to L_{r_2}p_{j_2})$.
		Hence, 		
		$T'(z_0,\llbracket p_{j_1}\rrbracket_{\mathfrak{N}_n})=r_1>0$
		and $T'(z_0,\llbracket p_{j_1}\to L_{r_2}p_{j_2}\rrbracket_{\mathfrak{N}_n})=1$. So there is $z_1\in\Omega'$ such that $T'(z_0,z_1)>0$ and $\mathfrak{N}_n,z_1\models p_{i_1}\wedge L_{r_2}p_{j_2}$.
		Again, since 
		$\mathfrak{M},w_0\models L_1^2(p_{j_2}\to L_{r_3}p_{j_3})$, it follows that $T'(z_1,\llbracket p_{j_2}\to L{r_3}p_{j_3}\rrbracket_{\mathfrak{N}_n})=1$.
		Thus there is $z_2\in\Omega'$ such that $\mathfrak{N},z_2\models p_{i_2}$ and $T'(z_1,z_2)>0$. Continuing this method, one can find a sequence $z_0R'z_1\dots R'z_{i}$ such that $\mathfrak{N}_n,z_i\models p_{j_i}$ for each $0<i\leq k$.
		So, $f_n(z_k)=w$.
		\item $f_n$ is a local $n$-zigzag morphism: 
		note that the measurability of $f_n$ is automatic. So, we only need to show that
		$$T'(z,f_n^{-1}(w)) = T(f_n(z),w),$$
		for each $w\in R(w)\upharpoonright_n$ and $z\in R'(z_0)\upharpoonright_n$. 
		Assume that for some $0\leq i,j\leq m$ we have $\mathfrak{N},z\models p_i$ and $\mathfrak{M},w\models p_j$.
		Thus $f(z) = w_i$. If $T(w,w_i)=r$, then for each $0\leq k\leq n$, 
		$$\mathfrak{M},w_0\models L_1^k(p_j\to (L_r p_i\wedge M_r p_i)).$$ 
		So 
		$$\mathfrak{N}_n,z_0\models L_1^k(p_j\to (L_r p_i\wedge M_r p_i)).$$
		Hence, 
		$$T'(z,\llbracket p_i\rrbracket_{\mathfrak{N}_n})=T'(z,f_n^{-1}(w))=r.$$
	\end{itemize}
	Since $\mathfrak{P}'_n\in\mathcal{C}$ and $f_n:\mathfrak{P}'_n\to\mathfrak{P}$ is a local $n$-zigzag morphism, it follows that $\mathfrak{P}\in\mathcal{C}$, as required.
\end{proof}
%%
%%
%%%%%%%%%%%%%%%%%%%%%%%%%%%%%%%%%%%%%%%%
%%%%%%%%%%%%%%%%%%%%%%%%%%%%%%%%%%%%%%%%
\section{Conclusions and further studies}
In this paper, we established a Goldblatt-Thomason-style theorem for probability logic, characterizing the classes of Markov processes definable in probability logic via suitable model-theoretic closure properties. We also obtained a corresponding result for finite Markov processes. These results connect probabilistic modal logic with the classical model-theoretic theory of definability.

A natural direction for future research is to extend these results to infinitary probability logics, which allow countable conjunctions and disjunctions and whose expressive power is closely related to the $\sigma$-additivity of probability measures. As shown in \cite{ChopoghlooPourmahdian2024}, important stochastic properties of Markov processes, such as stationarity, invariance, irreducibility, and recurrence, are expressible in this setting. This indicates that infinitary probability logics provide a more suitable framework for capturing genuinely measure-theoretic probabilistic phenomena. Accordingly, an important problem for future work is to establish a Goldblatt-Thomason-style theorem for infinitary probability logic and to characterize the associated model-theoretic closure conditions for definability.

\nocite{*}
%\bibliographystyle{eptcs}
%\bibliography{generic}

\begin{thebibliography}{10}
\providecommand{\bibitemdeclare}[2]{}
\providecommand{\surnamestart}{}
\providecommand{\surnameend}{}
\providecommand{\urlprefix}{Available at }
\providecommand{\url}[1]{\texttt{#1}}
\providecommand{\href}[2]{\texttt{#2}}
\providecommand{\urlalt}[2]{\href{#1}{#2}}
\providecommand{\doi}[1]{doi:\urlalt{https://doi.org/#1}{#1}}
\providecommand{\eprint}[1]{arXiv:\urlalt{https://arxiv.org/abs/#1}{#1}}
\providecommand{\bibinfo}[2]{#2}

\bibitemdeclare{article}{Aum1999}
\bibitem{Aum1999}
\bibinfo{author}{Robert~J. \surnamestart Aumann\surnameend}
  (\bibinfo{year}{1999}): \emph{\bibinfo{title}{Interactive Epistemology II:
  Probability}}.
\newblock {\slshape \bibinfo{journal}{International Journal of Game Theory}}
  \bibinfo{volume}{28}(\bibinfo{number}{3}), pp. \bibinfo{pages}{301--314},
  \doi{10.1007/s001820050112}.

\bibitemdeclare{article}{van1988notes}
\bibitem{van1988notes}
\bibinfo{author}{Johan \surnamestart van Benthem\surnameend}
  (\bibinfo{year}{1988}): \emph{\bibinfo{title}{Notes on Modal Definability}}.
\newblock {\slshape \bibinfo{journal}{Notre Dame Journal of Formal Logic}}
  \bibinfo{volume}{30}(\bibinfo{number}{1}), pp. \bibinfo{pages}{20--35},
  \doi{10.3233/FI-1993-182-416}.

\bibitemdeclare{article}{benthem:revisited93}
\bibitem{benthem:revisited93}
\bibinfo{author}{Johan \surnamestart van Benthem\surnameend}
  (\bibinfo{year}{1993}): \emph{\bibinfo{title}{Modal Frame Classes
  Revisited}}.
\newblock {\slshape \bibinfo{journal}{Fundamenta Informaticae}}
  \bibinfo{volume}{18}, pp. \bibinfo{pages}{307--317},
  \doi{10.3233/FI-1993-182-416}.

\bibitemdeclare{inproceedings}{cardelli2011continuous}
\bibitem{cardelli2011continuous}
\bibinfo{author}{Luca \surnamestart Cardelli\surnameend},
  \bibinfo{author}{Kim~G. \surnamestart Larsen\surnameend} \&
  \bibinfo{author}{Radu \surnamestart Mardare\surnameend}
  (\bibinfo{year}{2011}): \emph{\bibinfo{title}{Continuous Markovian Logic:
  From Complete Axiomatization to the Metric Space of Formulas}}.
\newblock In: {\slshape \bibinfo{booktitle}{Computer Science Logic 2011 (CSL
  2011)}}, {\slshape \bibinfo{series}{Leibniz International Proceedings in
  Informatics (LIPIcs)}}~\bibinfo{volume}{12}, \bibinfo{publisher}{Schloss
  Dagstuhl--Leibniz-Zentrum f{\"u}r Informatik}, pp. \bibinfo{pages}{144--158},
  \doi{10.4230/LIPIcs.CSL.2011.144}.

\bibitemdeclare{article}{ChopoghlooPourmahdian2024}
\bibitem{ChopoghlooPourmahdian2024}
\bibinfo{author}{Somayeh \surnamestart Chopoghloo\surnameend} \&
  \bibinfo{author}{Massoud \surnamestart Pourmahdian\surnameend}
  (\bibinfo{year}{2024}): \emph{\bibinfo{title}{Dynamic Probability Logics:
  Axiomatization \& Definability}}.
\newblock {\slshape \bibinfo{journal}{arXiv preprint}}
  \bibinfo{volume}{arXiv:2401.07235}, \doi{10.48550/arXiv.2401.07235}.

\bibitemdeclare{article}{deshar:bisim02}
\bibitem{deshar:bisim02}
\bibinfo{author}{Jos{\'e}e \surnamestart Desharnais\surnameend},
  \bibinfo{author}{Abbas \surnamestart Edalat\surnameend} \&
  \bibinfo{author}{Prakash \surnamestart Panangaden\surnameend}
  (\bibinfo{year}{2002}): \emph{\bibinfo{title}{Bisimulation for Labelled
  Markov Processes}}.
\newblock {\slshape \bibinfo{journal}{Information and Computation}}
  \bibinfo{volume}{179}(\bibinfo{number}{2}), pp. \bibinfo{pages}{163--193},
  \doi{10.1006/inco.2001.2962}.

\bibitemdeclare{article}{fagin:logic90}
\bibitem{fagin:logic90}
\bibinfo{author}{Ronald \surnamestart Fagin\surnameend},
  \bibinfo{author}{Joseph~Y. \surnamestart Halpern\surnameend} \&
  \bibinfo{author}{Nimrod \surnamestart Megiddo\surnameend}
  (\bibinfo{year}{1990}): \emph{\bibinfo{title}{A Logic for Reasoning about
  Probabilities}}.
\newblock {\slshape \bibinfo{journal}{Information and Computation}}
  \bibinfo{volume}{87}(\bibinfo{number}{1--2}), pp. \bibinfo{pages}{78--128},
  \doi{10.1016/0890-5401(90)90060-U}.

\bibitemdeclare{inproceedings}{fornasiere2024frame}
\bibitem{fornasiere2024frame}
\bibinfo{author}{Damiano \surnamestart Fornasiere\surnameend},
  \bibinfo{author}{Johannes \surnamestart Marti\surnameend} \&
  \bibinfo{author}{Giovanni \surnamestart Varricchione\surnameend}
  (\bibinfo{year}{2024}): \emph{\bibinfo{title}{Frame Definability in
  Conditional Logic}}.
\newblock In: {\slshape \bibinfo{booktitle}{Advances in Modal Logic}},
  \bibinfo{volume}{15}, \bibinfo{publisher}{College Publications}, pp.
  \bibinfo{pages}{335--356}.
\newblock
  \urlprefix\url{http://www.aiml.net/volumes/volume15/Fornasiere-Marti-Varricchione.pdf}.

\bibitemdeclare{article}{gt:axiom75}
\bibitem{gt:axiom75}
\bibinfo{author}{Robert~I. \surnamestart Goldblatt\surnameend} \&
  \bibinfo{author}{Stephen~K. \surnamestart Thomason\surnameend}
  (\bibinfo{year}{1975}): \emph{\bibinfo{title}{Axiomatic Classes in
  Propositional Modal Logic}}.
\newblock {\slshape \bibinfo{journal}{Algebra and Logic}} \bibinfo{volume}{13},
  pp. \bibinfo{pages}{163--173}, \doi{10.1007/BFb0062855}.

\bibitemdeclare{inproceedings}{de2020goldblatt}
\bibitem{de2020goldblatt}
\bibinfo{author}{Jim \surnamestart de~Groot\surnameend} (\bibinfo{year}{2022}):
  \emph{\bibinfo{title}{Goldblatt-Thomason Theorems for Modal Intuitionistic
  Logics}}.
\newblock In: {\slshape \bibinfo{booktitle}{Advances in Modal Logic}},
  \bibinfo{volume}{14}, \bibinfo{publisher}{College Publications}, pp.
  \bibinfo{pages}{467--489}.
\newblock \urlprefix\url{http://www.aiml.net/volumes/volume14/28-deGroot.pdf}.

\bibitemdeclare{incollection}{harsan:games68}
\bibitem{harsan:games68}
\bibinfo{author}{John~C. \surnamestart Harsanyi\surnameend}
  (\bibinfo{year}{1982}): \emph{\bibinfo{title}{Games with Incomplete
  Information Played by ``Bayesian'' Players, Part II: Bayesian Equilibrium
  Points}}.
\newblock In: {\slshape \bibinfo{booktitle}{Papers in Game Theory}},
  \bibinfo{publisher}{Springer Netherlands}, pp. \bibinfo{pages}{139--153},
  \doi{10.1007/978-94-017-2527-9_7}.

\bibitemdeclare{article}{heifmon:prob01}
\bibitem{heifmon:prob01}
\bibinfo{author}{Aviad \surnamestart Heifetz\surnameend} \&
  \bibinfo{author}{Philippe \surnamestart Mongin\surnameend}
  (\bibinfo{year}{2001}): \emph{\bibinfo{title}{Probability Logic for Type
  Spaces}}.
\newblock {\slshape \bibinfo{journal}{Games and Economic Behavior}}
  \bibinfo{volume}{35}(\bibinfo{number}{1}), pp. \bibinfo{pages}{31--53},
  \doi{10.1006/game.1999.0788}.

\bibitemdeclare{book}{Johnstone1982}
\bibitem{Johnstone1982}
\bibinfo{author}{Peter~T. \surnamestart Johnstone\surnameend}
  (\bibinfo{year}{1982}): \emph{\bibinfo{title}{Stone Spaces}}.
\newblock {\slshape \bibinfo{series}{Cambridge Studies in Advanced
  Mathematics}}~\bibinfo{volume}{3}, \bibinfo{publisher}{Cambridge University
  Press}, \bibinfo{address}{Cambridge}.
\newblock \urlprefix\url{https://books.google.com/books?id=CiWwoLNbpykC}.

\bibitemdeclare{incollection}{keisler}
\bibitem{keisler}
\bibinfo{author}{H.~Jerome \surnamestart Keisler\surnameend}
  (\bibinfo{year}{1985}): \emph{\bibinfo{title}{Probability Quantifiers}}.
\newblock In \bibinfo{editor}{Jon \surnamestart Barwise\surnameend} \&
  \bibinfo{editor}{Solomon \surnamestart Feferman\surnameend}, editors:
  {\slshape \bibinfo{booktitle}{Model-Theoretic Logics}}, {\slshape
  \bibinfo{series}{Perspectives in Mathematical Logic}}~\bibinfo{volume}{8},
  \bibinfo{publisher}{Springer}, pp. \bibinfo{pages}{509--556},
  \doi{10.1007/978-1-4613-8928-6_9}.

\bibitemdeclare{inproceedings}{kozen2013stone}
\bibitem{kozen2013stone}
\bibinfo{author}{Dexter \surnamestart Kozen\surnameend},
  \bibinfo{author}{Kim~G. \surnamestart Larsen\surnameend},
  \bibinfo{author}{Radu \surnamestart Mardare\surnameend} \&
  \bibinfo{author}{Prakash \surnamestart Panangaden\surnameend}
  (\bibinfo{year}{2013}): \emph{\bibinfo{title}{Stone Duality for Markov
  Processes}}.
\newblock In: {\slshape \bibinfo{booktitle}{Proceedings of the 28th Annual
  ACM/IEEE Symposium on Logic in Computer Science (LICS 2013)}},
  \bibinfo{publisher}{IEEE}, pp. \bibinfo{pages}{321--330},
  \doi{10.1109/LICS.2013.38}.

\bibitemdeclare{inproceedings}{kozen2013strong}
\bibitem{kozen2013strong}
\bibinfo{author}{Dexter \surnamestart Kozen\surnameend}, \bibinfo{author}{Radu
  \surnamestart Mardare\surnameend} \& \bibinfo{author}{Prakash \surnamestart
  Panangaden\surnameend} (\bibinfo{year}{2013}): \emph{\bibinfo{title}{Strong
  Completeness for Markovian Logics}}.
\newblock In: {\slshape \bibinfo{booktitle}{Mathematical Foundations of
  Computer Science 2013}}, {\slshape \bibinfo{series}{Lecture Notes in Computer
  Science}} \bibinfo{volume}{8087}, \bibinfo{publisher}{Springer}, pp.
  \bibinfo{pages}{655--666}, \doi{10.1007/978-3-642-40313-2_58}.

\bibitemdeclare{inproceedings}{kurz:golcoal07}
\bibitem{kurz:golcoal07}
\bibinfo{author}{Alexander \surnamestart Kurz\surnameend} \&
  \bibinfo{author}{Ji\v{r}\'{\i} \surnamestart Rosick{\'y}\surnameend}
  (\bibinfo{year}{2007}): \emph{\bibinfo{title}{The Goldblatt-Thomason Theorem
  for Coalgebras}}.
\newblock In \bibinfo{editor}{Till \surnamestart Mossakowski\surnameend},
  \bibinfo{editor}{Ugo \surnamestart Montanari\surnameend} \&
  \bibinfo{editor}{Magne \surnamestart Haveraaen\surnameend}, editors:
  {\slshape \bibinfo{booktitle}{Algebra and Coalgebra in Computer Science
  (CALCO 2007)}}, {\slshape \bibinfo{series}{Lecture Notes in Computer
  Science}} \bibinfo{volume}{4624}, \bibinfo{publisher}{Springer}, pp.
  \bibinfo{pages}{342--355}, \doi{10.1007/978-3-540-73859-6_23}.

\bibitemdeclare{article}{kuter:model13}
\bibitem{kuter:model13}
\bibinfo{author}{Rutger \surnamestart Kuyper\surnameend} \&
  \bibinfo{author}{Sebastiaan~A. \surnamestart Terwijn\surnameend}
  (\bibinfo{year}{2013}): \emph{\bibinfo{title}{Model Theory of Measure Spaces
  and Probability Logic}}.
\newblock {\slshape \bibinfo{journal}{Review of Symbolic Logic}}
  \bibinfo{volume}{6}, pp. \bibinfo{pages}{367--393},
  \doi{10.1017/S1755020313000063}.

\bibitemdeclare{article}{ma2025goldblatt}
\bibitem{ma2025goldblatt}
\bibinfo{author}{Minghui \surnamestart Ma\surnameend} (\bibinfo{year}{2025}):
  \emph{\bibinfo{title}{Goldblatt-Thomason Theorems for Inflationary
  Intuitionistic Logic}}.
\newblock {\slshape \bibinfo{journal}{Studia Logica}}, pp.
  \bibinfo{pages}{1--30}, \doi{10.1007/s11225-025-10214-9}.

\bibitemdeclare{book}{panan:lmp09}
\bibitem{panan:lmp09}
\bibinfo{author}{Prakash \surnamestart Panangaden\surnameend}
  (\bibinfo{year}{2009}): \emph{\bibinfo{title}{Labelled Markov Processes}}.
\newblock \bibinfo{publisher}{Imperial College Press}.
\newblock \urlprefix\url{https://books.google.com/books?id=q-ZpDQAAQBAJ}.

\bibitemdeclare{article}{RasiowaSikorski1950}
\bibitem{RasiowaSikorski1950}
\bibinfo{author}{Helena \surnamestart Rasiowa\surnameend} \&
  \bibinfo{author}{Roman \surnamestart Sikorski\surnameend}
  (\bibinfo{year}{1950}): \emph{\bibinfo{title}{A proof of the completeness
  theorem of G{\"o}del}}.
\newblock {\slshape \bibinfo{journal}{Fundamenta Mathematicae}}
  \bibinfo{volume}{37}, pp. \bibinfo{pages}{193--200},
  \doi{10.4064/fm-37-1-193-200}.

\bibitemdeclare{phdthesis}{rodenburg2016intuitionistic}
\bibitem{rodenburg2016intuitionistic}
\bibinfo{author}{Piet~H. \surnamestart Rodenburg\surnameend}
  (\bibinfo{year}{1986}): \emph{\bibinfo{title}{Intuitionistic Correspondence
  Theory}}.
\newblock \bibinfo{type}{Phd thesis}, \bibinfo{school}{University of
  Amsterdam}.
\newblock \urlprefix\url{https://eprints.illc.uva.nl/id/eprint/1848/}.

\bibitemdeclare{inproceedings}{sano2020goldblatt}
\bibitem{sano2020goldblatt}
\bibinfo{author}{Katsuhiko \surnamestart Sano\surnameend},
  \bibinfo{author}{David \surnamestart Fern{\'a}ndez-Duque\surnameend},
  \bibinfo{author}{Alessandra \surnamestart Palmigiano\surnameend} \&
  \bibinfo{author}{Sophie \surnamestart Pinchinat\surnameend}
  (\bibinfo{year}{2020}): \emph{\bibinfo{title}{Goldblatt-Thomason-Style
  Characterization for Intuitionistic Inquisitive Logic}}.
\newblock In: {\slshape \bibinfo{booktitle}{Advances in Modal Logic}},
  \bibinfo{volume}{13}, \bibinfo{publisher}{College Publications}, pp.
  \bibinfo{pages}{541--560}.
\newblock
  \urlprefix\url{http://www.aiml.net/volumes/volume13/Sano-FernandezDuque-Palmigiano-Pinchinat.pdf}.

\bibitemdeclare{inproceedings}{sano:gtgraded10}
\bibitem{sano:gtgraded10}
\bibinfo{author}{Katsuhiko \surnamestart Sano\surnameend} \&
  \bibinfo{author}{Minghui \surnamestart Ma\surnameend} (\bibinfo{year}{2010}):
  \emph{\bibinfo{title}{Goldblatt-Thomason-Style Theorems for Graded Modal
  Language}}.
\newblock In \bibinfo{editor}{Lev \surnamestart Beklemishev\surnameend},
  \bibinfo{editor}{Valentin \surnamestart Goranko\surnameend} \&
  \bibinfo{editor}{Valentin \surnamestart Shehtman\surnameend}, editors:
  {\slshape \bibinfo{booktitle}{Advances in Modal Logic}}, \bibinfo{volume}{8},
  \bibinfo{publisher}{College Publications}, pp. \bibinfo{pages}{330--349}.
\newblock \urlprefix\url{http://www.aiml.net/volumes/volume8/Sano-Ma.pdf}.

\bibitemdeclare{article}{zhou:hars14}
\bibitem{zhou:hars14}
\bibinfo{author}{Chunlai \surnamestart Zhou\surnameend} (\bibinfo{year}{2014}):
  \emph{\bibinfo{title}{Probability Logic for Harsanyi Type Spaces}}.
\newblock {\slshape \bibinfo{journal}{Logical Methods in Computer Science}}
  \bibinfo{volume}{10}(\bibinfo{number}{2}), \doi{10.2168/LMCS-10(2:13)2014}.

\end{thebibliography}

\end{document}